\not \isundefined{\disputationsdatum} 
\not \isundefined{\disputationslokal}}   
  \or \boolean{detectedSTOC} \or \boolean{detectedFOCS}
  \or \boolean{detectedSIAM} \or \boolean{detectedIEEE}
  \or \boolean{detectedPoster}}
\or \boolean{detectedSIAM} \or 
\or \boolean{detectedNOW}  \or 
\or \boolean{detectedACM}  \or
\or \boolean{detectedLIPIcs}}
\or \boolean{detectedFOCS}   \or 
\or \boolean{detectedPoster} \or
\or \boolean{detectedLMCS}   \or
\or \boolean{detectedNOW}    \or
\or \boolean{detectedThesis} \or
\or \boolean{detectedACM}    \or 
\DeclareMathAlphabet{\mathsfsl}{OT1}{cmss}{m}{sl}
\newcommand{\formatfunctiontoset}[1]{\mathit{#1}}
\newcommand{\introduceterm}[1]{{\emph{#1}}}
\newcommand{\eqperiod}{\enspace .}
\newcommand{\eqcomma}{\enspace ,}
\newcommand{\wrt}{with respect to\xspace}
\newcommand{\eg}{for instance\xspace}     }
\newcommand{\etal}{et al.\@\xspace}
\newcommand{\Bigoh}[1]{\mathrm{O} \bigl( #1 \bigr)}
\newcommand{\Bigtheta}[1]{\Theta \bigl( #1 \bigr)}
\newcommand{\BIGOMEGA}[1]{\Omega \left( #1 \right)}
\newcommand{\Bigomega}[1]{\Omega \bigl( #1 \bigr)}
\newcommand{\bigomega}[1]{\Omega ( #1 )}
\newcommand{\complclassformat}[1]        {\textrm{\upshape{\textsf{#1}}}\xspace}
\newcommand{\cocomplclass}[1]        {\textrm{\upshape{\textsf{co#1}}}\xspace}
\newcommand{\NP}{\complclassformat{NP}}
\newcommand{\coNP}{\cocomplclass{NP}}
\newcommand{\refsec}[1]{Section~\ref{#1}}
\newcommand{\refth}[1]{Theorem~\ref{#1}}
\newcommand{\reflem}[1]{Lemma~\ref{#1}}
\newcommand{\refcor}[1]{Corollary~\ref{#1}}
\newcommand{\refdef}[1]{Definition~\ref{#1}}
\newcommand{\refobs}[1]{Observation~\ref{#1}}
\newcommand{\reffact}[1]{Fact~\ref{#1}}
\newcommand{\Refth}[1]{Theorem~\ref{#1}}
\newcommand{\Refdef}[1]{Definition~\ref{#1}}
\newcommand{\refeq}[1]{\eqref{#1}}}
\renewcommand{\refeq}[1]{\eqref{#1}}}
\newcommand{\floor}[1]{\lfloor #1 \rfloor}
  \newcommand{\Nplus}     {\mathbb{N}^{+}}
\newcommand{\maxofexpr}[2][]{\max_{#1} \{ #2 \}}
\newcommand{\fieldstd}{\mathbb{F}}
\newcommand{\F}{\mathbb{F}}
\newcommand{\twincommandJN}[6]    {#1#2#3\vphantom{#2#5}\mspace{-2.25mu}#4.#5#6}
\newcommand{\funcdescr}[3]{\ensuremath{ #1 : #2 \to #3}}
\newcommand{\domainof}[1]{\ensuremath{\mathrm{Dom} ( #1 )}}
 \newcommand{\vertices}[1]{V( #1 )}
\newcommand{\descname}{\formatfunctiontoset{desc}}
\newcommand{\descnode}[2][]{\descname_{#1}(#2)}}
\renewcommand{\descnode}[2][]{\descname_{#1}(#2)}}
\newcommand{\setsizesmall}[1]{\ensuremath{\lvert#1\rvert}}
\newcommand{\set}[1]{\{ #1 \}}
\newcommand{\Set}[1]{\bigl\{ #1 \bigr\}}
\newcommand{\setdescr}[3][\mid]{\set{ #2 #1 #3 }}
\newcommand{\Setdescr}[3][|]     {\ifthenelse{\equal{#1}{;}}     {\Set{ #2 \,;\, #3 }}
     {\ifthenelse{\equal{#1}{:}}     {\Set{ #2 \,:\, #3 }}
     {\twincommandJN{\bigl\{}{#2}{\bigl#1}{\bigr}{\,#3}{\bigr\}}}}}
\newcommand{\Setsize}[1]{\bigl\lvert#1\bigr\rvert}
\newcommand{\setsize}[1]{\lvert#1\rvert}
\newcommand{\intersection}{\cap}
\newcommand{\union}{\cup}
\newcommand{\Union}{\bigcup}
\newcommand{\disjointunion}{\overset{.}{\cup}}
\newcommand{\Lor}{\bigvee}
\newcommand{\Land}{\bigwedge}
\newcommand{\olnot}[1]{\overline{#1}}
\newcommand{\clwidth}{k}
\newcommand{\inductionformat}[1]{\textit{#1}}
\newcommand{\BASE}[1][]
        {\inductionformat
                {                        \ifthenelse{\equal{#1}{}}                                {Base case: }                                {Base case (#1):}                }        }
\not \boolean{detectedSTOC}     \and \not \boolean{detectedFOCS}
\not \boolean{detectedElsevier} \and \not \boolean{detectedPoster} 
\not \boolean{detectedSIAM}     \and \not \boolean{detectedACM}
\not \boolean{detectedIEEE}     \and \not \boolean{detectedNOW}
\not \boolean{detectedToC}      \and \not \boolean{detectedThesis}
\not \boolean{detectedLNCS}     \and \not \boolean{detectedLIPIcs}}
\newtheorem{theorem}{Theorem}
\newtheorem{lemma}[theorem]{Lemma}
\newtheorem{proposition}[theorem]{Proposition}
\newtheorem{corollary}[theorem]{Corollary}
\newtheorem{observation}[theorem]{Observation}
\newtheorem{definition}[theorem]{Definition}
\newtheorem{claim}[theorem]{Claim}
\newtheorem{conjecture}{Conjecture}
\newtheorem{openproblem}[conjecture]{Open Problem}
\newcounter{unnumber}
\newtheorem{observation}[theorem]{Observation}
\newtheorem{conjecture}{Conjecture}
\newtheorem{openquestion}{Open Question}
\newtheorem{remarkinner}[theorem]{Remark}
\newtheorem{exampleinner}[theorem]{Example}
\newcommand{\exampleendmarker}{\qquad$\Diamond$}
\newcommand{\remarkendmarker}{\qquad$\Diamond$}
\newenvironment{example}                        
    {\begin{exampleinner} \rm}
    {\exampleendmarker\end{exampleinner}}
\newenvironment{remark}                        
    {\begin{remarkinner} \rm}
    {\remarkendmarker\end{remarkinner}}
\newcounter{unnumber}
\newtheorem{standardlocalcounter}{Dummy}[chapter]
\newtheorem{standardglobalcounter}{Dummy}
\newtheorem{theorem}[standardlocalcounter]{Theorem}
\newtheorem{lemma}[standardlocalcounter]{Lemma}
\newtheorem{proposition}[standardlocalcounter]{Proposition}
\newtheorem{corollary}[standardlocalcounter]{Corollary}
\newtheorem{observation}[standardlocalcounter]{Observation}
\newtheorem{fact}[standardlocalcounter]{Fact}
\newtheorem{conjecturelocalcounter}[standardlocalcounter]{Conjecture}
\newtheorem{conjectureglobalcounter}[standardglobalcounter]{Conjecture}
\newtheorem{conjecture}[standardglobalcounter]{Conjecture}
\newtheorem{openquestion}[standardglobalcounter]{Open Question}
\newtheorem{openproblem}[standardglobalcounter]{Open Problem}
\newtheorem{problem}{Problem}
\newtheorem{property}[standardlocalcounter]{Property}
\newtheorem{definition}[standardlocalcounter]{Definition}
\newtheorem{claim}[standardlocalcounter]{Claim}
\newtheorem{algorithm}[standardlocalcounter]{Algorithm}
\newtheorem{remark}[standardlocalcounter]{Remark}
\newtheorem{example}[standardlocalcounter]{Example}
\renewenvironment{proof}[1][Proof]{\par\trivlist
   \item[\hskip \labelsep{\itshape {#1}.}]\prooffont}
   {\hspace*{0pt plus1fill}\fboxsep2.5pt\fboxrule.5pt\raise3pt\hbox{\fbox{}}\endtrivlist}
\theoremstyle{plain}    
\newtheorem{theorem}[thm]{Theorem}
\newtheorem{lemma}[thm]{Lemma}
\newtheorem{proposition}[thm]{Proposition}
\newtheorem{corollary}[thm]{Corollary}
\newtheorem{observation}[thm]{Observation}
\newtheorem{conjecture}[thm]{Conjecture}
\newtheorem{problem}[thm]{Problem}
\newtheorem{openquestion}{Open Question}
\newtheorem{openproblem}{Open Problem}
\theoremstyle{definition}
\newtheorem{property}[thm]{Property}
\newtheorem{definition}[thm]{Definition}
\newtheorem{claim}[thm]{Claim}
\newtheorem{remark}[thm]{Remark}
\newtheorem{example}[thm]{Example}
\newtheorem{standardlocalcounter}{Dummy}[section]
\newtheorem{standardglobalcounter}{Dummy}
\theoremstyle{plain}    
\newtheorem{theorem}[standardglobalcounter]{Theorem}
\newtheorem{lemma}[standardglobalcounter]{Lemma}
\newtheorem{proposition}[standardglobalcounter]{Proposition}
\newtheorem{corollary}[standardglobalcounter]{Corollary}
\newtheorem{observation}[standardglobalcounter]{Observation}
\newtheorem{fact}[standardglobalcounter]{Fact}
\newtheorem{conjecture}[standardglobalcounter]{Conjecture}
\newtheorem{openquestion}{Open Question}
\newtheorem{openproblem}{Open Problem}
\newtheorem{problem}{Problem}
\theoremstyle{definition}
\newtheorem{property}[standardglobalcounter]{Property}
\newtheorem{definition}[standardglobalcounter]{Definition}
\newtheorem{claim}[standardglobalcounter]{Claim}
\theoremstyle{remark}
\newtheorem{remark}[standardglobalcounter]{Remark}
\newtheorem{example}[standardglobalcounter]{Example}
\newtheoremstyle{meta}  {3pt}  {3pt}  {\scshape \small }  {}  {\scshape \small }  {:}  { }          {}
\theoremstyle{meta}
\newtheorem{meta}{Meta comment}
\newtheoremstyle{questions}  {3pt}  {3pt}  {\sffamily \slshape}  {}  {\bfseries \sffamily \slshape}  {:}  { }          {}
\theoremstyle{questions}
\newtheorem{questions}{Open questions}
\spnewtheorem*{proofsketch}{Proof sketch}{\itshape}{\rmfamily}
\spnewtheorem{observation}{Observation}{\bfseries}{\itshape}
\spnewtheorem{fact}{Fact}{\bfseries}{\itshape}
\newtheorem{conjecture}[theorem]{Conjecture}
\newtheorem{observation}[theorem]{Observation}
\newtheorem{claim}[theorem]{Claim}
\newtheorem{openquestion}{Open Question}
\newcounter{unnumber}
\theoremstyle{plain}
\newtheorem{observation}[theorem]{Observation}
\newtheorem{openproblem}[theorem]{Open Problem}
\theoremstyle{definition}
\newtheorem{property}[theorem]{Property}
\renewcommand{\refth}[1]{\expref{Theorem}{#1}}
\renewcommand{\reflem}[1]{\expref{Lemma}{#1}}
\renewcommand{\refcor}[1]{\expref{Corollary}{#1}}
\renewcommand{\refdef}[1]{\expref{Definition}{#1}}
\renewcommand{\refobs}[1]{\expref{Observation}{#1}}
\renewcommand{\Refth}[1]{\expref{Theorem}{#1}}
\renewcommand{\Refdef}[1]{\expref{Definition}{#1}}
\renewcommand{\refsec}[1]{\expref{Section}{#1}}
\theoremstyle{plain}    
\newtheorem{fact}[theorem]{Fact}
\newtheorem{proposition}[theorem]{Proposition}
\newtheorem{observation}[theorem]{Observation}
                      \or \boolean{detectedElsevier}}
\newtheorem{standardlocalcounter}{Dummy}[section]
\newtheorem{standardglobalcounter}{Dummy}
\theoremstyle{plain}    
\newtheorem{theorem}[standardlocalcounter]{Theorem}
\newtheorem{lemma}[standardlocalcounter]{Lemma}
\newtheorem{proposition}[standardlocalcounter]{Proposition}
\newtheorem{corollary}[standardlocalcounter]{Corollary}
\newtheorem{observation}[standardlocalcounter]{Observation}
\newtheorem{fact}[standardlocalcounter]{Fact}
\newtheorem{conjecturelocalcounter}[standardlocalcounter]{Conjecture}
\newtheorem{conjectureglobalcounter}[standardglobalcounter]{Conjecture}
\newtheorem{conjecture}[standardglobalcounter]{Conjecture}
\newtheorem{openquestion}[standardglobalcounter]{Open Question}
\newtheorem{openproblem}[standardglobalcounter]{Open Problem}
\newtheorem{problem}[standardglobalcounter]{Problem}
\theoremstyle{definition}
\newtheorem{property}[standardlocalcounter]{Property}
\newtheorem{definition}[standardlocalcounter]{Definition}
\newtheorem{claim}[standardlocalcounter]{Claim}
\theoremstyle{remark}
\newtheorem{remark}[standardlocalcounter]{Remark}
\newtheorem{example}[standardlocalcounter]{Example}
\newtheoremstyle{meta}  {3pt}  {3pt}  {\scshape \small }  {}  {\scshape \small }  {:}  { }          {}
\theoremstyle{meta}
\newtheorem{meta}{Meta comment}
\newtheoremstyle{questions}  {3pt}  {3pt}  {\sffamily \slshape}  {}  {\bfseries \sffamily \slshape}  {:}  { }          {}
\theoremstyle{questions}
\newtheorem{questions}{Open questions}
\newtheorem{standardlocalcounter}{Dummy}[chapter]
\newtheorem{standardglobalcounter}{Dummy}
\theoremstyle{plain}    
\newtheorem{theorem}[standardlocalcounter]{Theorem}
\newtheorem{lemma}[standardlocalcounter]{Lemma}
\newtheorem{proposition}[standardlocalcounter]{Proposition}
\newtheorem{corollary}[standardlocalcounter]{Corollary}
\newtheorem{observation}[standardlocalcounter]{Observation}
\theoremstyle{definition}
\newtheorem{definition}[standardlocalcounter]{Definition}
\theoremstyle{remark}
\newtheorem{example}[standardlocalcounter]{Example}
\newtheoremstyle{meta}  {3pt}  {3pt}  {\scshape \small }  {}  {\scshape \small }  {:}  { }          {}
\theoremstyle{meta}
\newtheoremstyle{questions}  {3pt}  {3pt}  {\sffamily \slshape}  {}  {\bfseries \sffamily \slshape}  {:}  { }          {}
\theoremstyle{questions}
\or \boolean{detectedThesis} \or 
\or \boolean{detectedToC}    \or 
\def\SetTime{\hours=\time
\global\divide\hours by 60
\minutes=\hours
\multiply\minutes by 60
\advance\minutes by-\time
\global\multiply\minutes by-1 }
\def\now{\number\hours:\ifnum\minutes<10 0\fi\number\minutes}
\newcommand{\formuladots}{\cdots}
\newcommand{\boundary}[1]{\ensuremath{\partial #1}}
\newcommand{\proofstd}{\ensuremath{\pi}}
\DeclareFontFamily{OT1}{pzc}{}
\DeclareFontShape{OT1}{pzc}{m}{it}{<-> s * [1.200] pzcmi7t}{}
\DeclareMathAlphabet{\mathpzc}{OT1}{pzc}{m}{it}
\newcommand{\proofsystemsubindexformat}[1]{\scriptscriptstyle{\mathcal{#1}}}
\newcommand{\pcnot}{\proofsystemsubindexformat{PC}}
\newcommand{\pcrnot}{\proofsystemsubindexformat{PCR}}
\newcommand{\derivof}[4][\derives]
        {{\ensuremath{{#2} : {#3} \, {#1}\, {#4}}}}
\newcommand{\refof}[2]{\derivof{#1}{#2}{\falsenum}}
\renewcommand{\refof}[2]{\derivof{#1}{#2}{\emptycl}}
\renewcommand{\refof}[2]{\derivof{#1}{#2}{\bot}}
\newcommand{\fstd}{{\ensuremath{F}}}
\newcommand{\formf}{\ensuremath{F}}
\newcommand{\varx}{\ensuremath{x}}
\newcommand{\lita}{\ensuremath{a}}
\newcommand{\clc}{\ensuremath{C}}
\newcommand{\termt}{\ensuremath{T}}
\newcommand{\setsofvarsorlitsmall}[2]
        {\mathit{#1}({#2})}
\newcommand{\vars}[1]{\setsofvarsorlitsmall{Vars}{#1}}
\newcommand{\restr}{\rho}
\newcommand{\rstd}{\restr}
\newcommand{\restrict}[2]{#1\!\!\upharpoonright_{#2}}
\newcommand{\derivabbrevsmall}[2]{( #1 \vdash #2 )}
\newcommand{\refutabbrevsmall}[1]{\derivabbrevsmall{#1}{\falsenum}}
\renewcommand{\refutabbrevsmall}[1]{\derivabbrevsmall{#1}{\!\emptycl}}
\renewcommand{\refutabbrevsmall}[1]{\derivabbrevsmall{#1}{\!\bot}}
\newcommand{\genericformsmall}[2]{\mathit{#1}( #2 )}
\newcommand{\genericrefsmall}[3]    {{\mathit{#1}}_{#2}\refutabbrevsmall{#3}}
\newcommand{\sizeofarg}[1]{\genericformsmall{S}{#1}}
\newcommand{\sizeref}[2][]{\genericrefsmall{S}{#1}{#2}}
\newcommand{\lengthofarg}[1]{\genericformsmall{L}{#1}}
\newcommand{\lengthref}[2][]{\genericrefsmall{L}{#1}{#2}}
\newcommand{\widthofsmall}[2][]{\genericformsmall{W_{#1}}{#2}}
\newcommand{\widthofarg}[2][]{\genericformsmall{W_{#1}}{#2}}
\newcommand{\mdegreeof}[2][]{\genericformsmall{Deg_{#1}}{#2}}
\newcommand{\mdegreeref}[2][]{\genericrefsmall{Deg}{#1}{#2}}
\newcommand{\formulaformat}[1]{\ensuremath{\mathit{#1}}}
\renewcommand{\formulaformat}[1]{\mathit{#1}}
\newcommand{\stoptime}{\tau}
\newcommand{\GraphPHPnot}[1][G]{\formulaformat{PHP}_{#1}}
\newcommand{\GraphFPHPnot}[1][G]{\formulaformat{FPHP}_{#1}}
\newcommand{\graphphpnot}[1][G]{\formulaformat{PHP}_{#1}}
\newcommand{\graphfphpnot}[1][G]{\formulaformat{FPHP}_{#1}}
\newcommand{\GraphOntoPHPnot}[1][G]    {\text{$\formulaformat{Onto}$-$\formulaformat{PHP}$}_{#1}}
\newcommand{\GraphOntoFPHPnot}[1][G]
    {\text{$\formulaformat{Onto}$-$\formulaformat{FPHP}$}_{#1}}
\newcommand{\fphpnot}[2]{\formulaformat{FPHP}^{#1}_{#2}}
\newcommand{\paramn}{n}
\newcommand{\assignmenta}{\rho}
\newcommand{\disunion}{\mathbin{\dot{\cup}}}
\newcommand{\graphg}{G}
\newcommand{\vertexsetu}{U}
\newcommand{\vertexsetv}{V}
\newcommand{\vertexu}{u}
\newcommand{\vertexv}{v}
\newcommand{\vertexw}{w}
\newcommand{\edgesete}{E}
\newcommand{\edgee}{e}
\newcommand{\neighbor}[1]{N(#1)}
\newcommand{\Neighbor}[1]{N\bigl(#1\bigr)}
\newcommand{\NEIGHBOR}[1]{N\left(#1\right)}
\newcommand{\expsize}{s}
\newcommand{\expszfct}{\epsilon}
\renewcommand{\expszfct}{\gamma}
\newcommand{\expfactor}{\delta}
\newcommand{\expslack}{\xi}
\newcommand{\originalcnf}{\mathcal{F}}
\newcommand{\fpartition}{\mathcal{F}}
\renewcommand{\fpartition}{\mathcal{U}}
\newcommand{\vpartition}{\mathcal{V}}
\newcommand{\fixedconstraints}{E}
\newcommand{\fvstructurenot}{(\fpartition, \vpartition)_{\fixedconstraints}}
\newcommand{\fvstructure}    {$\fvstructurenot$\nobreakdash-graph\xspace}
\newcommand{\afvstructure}{an \fvstructure}
\newcommand{\Afvstructure}{An \fvstructure}
\renewcommand{\afvstructure}{a \fvstructure}
\renewcommand{\Afvstructure}{A \fvstructure}
\newcommand{\fvstructurename}{\mathcal{G}}
\newcommand{\variablesv}{V}
\newcommand{\sboundary}[1]{\partial_{\fixedconstraints}(#1)}
\newcommand{\Sboundary}[1]{\partial_{\fixedconstraints}\bigl(#1\bigr)}
\newcommand{\SBOUNDARY}[1]{\partial_{\fixedconstraints}\left(#1\right)}
\newcommand{\rboundaryexpander}[4]    {$({#1}, {#2}, {#3}, {#4})$-\rsatrespectful boundary expander\xspace}
\newcommand{\rboundaryexpanderstd}    {\rboundaryexpander{\expsize}{\expfactor}{\expslack}{\fixedconstraints}}
\newcommand{\rboundaryexpandershort}[4]    {$({#1}, {#2}, {#3}, {#4})$-expander\xspace}
\newcommand{\rboundaryexpandershortstd}    {\rboundaryexpandershort      {\expsize}{\expfactor}{\expslack}{\fixedconstraints}}
\newcommand{\spanof}[2]{\mathcal{I}_{#1}(#2)}
\newcommand{\spanofc}[1][\fpartition']{\spanof{\fixedconstraints}{#1}}
\newcommand{\rsatrespects}{respects\xspace}
\newcommand{\rsatrespect}{respect\xspace}
\newcommand{\rsatrespectful}{respectful\xspace}
\newcommand{\Rsatrespectful}{Respectful\xspace}
\newcommand{\rsatrespectfulness}{respectfulness\xspace}
\newcommand{\rsatdisrespectful}{disrespectful\xspace}
\newcommand{\radjective}[1][$\fixedconstraints$]           {{#1}\nobreakdash-respectful\xspace}
\newcommand{\rnegadjective}[1][$\fixedconstraints$]           {{#1}\nobreakdash-disrespectful\xspace}
\newcommand{\radverb}[1][$\fixedconstraints$]           {{#1}\nobreakdash-respectfully\xspace}
\newcommand{\rsatisfiable}[1][$\fixedconstraints$]    {\radverb[#1] satisfiable\xspace}
\newcommand{\rsatisfies}[1][$\fixedconstraints$]    {\radverb[#1] satisfies\xspace}
\newcommand{\rsatisfy}[1][$\fixedconstraints$]    {\radverb[#1] satisfy\xspace}
\newcommand{\rneighbour}[1][$\fixedconstraints$]    {\radjective[#1] neighbour\xspace}
\newcommand{\anrneighbour}[1][$\fixedconstraints$]    {an \radjective[#1] neighbour\xspace}
\newcommand{\rneighbours}[1][$\fixedconstraints$]    {\radjective[#1] neighbours\xspace}
\newcommand{\rnegneighbours}[1][$\fixedconstraints$]    {\rnegadjective[#1] neighbours\xspace}
\newcommand{\rboundary}[1][$\fixedconstraints$]    {\radjective[#1] boundary\xspace}
\newcommand{\voverlapterm}{overlap\xspace}
\newcommand{\voverlapnot}[1]{\mathit{ol}({#1})}
\newcommand{\xoverlapnot}[2]{\mathit{ol}({#1}, {#2})}
\renewcommand{\boundary}[1]{\partial(#1)}
\newcommand{\redophead}{R}
\newcommand{\redop}[2][]{\redophead_{#1}(#2)}
\newcommand{\fvredophead}[1][\fvstructurename]{R_{#1}}
\newcommand{\fvredop}[2][\fvstructurename]{\fvredophead[#1](#2)}
\newcommand{\mleq}{\preccurlyeq}
\newcommand{\mlt}{\prec}
\newcommand{\polyax}{C}
\newcommand{\monomialm}{m}
\renewcommand{\termt}{t}
\newcommand{\ideali}{I}
\newcommand{\idealgen}[1]{\langle {#1} \rangle}
\newcommand{\leadingterm}[1]{\mathit{LT}({#1})}
\newcommand{\polyp}{P}
\newcommand{\polyq}{Q}
\newcommand{\polyr}{R}
\newcommand{\idpolyq}{Q}
\newcommand{\redpolyr}{R}
\newcommand{\litpos}{L^{+}}
\newcommand{\litneg}{L^{-}}
\newcommand{\setdegd}{d}
\newcommand{\degd}{D}
\newcommand{\graphdeg}{d}
\newcommand{\svcontainedadj}{contained\xspace}
\newcommand{\svcontained}[2]    {$({#1}, {#2})$\nobreakdash-\svcontainedadj}
\newcommand{\svcontainedstd}{\svcontained{\expsize}{\vpartition'}}
\newcommand{\shalfvcontainedstd}{\svcontained{\expsize/2}{\vpartition'}}
\newcommand{\sntcontained}{\svcontained{\expsize}{\neighbor{\termt}}}
\newcommand{\svcontainedneighbourt}{\sntcontained}
\newcommand{\ssupport}{$\expsize$\nobreakdash-support\xspace}
\newcommand{\support}[2][\expsize]{\mathit{Sup}_{#1}(#2)}
\newcommand{\gop}[1][\graphg]{\formulaformat{GOP}(#1)}
\newcommand{\lop}[1][n]{\formulaformat{LOP}_{#1}}
\newcommand{\subcardnot}[1]{\formulaformat{SC}(#1)}
\newcommand{\TheauthorJN}{The second author\xspace}
\newtheoremstyle{metacommenttheoremstyle}    {3pt}    {3pt}    {\sffamily \itshape \scriptsize
          }    {}    {\bfseries \scshape \footnotesize }    {:}    { }        {}
\theoremstyle{metacommenttheoremstyle}
\newtheorem{jncommentcontainer}{Jakob's comment}
\newtheorem{mlcommentcontainer}{Massimo's comment}
\newtheorem{mmcommentcontainer}{Mladen's comment}
\newtheorem{mvcommentcontainer}{Marc's comment}
  \newcommand{\jncomment}[1]  {\begin{jncommentcontainer} \textcolor{blue}{#1} \end{jncommentcontainer}}
  \newcommand{\mlcomment}[1]  {\begin{mlcommentcontainer} \textcolor{OliveGreen}{#1} \end{mlcommentcontainer}}
  \newcommand{\mmcomment}[1]  {\begin{mmcommentcontainer} \textcolor{magenta}{#1} \end{mmcommentcontainer}}
  \newcommand{\mvcomment}[1]  {\begin{mvcommentcontainer} \textcolor{orange}{#1} \end{mvcommentcontainer}}
\newcommand{\jncomment}[1]{}
  \newcommand{\mlcomment}[1]{}
  \newcommand{\mmcomment}[1]{}
  \newcommand{\mvcomment}[1]{}
\newcommand{\jncommentdelete}[1]{}
\newcommand{\mmcommentdelete}[1]{}
  \numberwithin{equation}{section}
\begin{document}

\title{A Generalized Method for Proving  \\  
  Polynomial Calculus Degree Lower Bounds  \thanks{This is the full-length version of the paper with the same
    title to appear in 
    \emph{Proceedings of the 30th Annual Computational Complexity
      Conference ({CCC}~'15)}.}}

\author{  Mladen Mik\v{s}a \\
  KTH Royal Institute of Technology
  \and
  Jakob Nordström \\
  KTH Royal Institute of Technology}

\date{\today}

\maketitle

\ifthenelse{\boolean{conferenceversion}}
{}
{

    \thispagestyle{empty}

    \pagestyle{fancy}
    \fancyhead{}
  \fancyfoot{}
    \renewcommand{\headrulewidth}{0pt}
  \renewcommand{\footrulewidth}{0pt}
  
                    \fancyhead[CE]{\slshape 
    A GENERALIZED METHOD FOR PROVING PC DEGREE LOWER BOUNDS
}

  \fancyhead[CO]{\slshape \nouppercase{\leftmark}}
  \fancyfoot[C]{\thepage}
  
            \setlength{\headheight}{13.6pt}
}

\thispagestyle{empty}

\makeatletter{}\begin{abstract}
  We study the problem of obtaining lower bounds for polynomial
  calculus (PC) and polynomial calculus resolution (PCR) on proof
  degree, and hence by [Impagliazzo \etal~'99] also on proof size.
  [Alekhnovich and Razborov~'03] established that if the clause-variable
  incidence graph of a CNF formula~$F$ is a good enough expander, then
  proving that~$F$ is unsatisfiable requires high PC/PCR degree. We
  further develop the techniques in [AR03] to show that if one can
  ``cluster'' clauses and variables in a way that ``respects the
  structure'' of the formula in a certain sense, then it is sufficient
  that the incidence graph of this clustered version is an
  expander. As a corollary of this, we prove that the functional
  pigeonhole principle (FPHP) formulas require high PC/PCR degree when
  restricted to constant-degree expander graphs. This answers an open
  question in [Razborov~'02], and also implies that the standard CNF
  encoding of the FPHP formulas require exponential proof size in
  polynomial calculus resolution. Thus, while Onto-FPHP formulas are
  easy for polynomial calculus, as shown in [Riis~'93], both FPHP and
  Onto-PHP formulas are hard even when restricted to bounded-degree
  expanders. 
\end{abstract}

\makeatletter{}\section{Introduction}
\label{sec:intro}

In one sentence, proof complexity studies how hard it is to certify
the unsatifiability of formulas in conjunctive normal form (CNF).  In
its most general form, this is the question of whether \coNP can be
separated from \NP or not, and as such it still appears almost
completely out of reach. However, if one instead focuses on concrete
proof systems, which can be thought of as restricted models of
(nondeterministic) computation, then fruitful study is possible.

\subsection{Resolution and Polynomial Calculus}

Perhaps the most well-studied proof system in proof complexity is
\introduceterm{resolution}~\cite{Blake37Thesis}, in which one derives
new disjunctive clauses from 
a CNF formula 
until an explicit
contradiction is reached, and for which numerous exponential lower
bounds on proof size have been shown 
\ifthenelse{\boolean{conferenceversion}}
{(starting with
  \cite{CS88ManyHard,Haken85Intractability,Urquhart87HardExamples}).}
{(starting with
  \cite{Haken85Intractability,Urquhart87HardExamples,CS88ManyHard}).}
Many
of these lower bounds can be established by instead studying the
\introduceterm{width} of proofs, \ie the size of a largest clause
appearing in the proofs, and arguing that any resolution proof for a certain
formula must contain a large clause. It then follows from a result by
Ben-Sasson and Wigderson~\cite{BW01ShortProofs} that any resolution
proof must also consist of very many clauses. Research
since~\cite{BW01ShortProofs} has led to a well-developed machinery for
showing width lower bounds, and hence also size lower bounds.

The focus of the current paper is the slightly more general proof
system \introduceterm{polynomial calculus resolution (PCR)}. 
This proof system was introduced by Clegg \etal~\cite{CEI96Groebner}
in a slightly weaker form that is usually referred to as
\introduceterm{polynomial calculus (PC)} and was later extended by 
Alekhnovich \etal~\cite{ABRW02SpaceComplexity}.
In PC and PCR clauses are
translated to multilinear polynomials over some (fixed) field~$\F$,
and a CNF formula~$\fstd$ is shown to be unsatisfiable by proving that
the constant $1$~lies in the ideal generated by the polynomials
corresponding to the clauses of~$\fstd$. Here the size of a proof is
measured as the number of monomials in a proof when all polynomials
are expanded out as linear combinations of monomials, and the width of
a clause corresponds to the (total) \introduceterm{degree} of the
polynomial representing the clause.
Briefly, the difference between PC and PCR is that the latter proof
system has separate formal variables for positive and negative
literals over the same variable. Thanks to this, one can encode
wide clauses into polynomials compactly regardless of the sign of the
literals in the clauses, which allows PCR to simulate resolution
efficiently. With respect to the degree measure 
polynomial calculus and polynomial calculus resolution
are exactly the same, and 
furthermore
the degree needed to prove 
in polynomial calculus
that a formula is unsatisfiable is
at most the width required in resolution.

In a work that served, interestingly enough, as a precursor
to~\cite{BW01ShortProofs}, Impagliazzo \etal~\cite{IPS99LowerBounds}
showed that strong lower bounds on the degree of PC proofs are
sufficient to establish strong size lower bounds. The same proof goes
through for PCR, 
and hence any lower bound on proof size obtained via a degree lower
bound applies to both PC and PCR.
In this paper, we will therefore
be somewhat sloppy in distinguishing the two proof systems, sometimes
writing ``polynomial calculus'' to refer to both systems when the
results apply to both PC and PCR.

In contrast to the situation for resolution
after~\cite{BW01ShortProofs}, the paper~\cite{IPS99LowerBounds}
has not been followed by a  corresponding
development of a generally applicable machinery for proving degree
lower bounds. 
For fields of characteristic distinct from~$2$ it is sometimes
possible to obtain lower bounds by doing an affine transformation from
$\set{0,1}$ to the ``Fourier basis'' $\set{-1,+1}$, an idea that seems
to have appeared first 
\ifthenelse{\boolean{conferenceversion}}
{in~\cite{BGIP01LinearGaps,Grigoriev98Tseitin}.}
{in~\cite{Grigoriev98Tseitin,BGIP01LinearGaps}.} 
For fields of arbitrary characteristic Alekhnovich and
Razborov~\cite{AR03LowerBounds} developed a powerful technique for
general systems of polynomial equations, which when restricted to the
standard encoding of CNF formulas~$F$ yields that
polynomial calculus
proofs require high
degree if the corresponding bipartite clause-variable incidence
graphs~$G(F)$ are good enough expanders.  
There are many formula families for which this is not true,
however. One can have a family of 
constraint satisfaction problems 
where the
constraint-variable incidence graph is an expander---say, for
instance, for an unsatisfiable set of linear equations $\bmod\, 2$---but
where each constraint is then translated into several clauses 
when encoded into CNF, 
meaning that the clause-variable incidence graph $G(F)$ will no longer
be expanding. For some formulas this limitation is inherent---it is
not hard to see that an inconsistent system of 
\ifthenelse{\boolean{conferenceversion}}
{linear equations $\bmod\, 2$ is easy to refute in polynomial calculus
  over~$\F_2$---but in other}
{linear equations $\bmod\, 2$ is easy to refute in polynomial calculus
  over~$\F_2$, and so good expansion for the constraint-variable
  incidence graph should \emph{not} in itself be sufficient to imply
  hardness in general---but in other}
cases it would seem that some kind of expansion of this sort should
still be enough, ``morally speaking,'' to
guarantee that the corresponding CNF formulas are hard.\footnote{In a bit more detail, what is shown in
  \cite{AR03LowerBounds} is that if the constraint-variable incidence
  graph for a set of polynomial equations is a good expander, and if
  these polynomials have high immunity---\ie do not imply other
  polynomials of significantly lower degree---then proving that this
  set of polynomial equations is inconsistent in polynomial calculus
  requires high degree. CNF formulas automatically have maximal
  immunity since a clause translated into a polynomial does not have
  any consequences of degree lower than the width of the clause in
  question, and hence expansion of the clause-variable incidence graph
  is sufficient to imply hardness for polynomial calculus. Any
  polynomial encoding of a linear equation $\bmod\, 2$ has a
  low-degree consequence over~$\F_2$, however---namely, the linear
  equation itself---and this is why \cite{AR03LowerBounds} (correctly)
  fails to prove lower bounds in this case.}

\subsection{Pigeonhole Principle Formulas}

One important direction in proof complexity, which is the reason
research in this area was initiated 
by Cook and Reckhow~\cite{CR79Relative}, 
has been to prove superpolynomial lower
bounds on proof size for increasingly stronger proof systems.  For
proof systems where such lower bounds have already been obtained,
\ifthenelse{\boolean{conferenceversion}}
{however, a somewhat orthogonal research direction has been to try to
  gain a better understanding of the strengths and weaknesses of the
  proof system by studying different combinatorial principles (encoded
  in CNF) and determining how hard they are to prove.}
{however, such as resolution and polynomial calculus,
  a somewhat orthogonal research direction has been to try to
  gain a better understanding of the strengths and weaknesses of a given
  proof system by studying different combinatorial principles (encoded
  in CNF) and determining how hard they are to prove
  for this proof  system.
}

It seems fair to say that by far the most extensively studied such
combinatorial principle is the \introduceterm{pigeonhole principle}.
This principle is encoded into CNF as unsatisfiable formulas claiming
that $m$~pigeons can be mapped in a one-to-one fashion into $n$~holes
for $m > n$, but there are several choices exactly how to do this
encoding. The most basic 
\introduceterm{pigeonhole principle (PHP) formulas}
have clauses saying that every pigeon gets at least one pigeonhole and that
no hole contains two pigeons. While these formulas are already
unsatisfiable for $m \geq n+1$, they do not a priori 
\ifthenelse{\boolean{conferenceversion}}
{rule out ``fat'' pigeons residing}
{rule out that there might be ``fat'' pigeons residing}
in several holes. The
\introduceterm{functional pigeonhole principle (FPHP) formulas} 
perhaps correspond more closely to our intuitive understanding of the
pigeonhole principle in that they also contain
\introduceterm{functionality} clauses specifying that every pigeon
gets exactly one pigeonhole 
and not more. Another way of making the basic PHP formulas 
more constrained
is to add \introduceterm{onto} clauses requiring that every
pigeonhole should get a pigeon,  yielding so-called
\introduceterm{onto-PHP formulas}. 
Finally, the most restrictive encoding, and hence the hardest one
when it comes to proving lower bounds, are the
\introduceterm{onto-FPHP formulas}
containing both functionality and onto clauses, \ie saying that the
mapping from pigeons to pigeonholes is a perfect matching.
Razborov's survey~\cite{Razborov02ProofComplexityPHP} gives a 
detailed account of these different flavours of the pigeonhole
principle formulas and results for them \wrt various proof
systems---we just quickly highlight some facts relevant to this paper
below.

For the resolution proof system there is not much need to distinguish
between the different PHP versions discussed above. 
The lower bound by
Haken~\cite{Haken85Intractability}
for formulas with $m=n+1$ pigeons can be made to work also for
onto-FPHP formulas, and more recent works by
Raz~\cite{Raz04Resolution} and
Razborov~\cite{Razborov03ResolutionLowerBoundsWFPHP,Razborov04ResolutionLowerBoundsPM}
show that the formulas remain exponentially hard (measured in the
number of pigeonholes~$n$) 
even
for arbitrarily many pigeons~$m$.

Interestingly enough, for polynomial calculus the story is very
different. The first degree lower bounds were proven by
Razborov~\cite{Razborov98LowerBound}, but for a different encoding
than the standard translation from CNF, since translating wide clauses
yields initial polynomials of high degree. 
Alekhnovich and Razborov~\cite{AR03LowerBounds}
proved lower bounds for a \mbox{$3$-CNF} version of the pigeonhole
principle, from which it follows that the standard CNF encoding
requires 
proofs of exponential size. 
However, as shown by Riis~\cite{Riis93Thesis} the onto-FPHP formulas
with $m=n+1$ pigeons are easy for polynomial calculus. 
And while the encoding in
\cite{Razborov98LowerBound}
also captures the functionality restriction in some sense, it has
remained open whether the standard CNF encoding of 
functional pigeonhole principle formulas
translated to polynomials is hard 
(this question has been highlighted, for instance, in Razborov's open
problems list~\cite{Razborov14webpage}).

\newcommand{\phpgraph}{H}

Another way of modifying the pigeonhole principle is to restrict the
choices of pigeonholes for each pigeon by defining the formulas over a
bipartite graph 
$\phpgraph = (U \disjointunion V, E)$ 
with 
$\setsize{U} = m$
and
$\setsize{V} = n$
and requiring that each pigeon $u \in U$ goes to one of its neighbouring
holes in 
$N(u) \subseteq V$. If the graph~$\phpgraph$ has constant left degree, the
corresponding \introduceterm{graph pigeonhole principle formula} has
constant width and a linear number of variables, which makes it
possible to apply \cite{BW01ShortProofs,IPS99LowerBounds} to obtain
exponential proof size lower bounds from linear width/degree lower
bounds. 
A careful reading of the proofs in~\cite{AR03LowerBounds} reveals that 
this paper establishes linear polynomial calculus degree lower bounds
(and hence exponential size lower bounds) for graph PHP formulas, and
in fact also graph Onto-PHP formulas, over constant-degree
expanders~$\phpgraph$.  Razborov lists as one of the open problems
in~\cite{Razborov02ProofComplexityPHP} whether this holds also for
graph FPHP formulas, \ie with functionality clauses added, from which
exponential lower bounds on polynomial calculus proof size for the
general FPHP formulas would immediately follow.

\subsection{Our Results}
\label{sec:our-results}

We revisit the technique developed in
\cite{AR03LowerBounds}
for proving polynomial calculus degree lower bounds, 
restricting our attention to the special case 
when the polynomials are obtained by
the canonical translation of CNF formulas.

Instead of considering the standard bipartite clause-variable
incidence graph~$G(F)$ 
of a CNF formula~$F$ 
(with clauses on the left, variables on
the right, and edges encoding that a variable occurs in a clause) we
construct a new graph~$G'$ by clustering several clauses and/or
variables into single 
vertices, reflecting the structure of 
\ifthenelse{\boolean{conferenceversion}}
{the encoded combinatorial principle.} 
{the combinatorial principle the CNF formula~$F$ is encoding.} 
The edges in this new graph~$G'$ are the ones
induced by the original graph~$G(F)$ in the natural way, \ie there is
an edge from a left cluster to a right cluster in~$G'$ if any clause in the
left cluster has an edge to any variable in the right cluster 
in~$G(F)$.
We remark that such a clustering is already implicit in, \eg, the
resolution lower bounds in~\cite{BW01ShortProofs} for Tseitin formulas
(which is essentially just a special form of unsatisfiable linear
equations) and graph PHP formulas, as well as in the graph PHP lower
bound for polynomial calculus in~\cite{AR03LowerBounds}.

\newcommand{\varsubset}{V}
\newcommand{\clsubset}{F'}
\newcommand{\varassnmt}{\rho}

We then show that 
if this clustering is done in the right way, 
the proofs in~\cite{AR03LowerBounds} still go through and yield strong
polynomial calculus degree lower bounds when~$G'$ is a good enough expander.\footnote{For a certain twist of the definition of expander that we do
  not describe in full detail here in order to keep the discussion
  at an informal, intuitive level. The formal description is given in
  \refsec{sec:method-cvig}.} 
It is clear that this cannot work in general---as already discussed above, 
any inconsistent system of linear equations $\bmod\, 2$ is easy to refute
in polynomial calculus over~$\F_2$, even though for a random instance
of this problem the clauses encoding each linear equation can be
clustered to yield an excellent expander~$G'$. Very informally (and
somewhat incorrectly) speaking, the clustering should be such that if
a cluster of clauses~$\clsubset$ on the left is a neighbour of a
variable cluster~$\varsubset$ on the right, then there should exist an 
assignment~$\varassnmt$ to~$\varsubset$ such that 
$\varassnmt$~satisfies all of~$\clsubset$ and such that 
for the clauses outside of $\clsubset$ they are either satisfied
by~$\varassnmt$ or left completely untouched by~$\varassnmt$. 
Also, it turns out to be helpful not to insist that the clustering of
variables on the right should be a partition, but that we should allow
the same variable to appear in several clusters if needed (as long as
the number of clusters for each variable is bounded).

This extension of the lower bound method in~\cite{AR03LowerBounds}
makes it possible to present previously obtained polynomial calculus
degree lower bounds in 
\cite{AR03LowerBounds,GL10Optimality,MN14LongProofs}
in a unified framework. Moreover, it allows us to prove the following
new results:
\begin{enumerate}
\item 
  If a bipartite graph  $H = (U \disunion V, E)$ 
  with
  $\setsize{U} = m$
  and
  $\setsize{V} = n$
  is a boundary expander (a.k.a.\ unique-neighbour expander),
  then the graph FPHP formula over~$\phpgraph$ requires
  proofs of linear polynomial calculus degree, and hence exponential
  polynomial calculus size.
  
\item 
  Since FPHP formulas can be turned into graph FPHP formulas by
  hitting them with a restriction, and since restrictions can only
  decrease proof size, it follows that FPHP formulas require  proofs
  of exponential size in polynomial calculus.
\end{enumerate}
This fills in the last missing pieces in our understanding of the
different flavours of pigeonhole principle formulas with 
\mbox{$n+1$ pigeons} and \mbox{$n$ holes} for polynomial
calculus. Namely, while Onto-FPHP formulas are easy for polynomial calculus,
both FPHP formulas and Onto-PHP formulas are hard even when restricted
to expander graphs.

\subsection{Organization of This Paper}

After reviewing the necessary preliminaries in
\refsec{sec:preliminaries}, we present our extension of the
Alekhnovich--Razborov method in \refsec{sec:method}.  
In \refsec{sec:applications}, we show how this method can be used to
rederive some previous 
polynomial calculus
degree lower bounds as well as to obtain new
degree and size lower bounds for functional (graph) PHP formulas. We
conclude in \refsec{sec:conclusion} by discussing some possible
directions for future research. \ifthenelse{\boolean{conferenceversion}}
{We refer to the upcoming full-length version for the details omitted
  in this extended abstract.}
{}

\makeatletter{}\section{Preliminaries}
\label{sec:preliminaries}

Let us start by giving 
an overview of the relevant proof complexity
background. This material is standard and we refer to,  for instance, the
survey~\cite{Nordstrom13SurveyLMCS} for more details.  

A
\introduceterm{literal} over a Boolean variable $\varx$ is either the
variable $\varx$ itself (a \introduceterm{positive literal}) or its
\mbox{negation $\lnot \varx$} or~$\olnot{\varx}$ (a 
\introduceterm{negative  literal}). We define $\olnot{\olnot{\varx}} = \varx$. 
We identify $0$ with true and $1$ with false.
We remark that this is the opposite of the standard convention in
proof complexity, but it is a more natural choice in the context of
polynomial calculus, where ``evaluating to true'' means ``vanishing.''
A
\introduceterm{clause}
$\clc = \lita_1 \lor \formuladots \lor \lita_{\clwidth}$ is a
disjunction of literals. A \introduceterm{CNF formula}
$\fstd = \clc_1 \land \formuladots \land \clc_m$ is a conjunction of
clauses. The \introduceterm{width}~$\widthofsmall{\clc}$ of a clause
$\clc$ is the number of literals~$\setsizesmall{\clc}$ 
in it, 
and the
width $\widthofsmall{F}$ of the formula $F$ is the maximum width of
any clause in the formula. We think of clauses and CNF formulas as
sets, so that order is irrelevant and there are no repetitions. A
$k$-CNF formula has all clauses of size at most~$k$, 
where $k$ is assumed to be some fixed constant.

In polynomial calculus resolution the goal is to prove the
unsatisfiability of a CNF formula by reasoning with polynomials from a
polynomial ring $\fieldstd[x, \olnot{x}, y, \olnot{y}, \ldots]$ (where
$x$ and $\olnot{x}$ are viewed as distinct formal variables) over some
fixed field~$\fieldstd$. The results in this paper hold for all
fields~$\fieldstd$ regardless of characteristic. 
In what follows, a \introduceterm{monomial}
$\monomialm$ is a product of variables and a \introduceterm{term}
$\termt$ is a monomial multiplied by an arbitrary non-zero field
element.
 
\begin{definition}[Polynomial calculus resolution (PCR) 
    \ifthenelse{\boolean{conferenceversion}}    {\cite{ABRW02SpaceComplexity,CEI96Groebner}}    {\cite{CEI96Groebner,ABRW02SpaceComplexity}}]
  A \introduceterm{polynomial calculus resolution (PCR) refutation}
  $\refof{\proofstd}{F}$ of a CNF formula~$F$ (also referred to as a
  \introduceterm{PCR proof} for~$F$) 
  over a field~$\fieldstd$
  is an ordered sequence of polynomials 
  $\proofstd = (\polyp_1, \ldots, \polyp_{\stoptime})$,
  expanded out as linear combinations of monomials, such that
  \mbox{$\polyp_{\stoptime} = 1$} and each line $\polyp_i$,
  $1 \leq i \leq \stoptime$, is either
  \begin{itemize}
  \item a monomial
    $\prod_{x \in \litpos} x \cdot \prod_{y \in \litneg} \olnot{y}$
    encoding a clause
    $\Lor_{x \in \litpos} x \lor \Lor_{y \in \litneg} \olnot{y}$
    in~$\fstd$
    (a \introduceterm{clause axiom});
  
  \item a \introduceterm{Boolean axiom} $x^2 - x$ or
    \introduceterm{complementarity axiom} $x + \olnot{x} - 1$ for any
    variable~$x$;

  \item a polynomial obtained from one or two
    \ifthenelse{\boolean{conferenceversion}} 
    {previous polynomials by}
    {previous polynomials in the sequence by} \introduceterm{linear
      combination}
    $\frac{\polyq \quad \polyr}{\alpha \polyq + \beta \polyr}$ or
    \introduceterm{multiplication} $\frac{\polyq}{x \polyq}$ for any
    $\alpha, \beta \in \fieldstd$ and any variable~$x$.
  \end{itemize}
  If we drop complementarity axioms and encode each negative literal
  $\olnot{x}$ as~  $(1 - x)$, the proof system is called
  \introduceterm{polynomial calculus (PC)}.

  The \introduceterm{size} $\sizeofarg{\proofstd}$ of a PC/PCR
  refutation
  $\proofstd = (\polyp_1, \ldots, \polyp_{\stoptime})$
  is the number of monomials  in~$\proofstd$
  (counted with repetitions),  \footnote{We remark that the natural definition of size is to count
    monomials with repetition, but all lower bound techniques known
    actually establish slightly stronger lower bounds on the number of
    \emph{distinct} monomials.}
  the \introduceterm{degree}
  $\mdegreeof{\proofstd}$ is the maximal degree of any monomial
  appearing in~$\proofstd$, and the \introduceterm{length}
  $\lengthofarg{\proofstd}$ is the number $\stoptime$ of polynomials in
  $\proofstd$. Taking the minimum over all PCR refutations of a
  formula~$F$, we define the size $\sizeref[\pcrnot]{F}$, degree
  $\mdegreeref[\pcrnot]{F}$, and length $\lengthref[\pcrnot]{F}$ of
  refuting~$F$ in PCR (and analogously for PC).
\end{definition}

We write $\vars{\clc}$ and
$\vars{\monomialm}$ to denote the set of all variables appearing in a
clause~$\clc$ or monomial (or term)~$\monomialm$, respectively and extend
this notation to CNF formulas and polynomials by taking unions.
We use the notation $\idealgen{\polyp_1, \ldots, \polyp_m}$ for
the ideal generated by the 
\mbox{polynomials  $\polyp_i$, $i \in [m]$}.
That is,
$\idealgen{\polyp_1, \ldots, \polyp_m}$ is the minimal subset of
polynomials containing 
all~$\polyp_i$
that is closed under addition and
multiplication by any polynomial. One way of viewing 
a
polynomial calculus (PC or PCR) refutation is 
as a calculation
in the
ideal generated by the encodings of clauses in $\formf$
and the Boolean and complementarity axioms.
It can be shown that such an ideal contains~$1$ 
if and only if $\formf$ is 
unsatisfiable. 

As mentioned above, we have
$\mdegreeref[\pcrnot]{F} = \mdegreeref[\pcnot]{F}$ for any
CNF formula $F$. 
This claim can essentially be verified by taking any PCR refutation
of~$\formf$ 
and replacing all occurrences of~$\olnot{y}$
by~$(1-y)$ to obtain a valid PC refutation in the same degree. 
Hence, we can drop the subscript from the notation for the degree
measure. We have the following relation between refutation size and
refutation degree (which was originally proven for~PC but the proof of
which also works for~PCR).

\begin{theorem}[\cite{IPS99LowerBounds}]
  \label{th:IPSPCTheorem}
  Let $\formf$ be an unsatisfiable CNF formula of width
  $\widthofarg{\formf}$ over $n$~variables. Then
  \begin{equation*}
    \sizeref[\pcrnot]{\formf} = 
    \exp\left(\BIGOMEGA{
        \frac{\left(\mdegreeref{\formf} 
            - \widthofarg{\formf}\right)^2}
        {n}
      }\right)
    \eqperiod
  \end{equation*}
\end{theorem}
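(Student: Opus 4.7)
The plan is to establish the contrapositive: any PCR refutation of $\formf$ of size $S$ can be transformed into a PCR refutation of $\formf$ of degree at most $\widthofarg{\formf} + \BIGOH{\sqrt{n \log S}}$. Rearranging the inequality $\mdegreeref{\formf} \leq \widthofarg{\formf} + \BIGOH{\sqrt{n \log S}}$ immediately gives the stated exponential lower bound on~$S$.

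Let $\proofstd$ be a PCR refutation of $\formf$ of size $S = \sizeofarg{\proofstd}$, and fix a degree threshold $d^* > \widthofarg{\formf}$ to be optimized later. Call a monomial appearing in $\proofstd$ \emph{large} if its degree exceeds $d^*$. Since $\proofstd$ contains at most $S$ distinct monomials, there are at most $S$ large monomials. The heart of the construction is a greedy restriction $\restr$ that eliminates every large monomial. At each iteration, the total literal-occurrence count among the surviving large monomials strictly exceeds $d^* \cdot (\text{number of surviving large monomials})$, so averaging over the $2n$ literals produces a literal $\lita$ appearing in at least a $d^*/(2n)$ fraction of them. Adding $\lita = 0$ to $\restr$ kills every surviving large monomial containing $\lita$. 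Iterating, the count of surviving large monomials shrinks by a factor of at most $1 - d^*/(2n)$ per step, so after $k = \ceiling{2n \ln S / d^*}$ steps the count drops below~$1$. Hence $\setsize{\restr} = \BIGOH{n \log S / d^*}$.

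Applying $\restr$ term-by-term to $\proofstd$, the linear-combination and multiplication rules commute with restriction, and each clause axiom $\clc \in \formf$ restricts either to $0$ (when $\restr$ satisfies $\clc$) or to the clause axiom $\restrict{\clc}{\restr}$ of $\restrict{\formf}{\restr}$. Thus $\restrict{\proofstd}{\restr}$ is a valid PCR refutation of $\restrict{\formf}{\restr}$, and every monomial in it has degree at most $d^*$, so $\mdegreeref{\restrict{\formf}{\restr}} \leq d^*$. The remaining ingredient is the restriction lemma $\mdegreeref{\formf} \leq \mdegreeref{\restrict{\formf}{\restr}} + \setsize{\restr}$, which I would prove by reintroducing the restricted variables one at a time: for each $(x,b) \in \restr$, the assumption $x = b$ can be eliminated at the cost of a single additive degree, using the complementarity axiom $x + \olnot{x} - 1$ together with the Boolean axiom $x^2 - x$ to combine the two value-cases into a single refutation. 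Iterating this $\setsize{\restr}$ times gives a PCR refutation of $\formf$ of degree at most $d^* + \setsize{\restr}$.

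Combining, $\mdegreeref{\formf} \leq d^* + \BIGOH{n \log S / d^*}$. Optimizing, I would set $d^* = \widthofarg{\formf} + c\sqrt{n \log S}$ for a suitable absolute constant $c > 0$: in the regime where $\sqrt{n \log S}$ dominates~$\widthofarg{\formf}$ this balances the two summands at $\Theta(\sqrt{n \log S})$, and in the opposite regime the correction term $n \log S / d^*$ is already at most $\sqrt{n \log S}$. Either way $\mdegreeref{\formf} - \widthofarg{\formf} = \BIGOH{\sqrt{n \log S}}$, which rearranges to the claimed bound $S = \exp\!\bigl(\BIGOMEGA{(\mdegreeref{\formf} - \widthofarg{\formf})^2 / n}\bigr)$. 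The most delicate step is the restriction lemma: one has to verify that the lifting costs only $+\setsize{\restr}$ additive degree rather than, say, a multiplicative factor of $\widthofarg{\formf}$, which is precisely where the complementarity axiom of PCR (as opposed to just PC) makes the bookkeeping cleanest. The greedy averaging argument and the final balancing of $d^*$ are routine.
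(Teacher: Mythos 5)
Your greedy counting step (finding a literal in a $d^*/(2n)$ fraction of fat monomials, iterating $O(n\log S/d^*)$ times) and the final balancing of $d^*$ are both fine, but the ``restriction lemma'' $\mdegreeref{\formf} \leq \mdegreeref{\restrict{\formf}{\restr}} + \setsize{\restr}$ is false, and your proof sketch of it cannot work. Reintroducing a single assignment genuinely requires \emph{both} value-cases: from a degree-$d$ refutation of $\restrict{\formf}{x=b}$ one can lift to a degree-$(d{+}1)$ derivation from $\formf$ of the unit polynomial forcing $x=b$, but one must then refute $\formf$ together with that unit, which is exactly refuting $\restrict{\formf}{x=\olnot{b}}$ --- and the greedy $\restr$ gives no handle on that complementary branch. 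You write ``combine the two value-cases into a single refutation,'' but you have only one. Concretely, let $\set{D_j}_j$ be any hard low-width formula on $x_2,\dots,x_n$ (say Tseitin over an expander, degree $\bigomega{n}$), and set
$\formf = (\olnot{y} \lor x_1)\land(\olnot{y}\lor\olnot{x}_1)\land\Land_j (y\lor D_j)$.
With $\restr = \set{y\mapsto\textrm{true}}$ one has $\restrict{\formf}{\restr} = x_1\land\olnot{x}_1$, so $\mdegreeref{\restrict{\formf}{\restr}}=1$ and $\setsize{\restr}=1$, yet $\mdegreeref{\formf}\geq\mdegreeref{\set{D_j}}=\bigomega{n}$ because restricting $y$ to false yields a refutation of $\set{D_j}$. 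A tell-tale symptom is that your intermediate inequality $\mdegreeref{\formf}\leq d^*+\bigoh{n\log S/d^*}$ contains no $\widthofarg{\formf}$ term in its body (only the side-constraint $d^*>\widthofarg{\formf}$); in the true statement the width must enter the inequality itself, which is what produces the $(\mdegreeref{\formf}-\widthofarg{\formf})^2$ form of the bound.

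The argument of \cite{IPS99LowerBounds} (mirroring Ben-Sasson--Wigderson for resolution width) is a \emph{double induction} on the pair (number of monomials of degree $>d^*$ in a size-$\leq S$ refutation, number of variables), not a one-shot restriction followed by a lift. At each step one picks a literal $\ell$ occurring in a $\geq d^*/(2n)$ fraction of the fat monomials and argues two recursive branches: for $\ell\mapsto 0$ the fat count shrinks by a $(1-d^*/(2n))$ factor and the eventual lift costs $+1$ in degree; for $\ell\mapsto 1$ the fat count does not increase but the number of variables drops, and the lift costs nothing. The combination step --- lift the $\ell\mapsto 0$ refutation to a derivation of $\olnot{\ell}$ from $\formf$, then rederive each axiom of $\restrict{\formf}{\ell=1}$ from $\formf$ together with $\olnot{\ell}$ via the complementarity axiom, then run the $\ell\mapsto 1$ refutation --- gives
$\mdegreeref{\formf}\leq\max\bigl(\mdegreeref{\restrict{\formf}{\ell=0}}+1,\ \mdegreeref{\restrict{\formf}{\ell=1}},\ \widthofarg{\formf}\bigr)$,
and the $\widthofarg{\formf}$ term enters here, when the degree-$\leq\widthofarg{\formf}$ axioms of $\formf$ are multiplied by $\olnot{\ell}$ in the rederivation. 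Unwinding the double induction yields $\mdegreeref{\formf}\leq d^*+\bigoh{n\log S/d^*}+\widthofarg{\formf}$, which is the inequality to optimize over $d^*$; the width term must appear organically from the combination step, not be inserted through the choice of $d^*$.
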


Thus,  for $k$-CNF formulas it is sufficient to prove strong enough
lower bounds on the PC degree of refutations to establish strong lower
bounds on PCR proof size.

Furthermore, it will be convenient for us to simplify the definition
of~PC so that axioms $x^2 - x$ are always applied implicitly whenever
possible. We do this by defining the result of 
the multiplication operation to be the multilinearized version of the
product. This can only decrease the degree (and size) of the
refutation, and is in fact how polynomial calculus is defined
in~\cite{AR03LowerBounds}. 
Hence, from now on whenever we refer to
polynomials and monomials we mean multilinear polynomials and
multilinear monomials, respectively, and polynomial calculus is
defined over the (multilinear) polynomial 
\mbox{ring
$\fieldstd[x, y, z, \ldots] / \idealgen{x^2-x, y^2-y, z^2-z,
  \ldots}$.}

\ifthenelse{\boolean{conferenceversion}}
{}
{It might be worth noticing that for this modified definition of
  polynomial calculus it holds that any (unsatisfiable) $k$-CNF formula
  can be refuted in linear length (and hence, in constrast to
  resolution, the size of refutations, rather than the length, is the
  right measure to focus on).
  This is not hard to show, and in some sense is probably folklore,
  but since it does not seem to be too widely known we
  state it for the record and  provide a proof.
  
  \begin{proposition}
    Any unsatisfiable $k$-CNF formula $F$
            has a
    (multilinear) polynomial calculus refutation 
    of
        length linear in
    the size of the formula~$F$.
  \end{proposition}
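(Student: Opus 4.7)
The plan is to iteratively build up a polynomial $P_j$ that ``fires'' (equals~$1$) precisely on assignments falsifying at least one of the first $j$~clauses. Concretely, write $\formf = \clc_1 \land \cdots \land \clc_m$ and let $\polyp_j$ be the PC encoding of $\clc_j$, so that $\polyp_j$ is a product of at most $\clwidth$ literals $\ell_{j,1}, \ldots, \ell_{j,\clwidth}$ (each literal being $\varx$ or~$1-\varx$). I would define $P_1 \defeq \polyp_1$ and
\begin{equation*}
P_j \defeq P_{j-1} + \polyp_j - P_{j-1} \cdot \polyp_j \eqperiod
\end{equation*}
A straightforward induction (using the fact that $\polyp_j$ evaluates to $1$ on an assignment~$\tvastd$ iff $\tvastd$ falsifies $\clc_j$) shows that, as functions on $\zeroonecube{\nvar}$, $P_j(\tvastd)=1$ iff $\tvastd$ falsifies some clause among $\clc_1,\ldots,\clc_j$. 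Since $\formf$ is unsatisfiable, $P_m$ is the constant~$1$ function on $\zeroonecube{\nvar}$, hence is identically~$1$ as a multilinear polynomial. So deriving $P_m$ line by line produces a refutation ending at~$1$.

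Next I would argue that each $P_j$ can be obtained from $P_{j-1}$ and the axiom~$\polyp_j$ in $\Ordo{\clwidth}$ lines, using only the single-variable multiplication rule and linear combinations. The only non-trivial step is computing the product $P_{j-1}\cdot \polyp_j$. For this I would successively multiply $P_{j-1}$ by the literals $\ell_{j,1}, \ldots, \ell_{j,\clwidth}$: when $\ell_{j,r} = \varx$, a single multiplication by~$\varx$ suffices; when $\ell_{j,r} = 1-\varx$, first multiply the current polynomial by~$\varx$ and then form the linear combination ``current~$-$~its $\varx$-multiple.'' After $\Ordo{\clwidth}$ such lines we have $P_{j-1}\cdot \polyp_j$, and two further linear-combination lines yield $P_j$. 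Since multilinearization is built into the multiplication rule, no separate applications of the Boolean axioms are needed.

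Summing over $j = 2, \ldots, m$ gives total length $\Ordo{\clwidth \cdot \nclause}$, which for fixed~$\clwidth$ is linear in the size~$\sizeofarg{\formf} = \Tight{\clwidth \nclause}$ of the input formula (and in fact the argument bounds the length by $\Ordo{\sizeofarg{\formf}}$ for an arbitrary CNF formula, without any assumption on clause width).

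The one thing to be careful about is that PC's multiplication rule only allows multiplication by a single variable, not by an arbitrary polynomial; this is precisely why the product $P_{j-1}\cdot \polyp_j$ must be unfolded literal by literal and why negative literals cost a linear combination in addition to a multiplication. Once this unfolding is in place, the rest of the argument is an uncomplicated counting of lines.
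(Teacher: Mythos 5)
Your proposal is correct and follows essentially the same approach as the paper: the same telescoping construction $P_j = P_{j-1} + p_j - P_{j-1}p_j = 1 - \prod_{i\le j}(1-p_i)$, the same appeal to the uniqueness of the multilinear representation to conclude that $P_m$ is syntactically~$1$, and the same observation that each step costs $\Ordo{\clwidth}$ lines. The only difference is that you explicitly unfold the product $P_{j-1}\cdot p_j$ literal by literal (handling $1-\varx$ via a multiplication plus a linear combination), where the paper simply asserts the constant bound; this is a helpful elaboration rather than a different route.
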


  \begin{proof}
                We show by induction how to derive polynomials
    $\polyp_j = 1 - \prod_{i = 1}^{j} (1 - \polyax_i)$ 
    in length linear  in~$j$, 
            where we identify the clause~$\polyax_i$ in 
    $F = \Land_{i=1}^{m} \polyax_i$ with the polynomial
    encoding of this clause.
    The end result is the polynomial 
    $\polyp_m = 1 - \prod_{i=1}^{m}(1 - \polyax_i)$. 
            As $F$ is unsatisfiable, for every
    $0$-$1$ assignment there is at least one $\clc_i$ that evaluates
    to~$1$ and hence $\polyp_m$ evaluates to $1$. Thus, $\polyp_m$ is
    equal to $1$ on all $0$-$1$ assignments.
            However, it is a basic fact that every function
    $\funcdescr{f}{\set{0,1}^n}{\F}$
    is uniquely representable as a multilinear polynomial in
    $\F[x_1, \ldots, x_n]$
    (since the multilinear monomials span this vector space and are
    linearly independent, they form a basis).
    Therefore, 
    it follows that $\polyp_m$ is syntactically equal to the
    polynomial~$1$. 
    
    The base case of the induction is the polynomial $\polyp_1$ that is
    equal to $\polyax_1$. To prove the induction step, 
            we need to show how to derive 
    \begin{equation}
      \polyp_{j + 1} 
      = 
      1 - \prod_{i = 1}^{j+1} (1 - \polyax_i) 
      =
      1 - (1 -  \polyax_{j+1}) (1 - \polyp_j)
      = 
      \polyp_j + \polyax_{j+1} - \polyax_{j+1} \polyp_j
    \end{equation}
    from $\polyp_j$ and $\polyax_{j + 1}$ in
    a constant number of steps.
    To start, we derive $\polyax_{j + 1} \polyp_j$ from
    $\polyp_j$, which can be done with a constant number of
    multiplications and additions since the width/degree
    of $\polyax_{j+1}$ is upper-bounded by the constant~~$k$.
    We derive $\polyp_{j + 1}$ in two more steps by first taking a
    linear combination
    of   $\polyp_j$ and $\polyax_{j + 1} \polyp_j$ to get
    $\polyp_j - \polyax_{j + 1} \polyp_j$
    and then adding $\polyax_{j+1}$ to this to obtain
    $\polyp_j - \polyax_{j + 1} \polyp_j + \polyax_{j + 1} = \polyp_{j+1}$.
    The proposition follows.
                                          \end{proof}}

We will also need to use restrictions.
A \introduceterm{restriction}
$\rstd$ on $\fstd$ is a partial assignment to the variables
of~$\fstd$. We use $\domainof{\rstd}$ to denote the set of variables
assigned by $\rstd$. In a restricted formula $\restrict{\fstd}{\rstd}$
all clauses satisfied by $\rstd$ are removed and all other clauses
have falsified literals removed. 
For a PC refutation~$\proofstd$ restricted by~$\rstd$ we have that
if~$\rstd$ satisfies a literal in a monomial, then that monomial is
set to~$0$ and vanishes, and all falsified literals in a monomial get
replaced by~$1$ and 
disappear.
It is not hard to see that if~$\proofstd$
is a PC (or PCR) refutation of~$\fstd$, then $\restrict{\proofstd}{\rstd}$ is a
PC (or PCR) refutation of $\restrict{\fstd}{\rstd}$, and  this restricted
refutation has at most the same size, degree, and length as the
original refutation.

\makeatletter{}\section{A Generalization of the Alekhnovich--Razborov Method for CNFs} 
\label{sec:method}

Many lower bounds in proof complexity are proved by arguing in terms
of expansion.  One common approach is to associate a bipartite graph
$G(\formf)$ with the CNF formula~$\formf$ with clauses on one side and
variables on the other and with edges encoding that a variable occurs
in a clause (the so-called \introduceterm{clause-variable incidence graph} 
mentioned in the introduction).  The method we present below,
which is an extension of the techniques developed by Alekhnovich and
Razborov~\cite{AR03LowerBounds} (but restricted to the special case of
CNF formulas), is a variation on this theme.  As already discussed,
however, we will need a slightly more general graph construction where
clauses and variables can be grouped into clusters.  We begin by
describing this construction.

\subsection{A Generalized Clause-Variable Incidence Graph}
\label{sec:method-cvig}

The key to our construction of generalized clause-variable incidence
graphs is to keep track of how clauses in a CNF formula are affected
by partial assignments.

\begin{definition}[\Rsatrespectful assignments and variable sets]
  \label{def:respectful-assignments}
  We say that
  a partial assignment $\assignmenta$ 
  \introduceterm{\rsatrespects{}} a CNF formula~$\fixedconstraints$,
  or that~$\assignmenta$ is \introduceterm{\radjective{}},
 if for every clause $\clc$ in
  $\fixedconstraints$ either
  $\vars{\clc} \intersection \domainof{\assignmenta} = \emptyset$ or
  $\assignmenta$ satisfies $\clc$. 
  A set of variables $\variablesv$
  \rsatrespects a CNF formula $\fixedconstraints$ if there exists 
  an assignment~$\assignmenta$ with
  $\domainof{\assignmenta} = \variablesv$ that respects
  $\fixedconstraints$.
\end{definition}

\begin{example}
  Consider the CNF formula
  $
  \fixedconstraints = 
  (\varx_1 \land \varx_2)
  \land
  (\olnot{\varx}_1 \land \varx_3)
  \land
  (\varx_1 \land \varx_4)
  \land
  (\olnot{\varx}_1 \land \varx_5)
  $
  and the subsets of variables
  $\variablesv_1 = \set{\varx_1, \varx_2, \varx_3}$
  and
  $\variablesv_2 = \set{\varx_4, \varx_5}$.
  The assignment $\assignmenta_2$ to~$\variablesv_2$ setting
  $\varx_4$ and~$\varx_5$ to true
  \rsatrespects~$\fixedconstraints$ since it
  satisfies the clauses containing
  these variables, and hence $\variablesv_2$ is \radjective.
  However, $\variablesv_1$~is not \radjective since setting~$\varx_1$
  will affect all clauses in~$\fixedconstraints$ but cannot satisfy both
  $\varx_1 \land \varx_4$
  and
  $\olnot{\varx}_1 \land \varx_5$.
\end{example}

\begin{definition}[\Rsatrespectful satisfaction]
  \label{def:respectful-satisfaction}
  Let~$\formf$ and~$\fixedconstraints$ be CNF formulas and
  let~$\variablesv$ be a set of variables.
  We say that $\formf$ is
  \introduceterm{\rsatisfiable by $\variablesv$}
  if there exists a partial
  assignment~$\assignmenta$ with 
  $\domainof{\assignmenta} = \variablesv$ that  
  satisfies~$\formf$ and 
  respects~$\fixedconstraints$. Such an assignment~$\assignmenta$ is
  said to  \introduceterm{\rsatisfy{}} $\formf$.
\end{definition}

Using a different terminology,
\refdef{def:respectful-assignments} 
says that 
$\assignmenta$ is an \introduceterm{autarky} for $\fixedconstraints$, 
meaning that
$\assignmenta$ satisfies all clauses in~$\fixedconstraints$ which it
touches,
\ie that
$\restrict{\fixedconstraints}{\assignmenta} \subseteq
\fixedconstraints$
after we remove all satisfied clauses in
$\restrict{\fixedconstraints}{\assignmenta}$.
\Refdef{def:respectful-satisfaction} ensures that the autarky
$\assignmenta$ satisfies the formula $\formf$.

Recall that we identify a CNF formula 
$\Land_{i=1}^{m} \clc_i$
with the set of clauses
$\setdescr{\clc_i}
{i \in [m]}
$.
In the rest of this section we will switch freely between
these two perspectives.
We also change
to the notation
$\originalcnf$ for the input CNF formula, to free up other letters
that will be needed in notation introduced below.

To build a bipartite graph representing the CNF formula $\originalcnf$, we 
will group the formula into subformulas (\ie subsets of
clauses). In what follows, we write
$\fpartition$ 
to denote the part of $\originalcnf$ that will form the left
vertices of the constructed
bipartite graph, while 
$\fixedconstraints$ denotes the part of $\originalcnf$ which
will not be represented in the graph
but will be used to enforce \rsatrespectful satisfaction.
In more detail, $\fpartition$ is a family of subformulas $\formf$ 
of $\originalcnf$ where each subformula is one vertex on the left-hand
side of the graph.
We also consider
the variables of $\originalcnf$
to be divided into a family $\vpartition$ of 
subsets of variables~$\variablesv$. 
In our definition, $\fpartition$~and $\vpartition$ do not need
to be partitions of clauses and variables in $\originalcnf$, respectively. 
This is not too relevant for~$\fpartition$ because we will always
define it as a partition,  
but it turns out to be useful in our applications
to have sets in $\vpartition$ share variables.
The next definition describes the bipartite graph that we build and 
distinguishes between two types of 
neighbour relations
in this graph.

\begin{definition}[Bipartite \fvstructure{}]
  \label{def:f-v-graph}
  Let $\fixedconstraints$ be a CNF formula, $\fpartition$ be a set of
  CNF formulas, and $\vpartition$ be a family of sets of variables
  $\variablesv$ 
  that \rsatrespect $\fixedconstraints$.
  Then the 
  \introduceterm{(bipartite) \fvstructure} 
  is a bipartite graph with left vertices
  $\formf \in \fpartition$, right vertices
  $\variablesv \in \vpartition$, 
  and edges between $\formf$ and $\variablesv$ if
  \mbox{$\vars{\formf} \intersection \variablesv \neq \emptyset$}.
  For every edge $(\formf, \variablesv)$ in the graph we
  say that $\formf$ and $\variablesv$ are
  \introduceterm{\rneighbours{}} if~$\formf$ is \rsatisfiable
  by~$\variablesv$. Otherwise, they are
  \introduceterm{\rnegneighbours{}}.
\end{definition}

We will often write
$\fvstructurenot$
as a shorthand for the graph defined by
$\fpartition$, $\vpartition$, and $\fixedconstraints$
as above.
We will also  use standard graph notation and write
$\neighbor{\formf}$ to denote the set of all neighbours
$\variablesv \in \vpartition$ of 
a vertex/CNF formula
$\formf \in \fpartition$.
It is important to note that the fact that $\formf$ and~$\variablesv$
are \rneighbours can be witnessed by  an 
assignment that falsfies other subformulas 
$\formf' \in \fpartition \setminus \set{\formf}$.

\ifthenelse{\boolean{conferenceversion}}
{}
{We can view the formation of the \fvstructure as taking the
  clause-variable incidence graph
  $G(\originalcnf)$ of 
      the
  CNF formula $\originalcnf$, throwing out a part of
  $\originalcnf$, which we denote $\fixedconstraints$, and clustering the
  remaining clauses and variables into $\fpartition$ and
  $\vpartition$. The edge relation in the \fvstructure follows
  naturally from this view, as we put an edge between two clusters if
  there is an edge between any two elements of these clusters. The only
  additional information we need to keep track of is 
      which clause and variable clusters are
  \rneighbours or not.}
  
\begin{definition}[\Rsatrespectful boundary]
  \label{def:respectful-boundary}
  For 
  \afvstructure and a subset
  $\fpartition' \subseteq \fpartition$,
  the
  \introduceterm{\rboundary}
  $\sboundary{\fpartition'}$ of~$\fpartition'$
  is the family of variable
  sets~$\variablesv \in \vpartition$ such that each
  $\variablesv \in \sboundary{\fpartition'}$ is
  \anrneighbour
  of some clause set
  $\formf \in \fpartition'$ but is not a neighbour 
  (\rsatrespectful or \rsatdisrespectful)
  of any  other clause set 
  $\formf' \in \fpartition' \setminus \set{\formf}$.
\end{definition}

It will sometimes be convenient to interpret subsets 
$\fpartition' \subseteq \fpartition$ 
\ifthenelse{\boolean{conferenceversion}}
{as formulas} 
{as CNF formulas} 
$
\Land_{\formf \in \fpartition'} 
\Land_{\clc \in \formf} \clc
$,
and we will switch back and forth between these two interpretations as
seems most suitable.
We will show that 
a formula 
$
\originalcnf  = 
\Land_{\formf \in \fpartition} 
\Land_{\clc \in \formf} \clc \land \fixedconstraints
= \fpartition \land \fixedconstraints$ 
is hard for polynomial calculus \wrt degree if the
\fvstructure has a certain expansion property as defined next.

\begin{definition}
  [\Rsatrespectful boundary expander]
  \label{def:F-V-boundary-expansion}
  \Afvstructure is 
  said to be
  an
  \introduceterm{\rboundaryexpanderstd{}},
  or just an
  \introduceterm{\rboundaryexpandershortstd{}} for brevity,
  if for every set
  $\fpartition' \subseteq \fpartition$, 
  $\setsize{\fpartition'} \leq  \expsize$,
  it holds that
  $\setsize{\sboundary{\fpartition'}} \geq 
  \expfactor
  \setsize{\fpartition'} - \expslack$.
\end{definition}

\ifthenelse{\boolean{conferenceversion}}{}{Note that an
\rboundaryexpanderstd
is a standard bipartite boundary expander except for two modifications:
\begin{itemize}
\item We measure expansion not in terms of the whole boundary but only
  in terms of the \emph{\rsatrespectful boundary}  \footnote{Somewhat intriguingly, we will not see any
    \rsatdisrespectful neighbours in our applications in
    \refsec{sec:applications}, but the concept of \rsatrespectfulness
    is of crucial importance for the main technical result in
    \refth{th:FVStructureTheorem} to go through.
    One way of seeing this is to construct
    \afvstructure for an expanding set of linear equations
    $\bmod\, 2$, where $\fpartition$ consists of the (CNF encodings of)
    the equations, $\vpartition$ consists of one variable set for
    each equation containing exactly the variables in this equation,
    and $\fixedconstraints$ is empty. Then this \fvstructure has the
    same boundary expansion as the constraint-variable incidence
    graph, but \refth{th:FVStructureTheorem} does not apply (which it
    should not do) since this expansion is not \rsatrespectful.}
  as described in \refdef{def:respectful-boundary}.

\item 
  Also, the size of the boundary~$\setsize{\sboundary{\fpartition'}}$
  on the right does not quite have to scale linearly
  with the size of the vertex set
  $\setsize{\fpartition'}$
  on the left. 
  Instead, we allow an
  \emph{additive loss}~$\expslack$ in the expansion.
  In our applications, we can usually construct graphs
  with good enough expansion so that
  we can choose $\expslack=0$, but for one of the results we present
  it will be helpful to allow a small slack here.
\end{itemize}}

Before we state our main theorem we need one more technical
definition, 
which is used to ensure
that there do not exist
variables that appear in too many variable sets in $\vpartition$.
\ifthenelse{\boolean{conferenceversion}}
{}
{We remark that the concept below is also referred to as the ``maximum
  degree'' in the literature, but since we already have
    degrees of polynomials and vertices in this paper
  we prefer a new term instead of overloading ``degree'' with a third
  meaning.}

\begin{definition}
  \label{def:overlap}
  The \introduceterm{\voverlapterm{}} of a variable $x$ \wrt a family
  of variable sets $\vpartition$ is
  $ \xoverlapnot{x}{\vpartition} = \setsize{\setdescr[:]{\variablesv
      \in \vpartition}{x \in \variablesv}} $
  and the \voverlapterm of $\vpartition$ is
  $ \voverlapnot{\vpartition} =
  \maxofexpr[x]{\xoverlapnot{x}{\vpartition}} $,
  \ie the maximum number of sets $\variablesv \in \vpartition$
  containing any particular variable $\varx$.
\end{definition}

Given the above definitions, we can state the main technical result in
this paper as follows.

\begin{theorem}
  \label{th:FVStructureTheorem}
  Let
  $
  \originalcnf  = 
  \Land_{\formf \in \fpartition} 
  \Land_{\clc \in \formf} \clc \land \fixedconstraints
  = \fpartition \land \fixedconstraints$ 
  be a CNF formula for which 
  $\fvstructurenot$ 
  is an \rboundaryexpandershortstd{} with 
  \voverlapterm   $ \voverlapnot{\vpartition} = \setdegd $,
  and suppose furthermore that for all
  $\fpartition' \subseteq \fpartition, \, \setsize{\fpartition'} \leq
  \expsize$,
  it holds that $\fpartition' \land \fixedconstraints$ is
  satisfiable. Then any polynomial calculus 
  refutation of~$\originalcnf$
  requires degree
  strictly greater than
  $(\expfactor \expsize - 2 \expslack) / (2 \setdegd)$.
\end{theorem}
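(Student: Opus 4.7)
My plan is to follow the pseudo-expectation / reduction strategy of Alekhnovich--Razborov~\cite{AR03LowerBounds}, but run it on the generalized bipartite graph $\fvstructurename = \fvstructurenot$ rather than on the clause-variable incidence graph of $\originalcnf$. Concretely, I will build a $\fieldstd$-linear operator $R$ defined on polynomials of degree at most $D = \floor{(\expfactor \expsize - 2 \expslack)/(2 \setdegd)}$ with the following properties: (i) $R(1) = 1$; (ii) for every clause-axiom $\polyax$ of $\originalcnf$ and every polynomial $\polyq$ with $\mdegreeof{\polyq \cdot \polyax} \leq D$, one has $R(\polyq \cdot \polyax) = 0$; and (iii) $R$ is compatible with the multilinearization relations $x^2 - x$. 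Any PC refutation of degree at most $D$ then lets us propagate $R$ down the refutation to conclude $R(1) = 0$, contradicting (i). Thus no refutation of degree $\leq D$ exists.

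To construct $R$, I will assign to each monomial $\monomialm$ with $\mdegreeof{\monomialm} \leq D$ a ``covering'' set $\mu(\monomialm) \subseteq \fpartition$ of minimal cardinality such that $\vars{\monomialm} \subseteq \Union_{\formf \in \mu(\monomialm)} \Union_{\variablesv \in \neighbor{\formf}} \variablesv$, i.e.\ every variable of $\monomialm$ is reached by some variable-cluster adjacent to $\mu(\monomialm)$. Since $\voverlapnot{\vpartition} = \setdegd$, one can always take $\setsize{\mu(\monomialm)} \leq \setdegd \cdot \mdegreeof{\monomialm} \leq \setdegd D \leq \expsize$, so the expansion hypothesis applies and yields $\setsize{\sboundary{\mu(\monomialm)}} \geq \expfactor \setdegd D - \expslack > 0$. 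The $\rsatrespectful$ nature of the boundary gives us $\variablesv \in \sboundary{\mu(\monomialm)}$ and a unique $\formf \in \mu(\monomialm)$ such that $\formf$ is $\rsatrespectful$-satisfiable by $\variablesv$ and no other cluster in $\mu(\monomialm)$ even touches $\variablesv$. Picking such a respectful assignment $\rstd$, I extend $\rstd$ to vanish on any variable of $\monomialm$ that lies in $\variablesv$ and recurse on the restricted monomial $\restrict{\monomialm}{\rstd}$. Linearly extending monomial-by-monomial then defines $R$.

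The key lemmas to verify are then standard but delicate adaptations of \cite{AR03LowerBounds}: well-definedness of $R$ (independence of the order in which boundary clusters are picked; this is a local confluence argument combined with an exchange/matching argument using the $\rsatrespectful$ boundary condition), linearity, the fact that $R(x \cdot p) = R(x \cdot R(p))$ as long as degrees remain below~$D$, and the crucial vanishing $R(\polyq \cdot \polyax) = 0$, which holds because any partial assignment used in the reduction that touches the cluster containing $\polyax$ was chosen to satisfy that cluster (and thus to satisfy $\polyax$), while the hypothesis that $\fpartition' \land \fixedconstraints$ is satisfiable for every small $\fpartition' \subseteq \fpartition$ guarantees that the successive respectful assignments can be composed into a globally consistent partial assignment throughout the recursion. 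The degree bound $D = \floor{(\expfactor \expsize - 2 \expslack)/(2 \setdegd)}$ comes out of the requirement that for two degree-$D$ polynomials~$\polyq$ and $\polyr$ (appearing, say, in a linear combination or a product $x \polyq$) we must have $\setsize{\mu(\polyq) \cup \mu(\polyr)} \leq 2 \setdegd D \leq \expsize$ with a non-empty respectful boundary after removing up to $\mu(\polyq)$ or $\mu(\polyr)$; the factor $2$ and the $2 \expslack$ are the cost of these ``two-polynomial'' arguments.

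The hardest step, as in~\cite{AR03LowerBounds}, will be proving well-definedness and the multiplicative commutation $R(x \cdot p) = R(x \cdot R(p))$: one has to show that two different respectful assignments produced by two different choices of boundary variable-clusters yield the same reduced polynomial. In our generalized setting the extra subtlety is that variable clusters in $\vpartition$ may overlap (hence the factor $\setdegd$), so a single variable can be ``set'' by reductions against different clusters; one must verify that these settings are consistent because both come from assignments \rsatisfying{} the same minimal covering $\mu(\monomialm)$, and that swapping the order of two reductions against clusters in the $\rsatrespectful$ boundary does not change the outcome since, by the definition of $\sboundary{\cdot}$, the two clusters are ``independent'' in the graph. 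Once this confluence is in hand, properties (i)--(iii) follow by the same bookkeeping as in \cite{AR03LowerBounds}, and \refth{th:FVStructureTheorem} is established.
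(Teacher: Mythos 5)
Your high-level strategy---build a linear operator in the spirit of Alekhnovich--Razborov and apply Razborov's lemma (property~(i) should be $\redophead(1) \neq 0$, and you do eventually state the crucial commutation $\redophead(x\polyp) = \redophead(x\redophead(\polyp))$)---is the right one. But the construction you propose for $\redophead$ diverges from what actually works, and the divergence is not a cosmetic difference of exposition.

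The paper defines $\fvredop{\termt}$ as algebraic reduction modulo the ideal $\spanofc[\support{\termt}]$, where $\support{\termt}$ is the \emph{union} of all subsets $\fpartition' \subseteq \fpartition$ of size at most $\expsize$ whose \rsatrespectful boundary is contained in $\neighbor{\termt}$. Defining the operator as reduction modulo an ideal (with a fixed admissible monomial ordering) makes it linear and well-defined automatically; the entire technical content of the proof is then a sequence of lemmas showing that this particular family of ideals, indexed by $\support{\termt}$, has the closure properties needed for the commutation $\fvredop{\varx\termt} = \fvredop{\varx\,\fvredop{\termt}}$. Your proposal instead defines $\redophead$ by a recursive ``kill a boundary cluster with a respectful assignment and restrict'' procedure. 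This is not the same operator, and you would have to \emph{prove} it is well-defined---the ``local confluence'' you wave at is a genuine gap, not a routine adaptation; it is precisely the kind of difficulty the ideal-reduction formulation is designed to sidestep, and overlapping variable clusters (which you correctly worry about) make it worse, not just a bookkeeping nuisance. Restrictions do appear in the paper's proof, but only inside \reflem{lem:MainLemmaStrongClosure} as a tool to show that deleting a generator $\formf$ outside $\support{\termt}$ does not change reducibility of certain terms---they are not the mechanism by which the operator acts.

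Moreover, your covering set $\mu(\monomialm)$ is the wrong object. It is a \emph{minimal} set of left-vertices covering the variables of $\monomialm$ via adjacency; the paper's $\support{\termt}$ is a \emph{maximal} set characterized by a boundary-containment condition $\sboundary{\fpartition'} \subseteq \neighbor{\termt}$. The latter is monotone in $\termt$ (\refobs{obs:monotone-support}) and closed under union of \svcontainedstd sets (the core of \reflem{lem:MainPropertyOfSupport}); your $\mu(\monomialm)$ has neither property, and both are used essentially in establishing the commutation and in showing $\fvredop{\polyax} = 0$ for axioms $\polyax$. Your cardinality estimate $\setsize{\mu(\monomialm)} \leq \setdegd \cdot \mdegreeof{\monomialm}$ does not follow from the overlap bound (the overlap bounds the number of variable clusters touching $\monomialm$, not the number of formula clusters needed to cover them), and the explanation of the factor $2\setdegd$ as a ``two-polynomial'' cost is not where it comes from: in the paper it arises from \reflem{lem:SizeOfSupportFromDegree}, which uses $\setsize{\neighbor{\termt}} \leq \setdegd \cdot \mdegreeof{\termt} \leq \expfactor\expsize/2 - \expslack$ to bound the size of $\support{\termt}$ by $\expsize/2$ via \reflem{lem:MainPropertyOfSupport}, leaving room to add one more formula cluster or to take a union of two supports and still stay below $\expsize$.
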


In order to prove 
this theorem,
it will be convenient to 
review some algebra. We do so next.

\subsection{Some Algebra Basics}
\label{sec:pc-algebra-basics}

We will need to compute with polynomials modulo ideals, and in order
to do so we need to have an ordering of monomials (which, as we
recall, will always be multilinear).

\begin{definition}[Admissible ordering]
  \label{def:admissible-ordering}
  We say that a total ordering $\mlt$ on the set of all monomials
  over some fixed set of variables
  is
  \introduceterm{admissible} if the following conditions hold:
  \begin{itemize}
  \item 
    If
    $\mdegreeof{\monomialm_1} < \mdegreeof{\monomialm_2}$, 
    then
    $\monomialm_1 \mlt \monomialm_2$.
  \item 
    For any
    $\monomialm_1, \monomialm_2$, and $\monomialm$
    such that
    $\monomialm_1 \mlt \monomialm_2$ and 
    \mbox{$\vars{\monomialm}
      \intersection
      \bigl( \vars{\monomialm_1} \union
      \vars{\monomialm_2} \bigr) = \emptyset$},
    it holds that 
    $\monomialm \monomialm_1 \mlt \monomialm \monomialm_2$.
  \end{itemize}
  Two terms
  $\termt_1 = \alpha_1 \monomialm_1$
  and
  $\termt_2 = \alpha_2 \monomialm_2$
  are ordered in the same way as their underlying monomials
  $\monomialm_1$
  and~$\monomialm_2$.
\end{definition}

One example of an admissible ordering is to first order monomials with
respect to their degree  
\ifthenelse{\boolean{conferenceversion}}
{and then lexicographically.} 
{and then lexicographically.
  For the rest of
  this section we only need that $\mlt$ is some fixed but arbitrary
  admissible ordering, but the reader can think of the
  degree-lexicographical ordering without any particular loss of
  generality.} 
We write
$\monomialm_1 \mleq \monomialm_2$
to denote that
$\monomialm_1 \mlt \monomialm_2$
or
$\monomialm_1 = \monomialm_2$.

\begin{definition}[Leading, reducible, and irreducible terms]
  For a polynomial 
  $\polyp = \sum_i \termt_i$,  
  the \introduceterm{leading term} 
  $\leadingterm{\polyp}$
  of~$\polyp$
  is the largest term~$\termt_i$ according to~$\mlt$.
  Let 
  $\ideali$
  be an ideal over the (multilinear) polynomial ring
  $\fieldstd[x, y, z, \ldots] / 
  \idealgen{x^2-x, y^2-y, z^2-z, \ldots}$.
  We say that a term~$\termt$ is
  \introduceterm{reducible modulo $\ideali$} if there exists a
  polynomial $\idpolyq \in \ideali$ such that 
  $\termt = \leadingterm{\idpolyq}$
and that $\termt$ is \introduceterm{irreducible modulo~$\ideali$}
otherwise.  
\end{definition}

The following fact is not hard to verify.

\begin{fact}
  \label{fact:unique-sum}
  Let 
  $\ideali$
  be an ideal over 
  $\fieldstd[x, y, z, \ldots] / 
  \idealgen{x^2-x, y^2-y, z^2-z, \ldots}$.
  Then any multilinear polynomial 
  $\polyp
  \in
  \fieldstd[x, y, z, \ldots] / 
  \idealgen{x^2-x, y^2-y, z^2-z, \ldots}$
  can be written uniquely as
  a sum $\idpolyq + \redpolyr$, where $\idpolyq \in \ideali$ and  
  $\redpolyr$ is a linear combination of irreducible terms
  modulo~$\ideali$. 
\end{fact}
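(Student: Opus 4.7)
My plan is to establish existence of the decomposition by a greedy reduction procedure driven by the admissible ordering $\mlt$, and then to prove uniqueness by a short argument that contrasts reducibility and irreducibility of the leading term of a difference.

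For existence, the procedure is as follows. Starting from $\polyp_0 = \polyp$, at stage $i$ I inspect the set of terms of $\polyp_i$ that are reducible modulo~$\ideali$. If this set is empty I halt. Otherwise I pick the $\mlt$-largest reducible term $\termt_i$ of $\polyp_i$, select some $\idpolyq_i \in \ideali$ with $\leadingterm{\idpolyq_i} = \termt_i$ (rescaling by a non-zero field element so that the coefficients agree), and set $\polyp_{i+1} = \polyp_i - \idpolyq_i$. The term $\termt_i$ is then cancelled, and every new term introduced is strictly $\mlt$-smaller than~$\termt_i$, while the (necessarily irreducible) terms of $\polyp_i$ that were strictly $\mlt$-larger than $\termt_i$ are undisturbed. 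Consequently the largest reducible term of $\polyp_{i+1}$ is strictly smaller than~$\termt_i$. Since we work in the multilinear quotient ring over the fixed finite set of variables appearing in~$\polyp$ and in the generators of~$\ideali$, there are only finitely many monomials, so the admissible ordering is a well-ordering on them and the procedure terminates. Taking $\redpolyr$ to be the final~$\polyp_i$ and $\idpolyq = \sum_j \idpolyq_j \in \ideali$ gives $\polyp = \idpolyq + \redpolyr$ with $\redpolyr$ a linear combination of irreducible terms.

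For uniqueness, suppose $\polyp = \idpolyq_1 + \redpolyr_1 = \idpolyq_2 + \redpolyr_2$ with both $\idpolyq_i \in \ideali$ and each $\redpolyr_i$ a linear combination of irreducible terms. Then $\redpolyr_1 - \redpolyr_2 = \idpolyq_2 - \idpolyq_1 \in \ideali$. If $\redpolyr_1 - \redpolyr_2 \neq 0$, its leading term is a non-zero scalar multiple of a monomial appearing in $\redpolyr_1$ or~$\redpolyr_2$ and so is irreducible by hypothesis. On the other hand, $\redpolyr_1 - \redpolyr_2$ is a non-zero element of~$\ideali$, so by definition its leading term is reducible. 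This contradiction forces $\redpolyr_1 = \redpolyr_2$, and hence $\idpolyq_1 = \idpolyq_2$.

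The only subtlety I anticipate is in the reduction step, where one must check that $\polyp_i - \idpolyq_i$ truly introduces only terms strictly $\mlt$-smaller than~$\termt_i$. This is immediate from $\termt_i = \leadingterm{\idpolyq_i}$: every term of $\idpolyq_i$ other than~$\termt_i$ is $\mlt$-smaller than~$\termt_i$ by definition of the leading term, and multilinearity is preserved automatically since all computations take place in the quotient ring. Thus the proof is essentially mechanical and short.
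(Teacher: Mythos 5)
Your proof is correct. The paper asserts \reffact{fact:unique-sum} without proof, stating only that it is ``not hard to verify,'' so there is no paper argument to compare against; what you give is exactly the standard argument one would expect here: existence by a terminating reduction that repeatedly cancels the largest reducible term, and uniqueness by observing that a nonzero difference of two remainders would be a nonzero element of $\ideali$ whose leading term, being a term of $\redpolyr_1$ or $\redpolyr_2$, is irreducible---a contradiction. One small point worth flagging: your termination argument invokes finiteness of the monomial set, which requires the underlying variable set to be finite; the Fact as stated uses ``$x, y, z, \ldots$'' and does not say so explicitly, but in the paper's setting (an admissible ordering over a ``fixed set of variables,'' applied to polynomials over $\vars{\originalcnf}$ for a given CNF formula $\originalcnf$) this is always the case, so your argument is sound for every use the paper makes of it.
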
 

This 
is what allows us to reduce polynomials modulo an ideal in a
well-defined manner.

\ifthenelse{\boolean{conferenceversion}}{\begin{definition}[Reduction operator]
  \label{def:reduction-operator}
  Let $\ideali$ be an ideal and let $\polyp$ be any multilinear
  polynomial over
  $\fieldstd[x, y, z, \ldots] / \idealgen{x^2-x, y^2-y, z^2-z, \ldots}$. 
  The \introduceterm{reduction operator}~  $\redophead_{\ideali}$ is the operator that when applied to $\polyp$
  returns the sum of irreducible terms
  $\redop[\ideali]{\polyp} = \redpolyr$ such that
  $\polyp - \redpolyr \in \ideali$.
\end{definition}

}{\begin{definition}[Reduction operator]
  \label{def:reduction-operator}
  Let
  $\polyp \in
  \fieldstd[x, y, z, \ldots] / 
  \idealgen{x^2-x, y^2-y, z^2-z, \ldots}$
  be any multilinear polynomial
  and let
  $\ideali$
  be an ideal over 
  $\fieldstd[x, y, z, \ldots] / 
  \idealgen{x^2-x, y^2-y, z^2-z, \ldots}$.
  The
  \introduceterm{reduction operator}~  $\redophead_{\ideali}$ 
  is the operator that when applied to $\polyp$
  returns the sum of irreducible terms 
  $\redop[\ideali]{\polyp} = \redpolyr$ such that
  $\polyp - \redpolyr \in \ideali$.
\end{definition}
}

We conclude our brief algebra review by stating
two observations that are more or less immediate, but
are helpful enough for us to want to highlight them explicitly.

\begin{observation}
  \label{obs:reduction-mod-larger-ideal}
  For any two ideals
  $\ideali_1$, $\ideali_2$ 
  such that
  $\ideali_1 \subseteq \ideali_2$
  and any two polynomials $\polyp$, $\polyp'$ it holds that
  $\redop[\ideali_2]{\polyp \cdot \redop[\ideali_1]{\polyp'}} =
  \redop[\ideali_2]{\polyp \polyp'}$.
\end{observation}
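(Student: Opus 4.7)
The plan is to reduce the statement to the defining property of $\redophead_{\ideali}$ together with \reffact{fact:unique-sum}, which guarantees uniqueness of the decomposition into an ideal element and an irreducible remainder. Concretely, I would argue that the two polynomials $\polyp \polyp'$ and $\polyp \cdot \redop[\ideali_1]{\polyp'}$ differ only by an element of $\ideali_2$, and hence must have identical reductions modulo $\ideali_2$.

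For the first step, I would apply \refdef{def:reduction-operator} to $\polyp'$ with respect to $\ideali_1$, which gives $\polyp' - \redop[\ideali_1]{\polyp'} \in \ideali_1$. Multiplying by~$\polyp$ and using the fact that $\ideali_1$ is closed under multiplication by arbitrary polynomials in our (multilinear) polynomial ring, we obtain
\begin{equation*}
\polyp \polyp' - \polyp \cdot \redop[\ideali_1]{\polyp'} \in \ideali_1 \subseteq \ideali_2
\eqperiod
\end{equation*}
Hence there exists $\idpolyq \in \ideali_2$ with $\polyp \polyp' = \polyp \cdot \redop[\ideali_1]{\polyp'} + \idpolyq$.

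For the second step, write $\polyp \cdot \redop[\ideali_1]{\polyp'} = \idpolyq' + \redpolyr$ where $\idpolyq' \in \ideali_2$ and $\redpolyr$ is a linear combination of terms irreducible modulo~$\ideali_2$, as provided by \reffact{fact:unique-sum}. Then $\polyp \polyp' = (\idpolyq + \idpolyq') + \redpolyr$ is also a decomposition of the required form, with $\idpolyq + \idpolyq' \in \ideali_2$. By the uniqueness part of \reffact{fact:unique-sum}, both reductions must equal~$\redpolyr$, so
\begin{equation*}
\redop[\ideali_2]{\polyp \polyp'} = \redpolyr = \redop[\ideali_2]{\polyp \cdot \redop[\ideali_1]{\polyp'}}
\eqcomma
\end{equation*}
which is the claimed identity.

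There is no real obstacle here: the only thing one must be a bit careful about is that the ambient ring is the multilinear quotient ring, so the containment $\ideali_1 \subseteq \ideali_2$ and closure under multiplication are taken in that quotient, but this poses no difficulty since $\redophead$ and \reffact{fact:unique-sum} are already phrased in that setting. The proof is therefore essentially a one-line algebraic manipulation, and I would present it in two or three sentences in the paper.
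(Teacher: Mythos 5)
Your proof is correct and follows essentially the same route as the paper's: both arguments hinge on observing that $\polyp\polyp'$ and $\polyp\cdot\redop[\ideali_1]{\polyp'}$ differ by $\polyp\bigl(\polyp' - \redop[\ideali_1]{\polyp'}\bigr) \in \ideali_1 \subseteq \ideali_2$, and then invoking the uniqueness in \reffact{fact:unique-sum} to conclude that the two reductions modulo~$\ideali_2$ coincide. The paper writes out both decompositions explicitly and substitutes, whereas you phrase the first step as a one-line containment, but these are presentational rather than substantive differences.
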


\ifthenelse{\boolean{conferenceversion}}{}{
\begin{proof}
  Let
  \begin{equation}
    \polyp' = \idpolyq' + \redpolyr'
  \end{equation}
  for $\idpolyq' \in \ideali_1$
  and
  $\redpolyr'$ a linear combination of irreducible terms over~$\ideali_1$.
  Let
  \begin{equation}
    \polyp \cdot \redop[\ideali_1]{\polyp'} = 
    \polyp\redpolyr' = \idpolyq + \redpolyr
  \end{equation}
  for $\idpolyq \in \ideali_2$
  and
  $\redpolyr$ a linear combination of irreducible terms
  over~$\ideali_2$.
  Then
  \begin{equation}
    \polyp\polyp' =
    \polyp\idpolyq' + \polyp\redpolyr' =
    \polyp\idpolyq' + \idpolyq + \redpolyr
  \end{equation}
  where
  $\polyp\idpolyq' + \idpolyq \in \ideali_2$.
  By the uniqueness in
  \reffact{fact:unique-sum},
  we conclude that
  the equality
  $
  \redop[\ideali_2]{\polyp \polyp'} =
  \redpolyr = 
  \mbox{$\redop[\ideali_2]{\polyp \cdot \redop[\ideali_1]{\polyp'}}$}$
  holds.
\end{proof}
}

\begin{observation}
  \label{obs:restrictions-preserve-irreducibility}
  Suppose that the term
  $\termt$ 
  is irreducible modulo the ideal~$\ideali$
  and let $\assignmenta$ be 
  any partial assignment of variables in $\vars{\termt}$
  to values in~$\fieldstd$
  such that
  $\restrict{\termt}{\assignmenta} \neq 0$.
  Then
  $\restrict{\termt}{\assignmenta}$
  is also irreducible modulo~$\ideali$.
\end{observation}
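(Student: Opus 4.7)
The plan is to prove the contrapositive: if $\restrict{\termt}{\assignmenta}$ is reducible modulo~$\ideali$, then so is~$\termt$. Write $\termt = \alpha \prod_{\varx \in S} \varx$ with $S = \vars{\termt}$, and set $T = \domainof{\assignmenta} \subseteq S$. The hypothesis $\restrict{\termt}{\assignmenta} \neq 0$ forces $\assignmenta(\varx) \neq 0$ for every $\varx \in T$, and one has $\restrict{\termt}{\assignmenta} = \alpha' \prod_{\varx \in S \setminus T} \varx$ with $\alpha' = \alpha \prod_{\varx \in T} \assignmenta(\varx) \neq 0$. Assuming reducibility of $\restrict{\termt}{\assignmenta}$, pick $\idpolyq \in \ideali$ with $\leadingterm{\idpolyq} = \restrict{\termt}{\assignmenta}$ and define $\idpolyq' := (\alpha/\alpha') \cdot \idpolyq \cdot \prod_{\varx \in T} \varx$, which lies in $\ideali$ because $\ideali$ is an ideal. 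It then suffices to show $\leadingterm{\idpolyq'} = \termt$, contradicting the irreducibility of~$\termt$.

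The core step is verifying that multiplication by $\prod_{\varx \in T} \varx$ preserves the leading term up to the chosen scalar. Write $\idpolyq = \sum_i \beta_i \prod_{\varx \in U_i} \varx$ with $\beta_1 \prod_{\varx \in U_1} \varx = \restrict{\termt}{\assignmenta}$, so that $U_1 = S \setminus T$ and $\beta_i \prod_{\varx \in U_i} \varx \mlt \beta_1 \prod_{\varx \in U_1} \varx$ for every $i \geq 2$. Multilinearization gives $\idpolyq \cdot \prod_{\varx \in T} \varx = \sum_i \beta_i \prod_{\varx \in U_i \cup T} \varx$, and I would show by a short case split that $U_i \cup T \mlt S$ strictly for every $i \geq 2$. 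If $\setsize{U_i} < \setsize{U_1}$ or $U_i \intersection T \neq \emptyset$, then $\setsize{U_i \cup T} < \setsize{S}$, and the degree clause of the admissible ordering suffices; otherwise $\setsize{U_i} = \setsize{U_1}$ and $U_i \intersection T = \emptyset$, so $T$ is disjoint from both $U_i$ and $U_1$, and the product clause of the admissible-ordering definition yields $U_i \cup T \mlt U_1 \cup T = S$.

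Since the terms of~$\idpolyq$ have pairwise distinct monomials, the only contribution to the coefficient of $\prod_{\varx \in S} \varx$ in $\idpolyq \cdot \prod_{\varx \in T} \varx$ comes from the leading term, and it equals $\beta_1 = \alpha' \neq 0$. Therefore $\leadingterm{\idpolyq'} = (\alpha/\alpha') \cdot \alpha' \prod_{\varx \in S} \varx = \termt$, completing the proof. The only potential subtlety is the case analysis for the admissible ordering, but I expect it to be a routine check rather than a genuine obstacle.
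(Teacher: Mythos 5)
Your proof is correct and matches the paper's approach exactly: you take a witness $\idpolyq\in\ideali$ with $\leadingterm{\idpolyq} = \restrict{\termt}{\assignmenta}$, multiply it by the monomial of variables set by~$\assignmenta$, rescale, and conclude that the resulting polynomial lies in~$\ideali$ and has leading term~$\termt$. The only difference is that the paper asserts the leading-term claim in a single clause without justification, whereas you carry out the case analysis needed to verify it---a genuinely necessary check, since multilinearizing a product with a monomial can in principle collapse distinct terms together, and your observation that the multiplier is disjoint from $\vars{\leadingterm{\idpolyq}}$ (combined with the degree and product clauses of admissibility) is precisely what rules this out.
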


\ifthenelse{\boolean{conferenceversion}}{}{
\begin{proof}
  Let 
  $
  \monomialm_{\assignmenta}
  $ 
  be the product of all variables in~$\termt$ assigned
  by~$\assignmenta$
  and let
  $\alpha = \restrict{\monomialm_{\assignmenta}}{\assignmenta}$,
  where by assumption we have $\alpha \neq 0$.
  If there is a polynomial~$\idpolyq \in \ideali$
  such that
  $\leadingterm{\idpolyq} = 
  \restrict{\termt}{\assignmenta}$,
  then 
  $\alpha^{-1} \monomialm_{\assignmenta} \idpolyq \in \ideali$
  and
  $\leadingterm{\alpha^{-1} \monomialm_{\assignmenta} \idpolyq} = \termt$,
  contradicting that~$\termt$ is irreducible.  
\end{proof}
}

\subsection{Proof Strategy}

Let us now state the lemma on which we base the proof of
\refth{th:FVStructureTheorem}.

\begin{lemma}[\cite{Razborov98LowerBound}]
  \label{lem:PCDegreeLowerBounds}
  Let $\originalcnf$ be any CNF formula and
  $\degd \in \Nplus$ be a positive integer. 
  Suppose that there exists a linear operator $\redophead$ on
  multilinear polynomials over $\vars{\originalcnf}$
  with the following properties:
  \begin{enumerate}
  \item 
    \label{item:R-one}
    $\redop{1} \neq 0$.
  \item
    \label{item:R-two}
    $\redop{\polyax} = 0$ for 
    (the translations to polynomials of)
    all axioms $\polyax \in \originalcnf$.
  \item
    \label{item:R-three}
    For every term $\termt$ with $\mdegreeof{\termt} < \degd$ and
    every variable $\varx$ it holds that
    $\redop{\varx \termt} = \redop{\varx \redop{\termt}}$.
  \end{enumerate}
  Then any polynomial calculus refutation of~$\originalcnf$ 
  (and hence any PCR refutation  of~$\originalcnf$)
  requires degree
  strictly greater than $\degd$.
\end{lemma}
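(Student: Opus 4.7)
The plan is to argue by contradiction: suppose there is a polynomial calculus refutation $\pi = (\polyp_1, \ldots, \polyp_\stoptime)$ of $\originalcnf$ of degree at most~$\degd$, and derive a contradiction by showing that $\redop{\polyp_i} = 0$ for every line~$\polyp_i$ in~$\pi$. Since the final line is $\polyp_\stoptime = 1$, this would give $\redop{1} = 0$, contradicting property~\ref{item:R-one} of the lemma. Thus the refutation must have degree strictly greater than~$\degd$.

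To show $\redop{\polyp_i} = 0$ for all~$i$, I proceed by induction on~$i$, splitting the induction step according to which derivation rule produced~$\polyp_i$. For the base case, if $\polyp_i$ is a clause axiom of $\originalcnf$, then $\redop{\polyp_i} = 0$ directly by property~\ref{item:R-two}. The Boolean axioms $\varx^2 - \varx$ are identically zero in the multilinear polynomial ring, so $\redop{0} = 0$ follows from linearity of~$\redophead$. For the inductive step, if $\polyp_i = \alpha \polyq + \beta \polyr$ is obtained by linear combination from earlier lines~$\polyq$ and~$\polyr$, both of which have degree at most~$\degd$, then by linearity and the induction hypothesis $\redop{\polyp_i} = \alpha \redop{\polyq} + \beta \redop{\polyr} = 0$.

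The main case is multiplication: $\polyp_i = \varx \polyq$, where $\polyq$ appears earlier in the refutation and $\mdegreeof{\varx \polyq} \leq \degd$, so $\mdegreeof{\polyq} \leq \degd - 1 < \degd$. Expanding $\polyq = \sum_j \alpha_j \termt_j$ as a linear combination of monomials, each $\termt_j$ has degree strictly less than~$\degd$, so property~\ref{item:R-three} applies term by term. By linearity of~$\redophead$, I get
\begin{equation*}
  \redop{\varx \polyq}
  = \sum_j \alpha_j \redop{\varx \termt_j}
  = \sum_j \alpha_j \redop{\varx \redop{\termt_j}}
  = \redop{\varx \redop{\polyq}}
  \eqperiod
\end{equation*}
Since $\mdegreeof{\polyq} < \degd \leq \degd$, the induction hypothesis gives $\redop{\polyq} = 0$, and hence $\redop{\varx \polyq} = \redop{0} = 0$ as desired.

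I do not expect any serious obstacle in this argument; the proof is essentially a straightforward induction, and the three properties of~$\redophead$ are tailored precisely to make the three cases (axiom, linear combination, multiplication) go through. The only subtle point to be careful about is that property~\ref{item:R-three} is stated for a single variable~$\varx$ multiplied into a term~$\termt$ of degree strictly less than~$\degd$, which is exactly the setting that arises after expanding $\polyq$ into monomials and invoking the degree bound $\mdegreeof{\polyq} \leq \degd - 1$ coming from the assumed degree-$\degd$ refutation. The extension to PCR follows from the previously noted equality of PC and PCR refutation degree.
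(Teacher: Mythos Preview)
Your proof is correct and matches the paper's approach exactly: the paper gives only a brief sketch (axioms map to~$0$ by property~\ref{item:R-two}, derivation steps in degree at most~$\degd$ preserve this by property~\ref{item:R-three} and linearity, so property~\ref{item:R-one} rules out reaching contradiction), and you have simply fleshed out that sketch into a clean induction over the refutation.
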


\ifthenelse{\boolean{conferenceversion}}{}{The proof of
\reflem{lem:PCDegreeLowerBounds}
is not hard. The basic idea is that
$\redophead$ will map all axioms to~$0$ by 
property~\ref{item:R-two}, and further derivation steps in degree at
most~$\degd$ will yield polynomials that also map to~$0$ by 
property~\ref{item:R-three}
and the linearity of~$\redophead$. But then
property~\ref{item:R-one} implies that no derivation
in degree at most~$\degd$ can reach contradiction. }

To prove \refth{th:FVStructureTheorem}, we construct a linear operator
$\fvredophead$ that satisfies the conditions of \reflem{lem:PCDegreeLowerBounds}
when the \fvstructure $\fvstructurename$ is an expander. First, let us 
describe how we make the connection between polynomials and
the given \fvstructure.
We remark that
in the rest of this section we will identify a clause~$\clc$ 
with its polynomial translation and will
refer to $\clc$ as a (polynomial) axiom. 

\begin{definition}[Term and polynomial neighbourhood]
  \label{def:term-neighbourhood}
  The \introduceterm{neighbourhood} $\neighbor{\termt}$ of a term~$\termt$
  \wrt~$\fvstructurenot$  is 
  $
  \neighbor{\termt} = 
  \setdescr
  {\variablesv \in \vpartition}
  {\vars{\termt} \intersection \variablesv \neq \emptyset}
  $,
  \ie the family of all variable sets containing variables mentioned
  by~$\termt$.
  The neighbourhood of a polynomial 
  $\polyp = \sum_i \termt_i$ 
  is 
  $
  \neighbor{\polyp}
  =
  \Union_i \neighbor{\termt_i}
  $,
  \ie the union of   the neighbourhoods of all terms in~$\polyp$. 
\end{definition}

To every polynomial we can now assign a family of variable sets
$\vpartition'$.  But we are interested in the axioms that are needed
in order to produce that polynomial. That is, given a family of
variable sets~$\vpartition'$, we would like to identify the largest
set of axioms $\fpartition'$ that could possibly have been 
used in a derivation that yielded polynomials~$\polyp$ with
$\vars{\polyp} \subseteq \Union_{\variablesv \in \vpartition'}
\variablesv$.  This is the intuition behind the next definition.\footnote{We remark that \refdef{def:support} is a slight modification
  of the original definition of support in~\cite{AR03LowerBounds} that was
  proposed by Yuval
  Filmus~\cite{Filmus14OnAlekhnovichRazborovDegreeLowerBound}.} 

\begin{definition}[Polynomial support]
  \label{def:support}
  For a given \fvstructure and a family of variable sets
  $\vpartition' \subseteq \vpartition$, we say that a subset
  $\fpartition' \subseteq \fpartition$ is
  \introduceterm{\svcontainedstd{}}
  if 
  $\setsize{\fpartition'} \leq \expsize$
  and
  $\sboundary{\fpartition'} \subseteq  \vpartition'$.

  We define the
  \introduceterm{polynomial \ssupport{} $\support{\vpartition'}$ 
    of $\vpartition'$ \wrt~$\fvstructurenot$},
  or just
  \introduceterm{\ssupport{} of~$\vpartition'$} for brevity,
  to be
  the union of all \svcontainedstd
  subsets $\fpartition' \subseteq \fpartition$,
  and the \ssupport $\support{\termt}$
  of a term~$\termt$ is 
  defined to be
  the \ssupport of~$\neighbor{\termt}$.
\end{definition}

We will usually just speak about ``support'' below without further
qualifying this term, 
since the \fvstructure~$\fvstructurename$ 
will be clear from context.
The next observation follows immediately from
\refdef{def:support}.

\begin{observation}
  \label{obs:monotone-support}
  Support
  is monotone in the sense that
  if
  $\termt \subseteq \termt'$
  are two terms, then
  it holds that
  $\support{\termt}\subseteq \support{\termt'}$.
\end{observation}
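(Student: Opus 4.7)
The plan is to unpack the definitions and observe that each step in the chain defining $\support{\cdot}$ is monotone in the obvious way. Since $\termt \subseteq \termt'$ just means $\vars{\termt} \subseteq \vars{\termt'}$, the natural strategy is to first lift this containment to the neighbourhoods $\neighbor{\termt} \subseteq \neighbor{\termt'}$, and then show that the support operator is monotone in its variable-set argument.

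More concretely, I would argue as follows. First, directly from \refdef{def:term-neighbourhood}, any variable set $\variablesv \in \vpartition$ meeting $\vars{\termt}$ also meets the larger set $\vars{\termt'}$, so $\neighbor{\termt} \subseteq \neighbor{\termt'}$. Next, I would prove the following auxiliary monotonicity statement: if $\vpartition_1 \subseteq \vpartition_2 \subseteq \vpartition$, then $\support{\vpartition_1} \subseteq \support{\vpartition_2}$. This reduces to the observation that the notion of being \svcontained{} is itself monotone in~$\vpartition'$: if $\fpartition' \subseteq \fpartition$ is \mbox{$(\expsize, \vpartition_1)$}-contained, meaning $\setsize{\fpartition'} \leq \expsize$ and $\sboundary{\fpartition'} \subseteq \vpartition_1$, then trivially $\sboundary{\fpartition'} \subseteq \vpartition_2$ as well, so $\fpartition'$ is also $(\expsize, \vpartition_2)$-contained. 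Taking the union over all such subsets preserves the inclusion. Combining these two steps with the definition $\support{\termt} = \support{\neighbor{\termt}}$ gives the claim.

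There is no real obstacle here; the statement is essentially a bookkeeping exercise once one has carefully internalized the three-level definition (term, neighbourhood, support). The only mild subtlety is to verify that the \emph{\rsatrespectful} boundary $\sboundary{\fpartition'}$ does not itself depend on $\vpartition'$, so that enlarging $\vpartition'$ can only enlarge (not change) the collection of \svcontained{} subsets. This is immediate from \refdef{def:respectful-boundary}, where $\sboundary{\fpartition'}$ is defined purely in terms of~$\fpartition'$ and the ambient \fvstructure{}. I therefore expect the full proof to fit comfortably in a handful of lines.
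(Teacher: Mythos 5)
Your unfolding of the definitions is correct and is exactly the argument the paper has in mind: the paper states that the observation ``follows immediately from \refdef{def:support},'' and the two monotonicity steps you isolate (first $\neighbor{\termt} \subseteq \neighbor{\termt'}$, then monotonicity of \svcontainedadj{}ness and hence of $\support{\cdot}$ in the family of variable sets) are precisely what makes it immediate. Nothing is missing, and your side remark about $\sboundary{\fpartition'}$ depending only on $\fpartition'$ and the ambient structure is the right thing to double-check.
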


Once we have identified the axioms that are potentially involved in
deriving $\polyp$, we define the linear operator~$\fvredophead$ as
the
reduction
modulo the ideal generated by 
these
axioms as in \refdef{def:reduction-operator}. 
We will show that under the assumptions in
\refth{th:FVStructureTheorem} it holds that 
this operator
satisfies the conditions in
\reflem{lem:PCDegreeLowerBounds}.
Let us
first introduce some notation for
the set of all polynomials that can be generated from some axioms
$\fpartition' \subseteq \fpartition$.

\begin{definition}
  \label{def:I-E-F-ideal}
  For \afvstructure and
  $\fpartition' \subseteq \fpartition$,
  we write  $\spanofc$ to denote the ideal
  generated by the polynomial axioms in $\fpartition' \land
  \fixedconstraints$.  \footnote{That is, 
    $\spanofc$ 
    is the smallest set~$\ideali$ of multilinear polynomials that contains all
    axioms in
    $\fpartition' \land \fixedconstraints$
    and that is closed under addition
    of 
    $\polyp_1, \polyp_2 \in \ideali$
    and by multiplication of
    $\polyp  \in \ideali$
    by any multilinear polynomial over 
    $\vars{\fpartition \land \fixedconstraints}$
    (where as before the resulting product is implicitly
    multilinearized).}
\end{definition}
 
\begin{definition}[    \fvstructure 
    reduction]\label{def:FVGraphOperator}
  For \afvstructure $\fvstructurename$, the 
  \introduceterm{\fvstructure reduction $\fvredophead$}
  on a term $\termt$ is defined as
  $\fvredop{\termt} = \redop[{\spanofc[\support{\termt}]}]{\termt}$.
  For a polynomial $\polyp$, 
  we define $\fvredop{\polyp}$ to be
  the linear
  extension of the operator $\fvredophead$ defined on terms.
\end{definition}

\ifthenelse{\boolean{conferenceversion}}{}{  Looking at \refdef{def:FVGraphOperator}, 
  it is not clear that we are making progress.
  On the one hand, we have defined
  $\fvredophead$
  in terms of standard reduction operators modulo ideals, which is nice
  since there is a well-developed machinery for such operators. On the
  other hand, it is not clear how to actually compute
  using~$\fvredophead$. The problem is that if we look at a polynomial
  $\polyp = \sum_i \termt_i$
  and want to compute
  $\fvredop{\polyp}$,
  then as we expand
  $\fvredop{\polyp} = \sum_i \fvredop{\termt_i}$
  we end up reducing terms in one and the same polynomial modulo
  a~priori completely different ideals.  
  How can we get any sense of what
  $\polyp$ reduces to in such a case?
  The answer is that
  if our
  \fvstructure
  is a good enough expander, then this is not an issue at all.
  Instead, it turns out that we can
  pick a suitably large ideal containing the support of
  all the terms in~$\polyp$ and reduce~$\polyp$ modulo this larger ideal
  instead without changing anything. This key result is proven in
  \reflem{lem:StrongClosureLemma}
  below. 
  To establish this lemma, we need to develop a better understanding
  of polynomial support.
}

\subsection{Some Properties of Polynomial Support}

A crucial technical property that we will need 
is that if  \afvstructure is a good expander in the sense of
\refdef{def:F-V-boundary-expansion},
then
for small enough sets~$\vpartition'$ all
\svcontainedstd 
subsets
$\fpartition' \subseteq \fpartition$ 
as per \refdef{def:support}
are of at most half of the allowed size.

\begin{lemma}\label{lem:SizeOfRelevantSet}
  Let $\fvstructurenot$ be an
  \rboundaryexpandershortstd
  and let
  $\vpartition' \subseteq \vpartition$ 
  be such that
  $\setsize{\vpartition'} \leq \expfactor \expsize / 2 - \expslack$.
  Then it holds that 
  every \svcontainedstd subset 
  $\fpartition' \subseteq \fpartition$ 
  is in fact
  \shalfvcontainedstd.
\end{lemma}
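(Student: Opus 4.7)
The plan is to chain together two size inequalities and let them collide. Suppose $\fpartition' \subseteq \fpartition$ is $\svcontainedstd$; I must show $\setsize{\fpartition'} \leq \expsize/2$.

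First I would unpack the hypothesis of being $\svcontainedstd$: by \refdef{def:support} this means $\setsize{\fpartition'} \leq \expsize$ and $\sboundary{\fpartition'} \subseteq \vpartition'$. In particular the size bound $\setsize{\fpartition'} \leq \expsize$ is exactly what is needed to apply the boundary expansion hypothesis of \refdef{def:F-V-boundary-expansion}, which yields
\begin{equation*}
\setsize{\sboundary{\fpartition'}} \geq \expfactor \setsize{\fpartition'} - \expslack \eqperiod
\end{equation*}

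Next I would use the containment $\sboundary{\fpartition'} \subseteq \vpartition'$ together with the assumed bound on $\vpartition'$ to get the opposite inequality
\begin{equation*}
\setsize{\sboundary{\fpartition'}} \leq \setsize{\vpartition'} \leq \expfactor \expsize / 2 - \expslack \eqperiod
\end{equation*}

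Combining the two displays, the additive slack $\expslack$ cancels and one is left with $\expfactor \setsize{\fpartition'} \leq \expfactor \expsize / 2$, hence $\setsize{\fpartition'} \leq \expsize/2$, which together with $\sboundary{\fpartition'} \subseteq \vpartition'$ shows that $\fpartition'$ is $\shalfvcontainedstd$. There is no real obstacle here: the lemma is essentially the observation that the expansion inequality turns an upper bound on the boundary (coming from $\vpartition'$ being small) into an upper bound on the set itself, with the factor $1/2$ arising precisely because we have allowed ourselves only half of the available ``boundary budget'' $\expfactor \expsize - \expslack$ up to the additive loss $\expslack$.
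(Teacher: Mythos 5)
Your proof is correct and follows essentially the same route as the paper's: apply the expansion bound to get a lower bound on $\setsize{\sboundary{\fpartition'}}$, use $\sboundary{\fpartition'} \subseteq \vpartition'$ together with the assumed size bound on $\vpartition'$ to get an upper bound, and combine (using implicitly that $\expfactor > 0$) to conclude $\setsize{\fpartition'} \leq \expsize/2$. There is no meaningful difference in the argument.
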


\begin{proof}
  As $\setsize{\fpartition'} \leq \expsize$
  we can 
  appeal to the expansion property of the \fvstructure
  to derive the inequality
  $\setsize{\sboundary{\fpartition'}} \geq \expfactor
  \setsize{\fpartition'} - \expslack$.
  In the other direction, we can obtain an upper bound on the size of
  $\sboundary{\fpartition'}$ by noting that for any 
  \svcontainedstd set
  $\fpartition'$ it holds that
  $\setsize{\sboundary{\fpartition'}} \leq \setsize{\vpartition'}$.
  If we combine these bounds
  and use the assumption that
  $\setsize{\vpartition'} \leq \expfactor \expsize / 2 - \expslack$,
  we can conclude that
  $\setsize{\fpartition'} \leq \expsize / 2$, which proves that 
  $\fpartition'$ 
  is
  \shalfvcontainedstd.
\end{proof}

Even more importantly, 
\reflem{lem:SizeOfRelevantSet}
now allows us to conclude that for a small enough
subset~$\vpartition'$ on the right-hand side of~$\fvstructurenot$
it holds that in fact the whole 
polynomial \ssupport~$\support{\vpartition'}$ 
of $\vpartition'$ on the left-hand side 
is \shalfvcontainedstd.

\begin{lemma}
  \label{lem:MainPropertyOfSupport}
  Let $\fvstructurenot$ be an
  \rboundaryexpandershortstd
  and let
  $\vpartition' \subseteq \vpartition$
  be such that
  $\setsize{\vpartition'} \leq \expfactor \expsize / 2 - \expslack$.
  Then the \ssupport
  $\support{\vpartition'}$ 
  of $\vpartition'$ 
  \wrt $\fvstructurenot$
  is
  \shalfvcontainedstd.
\end{lemma}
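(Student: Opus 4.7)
The plan is to show that the collection of $(\expsize, \vpartition')$-contained subsets is closed under pairwise unions, so that the union of all such subsets—which is exactly $\support{\vpartition'}$—is itself $(\expsize, \vpartition')$-contained. A final application of \reflem{lem:SizeOfRelevantSet} will then upgrade this to \shalfvcontainedstd{}.

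The first step is to establish the closure property: if $\fpartition_1, \fpartition_2 \subseteq \fpartition$ are both \svcontainedstd{}, then so is $\fpartition_1 \cup \fpartition_2$. The size bound $\setsize{\fpartition_1 \cup \fpartition_2} \leq \expsize$ follows from \reflem{lem:SizeOfRelevantSet}, which already ensures $\setsize{\fpartition_i} \leq \expsize/2$ for $i = 1, 2$. The heart of the argument is the containment
\begin{equation*}
\sboundary{\fpartition_1 \cup \fpartition_2} \subseteq \sboundary{\fpartition_1} \cup \sboundary{\fpartition_2} \eqperiod
\end{equation*}
Indeed, suppose $\variablesv \in \sboundary{\fpartition_1 \cup \fpartition_2}$: then $\variablesv$ is an \rneighbour{} of some unique $\formf \in \fpartition_1 \cup \fpartition_2$, say $\formf \in \fpartition_1$, and is not a neighbour (respectful or not) of any other element of $\fpartition_1 \cup \fpartition_2$. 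In particular, $\variablesv$ has no other neighbour in $\fpartition_1$, so $\variablesv \in \sboundary{\fpartition_1} \subseteq \vpartition'$. Hence $\sboundary{\fpartition_1 \cup \fpartition_2} \subseteq \vpartition'$, which establishes $(\expsize, \vpartition')$-containment of the union.

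Iterating the closure property over the finitely many \svcontainedstd{} subsets of $\fpartition$ shows that $\support{\vpartition'}$, being by definition their union, is itself \svcontainedstd{}. Applying \reflem{lem:SizeOfRelevantSet} one more time yields that $\support{\vpartition'}$ is in fact \shalfvcontainedstd{}, completing the proof.

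The only subtle point is the boundary computation, and in particular the fact that the definition of \rboundary excludes \emph{all} other neighbours—both \rsatrespectful and \rsatdisrespectful—rather than just the respectful ones. Without this, the step "no other neighbour in $\fpartition_1 \cup \fpartition_2$ implies no other neighbour in $\fpartition_1$" would not immediately yield membership in $\sboundary{\fpartition_1}$, and the closure argument would break down.
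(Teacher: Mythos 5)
Your proof is correct and takes essentially the same approach as the paper's: show that \svcontainedstd{} sets are closed under pairwise union (size bound from \reflem{lem:SizeOfRelevantSet}, boundary bound from the union only shrinking the boundary), iterate to cover all such sets, and apply \reflem{lem:SizeOfRelevantSet} once more at the end. The only difference is that you spell out the containment $\sboundary{\fpartition_1 \cup \fpartition_2} \subseteq \sboundary{\fpartition_1} \cup \sboundary{\fpartition_2}$ explicitly and flag the reliance on the \rboundary{} excluding \emph{all} other neighbours, whereas the paper compresses this into the remark that taking a union can only shrink the boundary.
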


\begin{proof}
  We show that for any pair of \svcontainedstd sets
  $\fpartition_1, \fpartition_2 \subseteq \fpartition$
  their union 
  $\fpartition_1 \union \fpartition_2$
  is also \svcontainedstd.
  First, by \reflem{lem:SizeOfRelevantSet} 
  we have
  $\setsize{\fpartition_1}, \setsize{\fpartition_2} \leq \expsize / 2$
  and hence
  $\setsize{\fpartition_1 \union \fpartition_2} \leq
  \expsize$.
  Second, 
  it holds that
  $\sboundary{\fpartition_1}, \sboundary{\fpartition_2} \subseteq
  \vpartition'$,
  which implies that
  $\sboundary{\fpartition_1 \union \fpartition_2} \subseteq
  \vpartition'$,
  because taking the union of two sets can only shrink the
  boundary. 
  This establishes that
  $\fpartition_1 \union \fpartition_2$
  is \svcontainedstd.

  By induction on the number of \svcontainedstd sets we can conclude
  that the support
  $\support{\vpartition'}$ is \svcontainedstd as well, 
  after which one final
  application of \reflem{lem:SizeOfRelevantSet} shows that 
  this set
  is \shalfvcontainedstd. This completes the proof.
\end{proof}

What the next lemma says is, roughly, that 
if we reduce a term~$\termt$ modulo an ideal generated by a 
not too large
set of polynomials containing some polynomials
outside of the
support of~$\termt$, then we can remove
all such polynomials from the generators of the  ideal
without changing the irreducible component of~$\termt$.

\begin{lemma}
  \label{lem:MainLemmaStrongClosure}
  Let $\fvstructurename$ be \afvstructure and let $\termt$ be any
  term.  Suppose that $\fpartition' \subseteq \fpartition$ is such
  that $\fpartition' \supseteq \support{\termt}$ and
  $\setsize{\fpartition'} \leq \expsize$. Then for any term $\termt'$
  with
  $\neighbor{\termt'} \subseteq \neighbor{\support{\termt}} \union
  \neighbor{\termt}$
  it holds that if $\termt'$ is reducible modulo $\spanofc$, it is
  also reducible modulo $\spanofc[\support{\termt}]$.
\end{lemma}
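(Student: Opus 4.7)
The plan is to prove the lemma by a restriction argument. Suppose $\termt'$ is reducible modulo $\spanofc$, so that there is a polynomial $\polyp \in \spanofc$ with $\leadingterm{\polyp} = \termt'$. I will construct a partial assignment $\rho$ enjoying the following properties: (i)~$\rho$ satisfies every axiom in $\fpartition' \setminus \support{\termt}$; (ii)~$\rho$ respects $\fixedconstraints$ and leaves every axiom in $\support{\termt}$ untouched; and (iii)~$\domainof{\rho} \intersection \vars{\termt'} = \emptyset$. Given such a $\rho$, the restricted polynomial $\restrict{\polyp}{\rho}$ lies in $\spanofc[\support{\termt}]$, because the axioms in $\fpartition' \setminus \support{\termt}$ are killed while those in $\support{\termt} \union \fixedconstraints$ are either satisfied (and vanish) or left unchanged, so $\restrict{\polyp}{\rho}$ is in the ideal generated by $\support{\termt} \union \fixedconstraints$. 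Since $\rho$ does not touch $\vars{\termt'}$ and the monomial ordering is admissible (so restriction can only decrease terms or map them to~$0$), $\termt'$ remains the leading term of $\restrict{\polyp}{\rho}$, witnessing reducibility of $\termt'$ modulo $\spanofc[\support{\termt}]$.

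To build $\rho$, I use a peeling procedure on $\fpartition'' = \fpartition' \setminus \support{\termt}$. I maintain a subset $\mathcal{G} \subseteq \fpartition''$ of axioms still to be satisfied and at each step look for a variable set $V \in \sboundary{\support{\termt} \union \mathcal{G}}$ whose unique respectful neighbour $\formf$ lies in $\mathcal{G}$ and for which $V \intersection \vars{\termt'} = \emptyset$. A witnessing assignment on $V$ that satisfies $\formf$ and respects $\fixedconstraints$ (which exists because $V$ is a respectful neighbour of $\formf$) is added to $\rho$, after which $\formf$ is removed from $\mathcal{G}$. Since $V$ is not a neighbour of any other axiom in $\support{\termt} \union \mathcal{G}$, this assignment leaves the remaining axioms of $\support{\termt} \union \mathcal{G}$ completely untouched.

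Existence of a suitable pair $(\formf, V)$ follows by combining the expansion of the \fvstructure with the maximality of the support. Since $\setsize{\support{\termt} \union \mathcal{G}} \leq \setsize{\fpartition'} \leq \expsize$, the \rboundary of $\support{\termt} \union \mathcal{G}$ is nonempty by the expansion hypothesis. The maximality of $\support{\termt}$ as the union of all \svcontainedstd subsets with $\vpartition' = \neighbor{\termt}$ implies that when $\mathcal{G} \neq \emptyset$ some $V \in \sboundary{\support{\termt} \union \mathcal{G}}$ must lie outside $\neighbor{\termt}$; otherwise $\support{\termt} \union \mathcal{G}$ would itself be \svcontainedstd, forcing $\mathcal{G} \subseteq \support{\termt}$ and contradicting $\mathcal{G} \intersection \support{\termt} = \emptyset$. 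For any such $V$, its unique respectful neighbour cannot lie in $\support{\termt}$, because otherwise $V$ would belong to $\sboundary{\support{\termt}}$, which is contained in $\neighbor{\termt}$ by the \svcontainedstd property of $\support{\termt}$, contradicting $V \notin \neighbor{\termt}$. Hence the unique neighbour lies in $\mathcal{G}$, and moreover $V \notin \neighbor{\support{\termt}}$. Combined with $V \notin \neighbor{\termt}$ and the hypothesis $\neighbor{\termt'} \subseteq \neighbor{\support{\termt}} \union \neighbor{\termt}$, this forces $V \notin \neighbor{\termt'}$, i.e., $V \intersection \vars{\termt'} = \emptyset$, as required.

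The main obstacle I anticipate is ensuring that the partial assignments chosen in successive peeling steps remain mutually consistent: although each new $V$ is disjoint from the variables of axioms in $\support{\termt} \union \mathcal{G}$ currently remaining, it may share variables with already-peeled axioms and hence with previously chosen variable sets. I plan to handle this either by restricting the formula by the current assignment between peeling steps and observing that the peeling invariants are preserved under respectful restrictions, or by strengthening the search at each step so that the new $V$ also avoids $\domainof{\rho}$, using the slack afforded by the expansion bound. Once $\rho$ is consistently constructed, the restriction argument of the first paragraph delivers the desired polynomial in $\spanofc[\support{\termt}]$ with leading term $\termt'$, completing the proof.
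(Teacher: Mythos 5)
Your overall strategy matches the paper's: find a variable set $\variablesv$ in the respectful boundary that is not a neighbour of $\termt'$, whose unique respectful neighbour $\formf$ lies outside $\support{\termt}$, and then use a respectful assignment to $\variablesv$ to eliminate $\formf$ while preserving the leading term $\termt'$. Your argument for the existence of such a pair $(\formf,\variablesv)$ is essentially correct (though the appeal to ``the expansion hypothesis'' is extraneous: the lemma does not assume expansion, and the maximality of $\support{\termt}$ already forces some boundary vertex of $\support{\termt}\union\mathcal{G}$ to lie outside $\neighbor{\termt}$, since otherwise $\support{\termt}\union\mathcal{G}$ would be \svcontained{\expsize}{\neighbor{\termt}} and hence contained in $\support{\termt}$).

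The genuine gap is the one you flag yourself at the end: you attempt to assemble a \emph{single} global partial assignment $\rho$ that satisfies all of $\fpartition'\setminus\support{\termt}$ at once. But $\vpartition$ is explicitly allowed to have overlapping variable sets (that is the whole point of the \voverlapterm{} parameter $\setdegd$), so the sets $\variablesv$ selected in successive peeling rounds may share variables, and the per-round witnessing assignments may contradict one another on the intersection. Neither of your suggested repairs clearly works: ``restricting the formula between steps'' changes the \fvstructure{} itself, so the boundaries, supports and the \rsatrespectful{}-neighbour relation would all have to be re-established from scratch; and insisting that each new $\variablesv$ avoid $\domainof{\rho}$ is not something the maximality/boundary argument gives you. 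The paper sidesteps the problem entirely by never composing the restrictions: it removes exactly one $\formf$ at a time, passing from $\polyp$ to the \emph{new} polynomial $\restrict{\polyp}{\assignmenta}\in\spanofc[\fpartition'\setminus\set{\formf}]$, which still has leading term $\termt'$, and then reruns the argument inductively on the smaller index set $\fpartition'\setminus\set{\formf}$ with a fresh (and possibly inconsistent with the previous) restriction. Because each induction step starts from a new polynomial combination, the restrictions never have to agree with one another. To fix your write-up, replace the global construction of $\rho$ by exactly this one-step induction on $\setsize{\fpartition'\setminus\support{\termt}}$; all the pieces you already have (choice of $(\formf,\variablesv)$, respectfulness, preservation of $\termt'$ as leading term) then slot in cleanly.
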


\begin{proof}
  If  $\fpartition'$ is \svcontainedneighbourt, then  by
  \refdef{def:support} it holds that
  $\fpartition' \subseteq \support{\termt}$
  and there is nothing to prove.
  Hence, assume $\fpartition'$ is not \svcontainedneighbourt.
  We claim that this implies that  we can find
  a subformula~$\formf \in \fpartition' \setminus \support{\termt}$ 
  with a neighbouring subset of variables
  \mbox{$ \variablesv \in \bigl( \sboundary{\fpartition'} \intersection
    \neighbor{\formf} \bigr) \setminus \neighbor{\termt'}$}
  in the \rsatrespectful boundary of~$\fpartition'$ but not in the
  neighbourhood of~$\termt'$.
  To argue this, note that since $\setsize{\fpartition'} \leq \expsize$
  it follows from \refdef{def:support} that
  the reason $\fpartition'$ is not \svcontainedneighbourt is that
  there  exist some $\formf \in \fpartition'$
  and some set of variables $\variablesv \in \neighbor{\formf}$ such
  that
  $\variablesv \in \sboundary{\fpartition'} \setminus
  \neighbor{\termt}$.
  Moreover, the assumption $\fpartition' \supseteq \support{\termt}$
  implies that such an~$\formf$ cannot be in $\support{\termt}$.
  Otherwise there would exist an \sntcontained set 
  $\fpartition^*$
  such that
  $
  \formf \in \fpartition^* 
  \subseteq \support{\termt} 
  \subseteq \fpartition'
  $,
  from which it would follow that
  $
  \variablesv   \in 
  \sboundary{\fpartition'} \intersection \neighbor{\fpartition^*}
  \subseteq \sboundary{\fpartition^*}
  \subseteq \neighbor{\termt}
  $,
  contradicting
  $\variablesv \notin \neighbor{\termt}$.
  We have shown that
  $\formf \notin \support{\termt} \subseteq \fpartition'$
  and
  $\variablesv \in \sboundary{\fpartition'} \intersection
  \neighbor{\formf}$,
  and by combining these two facts
  we can also deduce that
  $\variablesv \notin \neighbor{\support{\termt}}$,
  since otherwise $\variablesv$~could not be contained in the
  boundary of~$\fpartition'$. 
  In particular, this means that
  $
  \variablesv \notin
  \neighbor{\termt'} \subseteq \neighbor{\support{\termt}} \union
  \neighbor{\termt}
  $,
  which establishes the claim made above.

  Fixing
  $\formf$ and $\variablesv$ such that
  $\formf \in \fpartition' \setminus \support{\termt}$ and
  $ \variablesv \in \bigl( \sboundary{\fpartition'} \intersection
  \neighbor{\formf} \bigr) \setminus \neighbor{\termt'}$,
  our second claim is that if $\formf$ is removed from the generators
  of the ideal, 
  it still holds that if
  $\termt'$ is reducible modulo~$\spanofc$, then this term is also
  reducible modulo 
  $\spanofc[\fpartition' \setminus \set{\formf}]$.  
  Given this second claim we are done, since
  we can then argue by induction over the elements in
  $\fpartition' \setminus \support{\termt}$ and remove them one by one
  to arrive at the conclusion that every term $\termt'$ with
  $\neighbor{\termt'} \subseteq \neighbor{\support{\termt}} \union
  \neighbor{\termt}$
  that is reducible modulo $\spanofc$ is also reducible modulo
  $\spanofc[\support{\termt}]$, which is precisely what the lemma says.

  We proceed to establish this second claim.
    The assumption that $\termt'$ is reducible modulo
  $\spanofc[\fpartition']$ means that there exists a
  polynomial
  \mbox{$\polyp \in\spanofc$} such that
  $\termt' = \leadingterm{\polyp}$.  Since $\polyp$ is in the
  ideal~$\spanofc$ it can be written as a polynomial combination
  $\polyp = \sum_{i} \polyp_{i} \polyax_{i}$ of axioms
  $\polyax_{i} \in \fpartition' \land \fixedconstraints$ for some
  polynomials $\polyp_{i}$.  If we could hit $\polyp$ with a
  restriction that satisfies (and hence removes)~$\formf$ while
  leaving $\termt'$ and
  $(\fpartition' \setminus \set{\formf}) \land \fixedconstraints$
  untouched, this would show that $\termt'$ is the leading term of
  some polynomial combination of axioms in
  $(\fpartition' \setminus \set{\formf}) \land \fixedconstraints$.
  This is almost what we are going to do.
  
  As our 
  restriction
  $\assignmenta$
  we choose any assignment
  with domain $\domainof{\assignmenta} = \variablesv$
  that \rsatisfies $\formf$. Note that  at least
  one such assignment exists since 
  $\variablesv \in \sboundary{\fpartition'}
  \intersection \neighbor{\formf}$
  is \anrneighbour of~$\formf$
  by \refdef{def:respectful-boundary}. 
  By the choice of $\assignmenta$ it holds that $\formf$ is
  satisfied, \ie that all axioms in $\formf$ are set to~$0$.
  Furthermore, 
  none of the axioms in $\fpartition' \setminus \set{\formf}$ are
  affected by~$\assignmenta$ since $\variablesv$ is in the 
  boundary of~$\fpartition'$.  \footnote{Recalling the remark after \refdef{def:f-v-graph},
    we note that we can ignore here if $\assignmenta$ happens to
    falsify axioms in $\fpartition \setminus \fpartition'$.}
  As for axioms in $\fixedconstraints$ it is not necessarily true that
  $\assignmenta$~will leave all of them untouched, but by assumption
  $\assignmenta$ respects~$\fixedconstraints$ and so any axiom in
  $\fixedconstraints$ is either satisfied (and zeroed out)
  by~$\assignmenta$ or is left  intact. 
  It follows that
  $\restrict{\polyp}{\assignmenta}$ can be be written as a polynomial
  combination 
  $\restrict{\polyp}{\assignmenta}
  = \sum_{i} 
  \bigl( \restrict{\polyp_{i}}{\assignmenta} \bigr) \polyax_{i}$, where
  $\polyax_{i} \in (\fpartition' \setminus \set{\formf}) \land
  \fixedconstraints$, 
  and hence
  $\restrict{\polyp}{\assignmenta} \in 
  \spanofc[\fpartition' \setminus \set{\formf}]$.

  To see that $\termt'$ is preserved as the leading term of
  $\restrict{\polyp}{\assignmenta}$, note that $\assignmenta$ does not
  assign any variables in $\termt'$ since
  $\variablesv \notin \neighbor{\termt'}$.
  Hence, $\termt' = \leadingterm{\restrict{\polyp}{\assignmenta}}$, as
  $\assignmenta$ can only make the other terms smaller \wrt~$\mlt$.
  This shows that there is a polynomial
  $\polyp' = \restrict{\polyp}{\assignmenta} \in \spanofc[\fpartition'
  \setminus \set{\formf}]$
  with $\leadingterm{\polyp'} = \termt'$, and hence $\termt'$ is
  reducible modulo $\spanofc[\fpartition' \setminus \set{\formf}]$.
  The lemma follows.
\end{proof}

We need to deal with one more detail before we can prove the key
technical lemma that it is possible to reduce modulo suitably chosen
\ifthenelse{\boolean{conferenceversion}}{  larger ideals without changing the reduction operator.
  We refer to the upcoming full-length version for the proof
  of the next lemma.}
{  larger ideals without changing the reduction operator,
  namely (again roughly speaking) that reducing a term modulo an ideal
  does not introduce any new variables outside of the generators of
  that ideal.} 

\begin{lemma}
  \label{lem:ContainmentLemma}
  Suppose that
  $\fpartition^* \subseteq \fpartition$ 
  for some \fvstructure 
  and let
  $\termt$
  be any term. Then it holds that
  $\Neighbor{\redop[{\spanofc[\fpartition^*]}]{t}} \subseteq
  \neighbor{\fpartition^*} \union \neighbor{t}$.
\end{lemma}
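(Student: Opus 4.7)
The plan is to prove the contrapositive: for every $\variablesv \in \vpartition$ with $\variablesv \notin \neighbor{\fpartition^*} \cup \neighbor{t}$, I will show $\variablesv \cap \vars{r} = \emptyset$, where $r \defeq \redop[{\spanofc[\fpartition^*]}]{t}$. By \refdef{def:f-v-graph}, the hypothesis on $\variablesv$ means that $\variablesv$ is disjoint from both $\vars{\fpartition^*}$ and $\vars{t}$. The key tool will be an autarky: since $\variablesv \in \vpartition$, by \refdef{def:f-v-graph} there exists an assignment $\assignmenta$ with $\domainof{\assignmenta} = \variablesv$ that respects $\fixedconstraints$.

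The main step is to establish that $\restrict{r}{\assignmenta} = r$. Starting from $t - r \in \spanofc[\fpartition^*]$ and restricting by $\assignmenta$, I first note that $\restrict{t}{\assignmenta} = t$ since $\variablesv \cap \vars{t} = \emptyset$. Next, restrictions of ideal elements remain in the ideal: writing $t - r = \sum_i q_i g_i$ with $g_i \in \fpartition^* \land \fixedconstraints$, each generator $g_i \in \fpartition^*$ is unaffected by $\assignmenta$ (its variables are disjoint from $\variablesv$), while each $g_i \in \fixedconstraints$ is either unaffected or sent to the zero polynomial by the autarky $\assignmenta$. Hence $t - \restrict{r}{\assignmenta} \in \spanofc[\fpartition^*]$. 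To apply the uniqueness in \reffact{fact:unique-sum}, I also need $\restrict{r}{\assignmenta}$ to be a linear combination of terms irreducible modulo $\spanofc[\fpartition^*]$, which follows by applying \refobs{obs:restrictions-preserve-irreducibility} termwise to the (already irreducible) expansion of $r$. The uniqueness clause then forces $\restrict{r}{\assignmenta} = r$.

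Finally, since $r$ is multilinear and $\assignmenta$ substitutes field values for every variable in $\variablesv$, the equality $\restrict{r}{\assignmenta} = r$ implies $\variablesv \cap \vars{r} = \emptyset$: otherwise, picking a multilinear monomial in $r$ that mentions some $x \in \variablesv$ and comparing its coefficient on the two sides yields an immediate contradiction, because $\restrict{r}{\assignmenta}$ has no monomial containing $x$. This completes the plan.

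The main subtlety is to control generators coming from $\fixedconstraints$, which may well share variables with $\variablesv$ even though $\variablesv$ lies outside both $\neighbor{\fpartition^*}$ and $\neighbor{t}$. This is exactly where the \rsatrespectful condition built into \refdef{def:f-v-graph} pays off: the autarky $\assignmenta$ is tailor-made to either leave each clause of $\fixedconstraints$ untouched or to satisfy it outright, so restriction by $\assignmenta$ cannot escape the ideal $\spanofc[\fpartition^*]$. Without this autarky property one could not hope to ``freeze'' the $\fixedconstraints$ part while eliminating $\variablesv$ from $r$, and the lemma would fail.
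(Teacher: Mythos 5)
Your proof is correct and follows essentially the same route as the paper's: you fix the autarky $\assignmenta$ on $\variablesv$, restrict the unique decomposition $t = q + r$ with $q \in \spanofc[\fpartition^*]$, observe that the restriction fixes $t$, keeps $\restrict{q}{\assignmenta}$ in the ideal (because $\assignmenta$ is untouched on $\fpartition^*$ and either kills or leaves alone each clause of $\fixedconstraints$), preserves irreducibility of the terms of $r$, and then invokes the uniqueness in \reffact{fact:unique-sum} to conclude $\restrict{r}{\assignmenta} = r$. The only cosmetic difference is that you unfold the ideal membership via an explicit $\sum_i q_i g_i$ representation, whereas the paper states the same fact more tersely.
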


\ifthenelse{\boolean{conferenceversion}}{}{
\begin{proof}
  Let
  $\polyp = \redop[{\spanofc[\fpartition^*]}]{t}$ 
  be the polynomial obtained when reducing~$\termt$
  modulo~$\spanofc[\fpartition^*]$ 
  and let 
  $\variablesv \in \vpartition$ be any set such that
  $\variablesv \notin \neighbor{\fpartition^*} \union
  \neighbor{\termt}$. 
  We show that 
  $\variablesv \notin \neighbor{\polyp}$.

  By the definition of \fvstructure{}s there exists an
  assignment
  $\assignmenta$ to all of the variables in~$\variablesv$ that
  respects~$\fixedconstraints$.
  Write  
  $\termt = \polyq + \polyp$ 
  with
  $\polyq \in \spanofc[\fpartition^*]$
  and
  $\polyp$ a linear combination of irreducible monomials 
  as in
  \reffact{fact:unique-sum}
  and apply the restriction $\assignmenta$ to
    this equality.
  Note that
  $\restrict{\termt}{\assignmenta} = \termt$
  as
  $\variablesv$ is not a neighbour of $\termt$. Moreover,
  $\restrict{\polyq}{\assignmenta}$ is in the ideal
  $\spanofc[\fpartition^*]$ because $\assignmenta$ does not set any
  variables in $\fpartition^*$ and every axiom in $\fixedconstraints$
  sharing variables with $\variablesv$ is set to $0$ by
  $\assignmenta$. Thus, $\termt$ can be written as
  $\termt = \polyq' + \restrict{\polyp}{\assignmenta}$,
  with
  $\polyq' \in \spanofc[\fpartition^*]$. As all terms in $\polyp$ 
  are
  irreducible modulo $\spanofc[\fpartition^*]$, they remain irreducible
  after restricting $\polyp$ by $\assignmenta$
  by \refobs{obs:restrictions-preserve-irreducibility}. 
  Hence, it follows that
  $\restrict{\polyp}{\assignmenta} = \polyp$ by the uniqueness 
  in  \reffact{fact:unique-sum}
  and $\polyp$ cannot contain any variable from $\variablesv$.
  This in turn implies that every set
  $\variablesv \in \neighbor{\polyp}$ is 
  contained in
  $\neighbor{\fpartition^*} \union \neighbor{\termt}$.
\end{proof}
}

Now we can state the formal claim that enlarging the ideal does not
change the reduction operator if the enlargement is done in the right
way.

\begin{lemma}
  \label{lem:StrongClosureLemma}
  Let $\fvstructurename$ be  \afvstructure 
  and let $\termt$ be any term.
  Suppose that
  $\fpartition' \subseteq \fpartition$
  is such that
  $\fpartition' \supseteq \support{\termt}$ and
  $\setsize{\fpartition'} \leq \expsize$.
  Then  it holds that
  $\redop[\spanofc]{\termt} =
  \redop[{\spanofc[\support{\termt}]}]{\termt}$.
\end{lemma}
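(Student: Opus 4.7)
The plan is to combine \reflem{lem:MainLemmaStrongClosure} with \reflem{lem:ContainmentLemma} to show that the irreducible remainder of $\termt$ modulo the smaller ideal is already irreducible modulo the larger ideal, which immediately yields the claimed equality.

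In more detail, I would first apply \reffact{fact:unique-sum} to write
$\termt = \polyq + \polyp$
with $\polyq \in \spanofc[\support{\termt}]$ and $\polyp = \redop[{\spanofc[\support{\termt}]}]{\termt}$ a linear combination of irreducible terms modulo $\spanofc[\support{\termt}]$. Since $\support{\termt} \subseteq \fpartition'$ implies $\spanofc[\support{\termt}] \subseteq \spanofc$, we have $\polyq \in \spanofc$, so $\redop[\spanofc]{\polyq} = 0$. By linearity of the reduction operator it therefore suffices to prove that $\redop[\spanofc]{\polyp} = \polyp$, i.e., that every term $\termt'$ appearing in $\polyp$ is also irreducible modulo the larger ideal $\spanofc$.

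To verify this, fix any term $\termt'$ of $\polyp$. By \reflem{lem:ContainmentLemma} applied with $\fpartition^* = \support{\termt}$ we obtain
$\neighbor{\termt'} \subseteq \neighbor{\polyp} \subseteq \neighbor{\support{\termt}} \union \neighbor{\termt}$,
which is exactly the hypothesis needed to invoke \reflem{lem:MainLemmaStrongClosure}. That lemma tells us that if $\termt'$ were reducible modulo $\spanofc$, then it would also be reducible modulo $\spanofc[\support{\termt}]$. But $\termt'$ was chosen as an irreducible term of $\polyp = \redop[{\spanofc[\support{\termt}]}]{\termt}$, giving a contradiction. Hence each $\termt'$ is irreducible modulo $\spanofc$, so $\redop[\spanofc]{\polyp} = \polyp$ and
$\redop[\spanofc]{\termt} = \redop[\spanofc]{\polyq} + \redop[\spanofc]{\polyp} = \polyp = \redop[{\spanofc[\support{\termt}]}]{\termt}$,
as desired.

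The main conceptual obstacle has already been overcome in the preceding lemmas: \reflem{lem:MainLemmaStrongClosure} is the nontrivial ingredient, showing that extra generators lying outside $\support{\termt}$ cannot introduce new reductions for terms whose neighbourhoods are controlled by $\support{\termt}$, and \reflem{lem:ContainmentLemma} supplies exactly the neighbourhood control required. Given these, the remaining argument is essentially a clean bookkeeping step using the uniqueness of the decomposition in \reffact{fact:unique-sum} and the linearity of $\redophead$; no additional expansion or respectfulness hypothesis needs to be invoked at this stage, since the size bound $\setsize{\fpartition'} \leq \expsize$ is used only through its presence in the hypotheses of the two lemmas being applied.
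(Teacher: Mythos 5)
Your proof is correct and follows essentially the same route as the paper: apply \reflem{lem:ContainmentLemma} with $\fpartition^* = \support{\termt}$ to get the neighbourhood containment for the terms of $\redop[{\spanofc[\support{\termt}]}]{\termt}$, then use the contrapositive of \reflem{lem:MainLemmaStrongClosure} to conclude those terms stay irreducible modulo $\spanofc$, and finish by the uniqueness of the decomposition from \reffact{fact:unique-sum}. You just spell out the decomposition $\termt = \polyq + \polyp$ and the linearity step more explicitly than the paper does, but the content is identical.
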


\begin{proof}
  We prove that
  $\redop[\spanofc]{\termt} =
  \redop[{\spanofc[\support{\termt}]}]{\termt}$
  by applying the contrapositive of
  \reflem{lem:MainLemmaStrongClosure}.  Recall that this lemma states
  that any term~$\termt'$ with
  $\neighbor{\termt'} \subseteq \neighbor{\support{\termt}} \union
  \neighbor{\termt}$
  that is reducible modulo~$\spanofc$ is also reducible
  modulo~$\spanofc[\support{\termt}]$.  Since every term $\termt'$ in
  $\redop[{\spanofc[\support{\termt}]}]{\termt}$ is irreducible
  modulo~$\spanofc[\support{\termt}]$ and since by
  applying
  \reflem{lem:ContainmentLemma} 
  with $\fpartition^* = \support{\termt}$
  we have that
  $\neighbor{\termt'} \subseteq \neighbor{\support{\termt}} \union
  \neighbor{\termt}$,
  it follows that $\termt'$ is also irreducible modulo $\spanofc$.
  This shows that
  $\redop[\spanofc]{\termt} =
  \redop[{\spanofc[\support{\termt}]}]{\termt}$
  as claimed, and the lemma follows.
\end{proof}

\subsection{Putting the Pieces in the Proof Together}

\ifthenelse{\boolean{conferenceversion}}           
{We just need two more lemmas to establish
  \refth{th:FVStructureTheorem}.
  To keep the length of this extended abstract reasonable, we just
  state these lemmas and hint at how to prove them.}
{Now we have just a couple of lemmas left before we can prove
  \refth{th:FVStructureTheorem},
  which as discussed above will be  established
  by appealing to
  \reflem{lem:PCDegreeLowerBounds}.}

\begin{lemma}\label{lem:SizeOfSupportFromDegree}
  Let 
  $\fvstructurenot$ 
  be an \rboundaryexpandershortstd{} with 
  \voverlapterm $\voverlapnot{\vpartition} = \setdegd$. 
  Then for any term
  $\termt$ with
  $\mdegreeof{\termt} \leq (\expfactor \expsize - 2 \expslack) / (2
  \setdegd)$
  it holds that $\setsize{\support{\termt}} \leq \expsize / 2$.
\end{lemma}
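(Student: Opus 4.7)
The plan is to bound the size of $\neighbor{\termt}$ using the degree hypothesis together with the \voverlapterm condition, and then feed this into \reflem{lem:MainPropertyOfSupport} with $\vpartition' = \neighbor{\termt}$.

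First I would observe that by \refdef{def:term-neighbourhood}, the neighbourhood $\neighbor{\termt}$ consists of all variable sets $\variablesv \in \vpartition$ that contain some variable from $\vars{\termt}$. Since each individual variable lies in at most $\voverlapnot{\vpartition} = \setdegd$ sets of $\vpartition$ by \refdef{def:overlap}, a simple union bound gives
\begin{equation*}
\setsize{\neighbor{\termt}} \leq \setdegd \cdot \setsize{\vars{\termt}} = \setdegd \cdot \mdegreeof{\termt} \eqperiod
\end{equation*}
Plugging in the degree hypothesis $\mdegreeof{\termt} \leq (\expfactor \expsize - 2\expslack)/(2\setdegd)$ then yields the bound $\setsize{\neighbor{\termt}} \leq \expfactor \expsize/2 - \expslack$.

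Next I would apply \reflem{lem:MainPropertyOfSupport} to the family $\vpartition' = \neighbor{\termt}$. The size requirement $\setsize{\vpartition'} \leq \expfactor \expsize/2 - \expslack$ of that lemma is precisely the inequality just derived, so the lemma's conclusion gives that $\support{\neighbor{\termt}}$ is \shalfvcontainedstd. By \refdef{def:support}, $\support{\termt}$ is defined as $\support{\neighbor{\termt}}$, so we directly obtain $\setsize{\support{\termt}} \leq \expsize/2$, which is what we wanted.

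There is no real obstacle here — the lemma is essentially a bookkeeping step that converts a degree bound into a support size bound by routing through \reflem{lem:MainPropertyOfSupport}. The only thing to be careful about is matching constants: the factor of $2\setdegd$ in the denominator of the degree bound is chosen exactly so that, after multiplying by the overlap $\setdegd$, one lands at the threshold $\expfactor \expsize/2 - \expslack$ required to invoke \reflem{lem:MainPropertyOfSupport}.
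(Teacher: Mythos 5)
Your proof is correct and follows the same route as the paper's: bound $\setsize{\neighbor{\termt}}$ by $\setdegd \cdot \mdegreeof{\termt}$ via the overlap parameter, observe this is at most $\expfactor\expsize/2 - \expslack$, and invoke \reflem{lem:MainPropertyOfSupport}. The paper's version is merely terser, leaving the union-bound step implicit.
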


\ifthenelse{\boolean{conferenceversion}}{  This is a fairly straightforward application of
  \reflem{lem:MainPropertyOfSupport}. 
}{\begin{proof}
  Because of the bound on the \voverlapterm
  $\voverlapnot{\vpartition}$ 
  we have that the size of
  $\neighbor{\termt}$ is bounded by
  $\expfactor \expsize / 2 - \expslack$.  An application of
  \reflem{lem:MainPropertyOfSupport} now yields the desired bound
  $\setsize{\support{\termt}} \leq \expsize / 2$.
\end{proof}
}

\begin{lemma}\label{lem:SupportsOfReductionOperator}
  Let 
  $\fvstructurenot$ 
  be an \rboundaryexpandershortstd{} with 
  \voverlapterm $\voverlapnot{\vpartition} = \setdegd$. 
  Then for any term
  $\termt$ with
  $\mdegreeof{\termt} < \floor{(\expfactor \expsize - 2 \expslack) /
    (2 \setdegd)}$,
  any term $\termt'$ occurring in
  $\redop[{\spanofc[\support{\termt}]}]{\termt}$, and any variable
  $\varx$, it holds that
  $\redop[{\spanofc[\support{\varx \termt'}]}]{\varx \termt'} =
  \redop[{\spanofc[\support{\varx \termt}]}]{\varx \termt'}$.
\end{lemma}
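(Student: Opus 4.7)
The plan is to reduce the claim to a single application of Lemma~\ref{lem:StrongClosureLemma}, where the main work consists in showing the containment $\support{\varx \termt'} \subseteq \support{\varx \termt}$. Once that containment is established, we use that reduction with respect to an admissible ordering cannot increase degree: since $\termt'$ appears in $\redop[{\spanofc[\support{\termt}]}]{\termt}$, we have $\termt' \mleq \termt$ and hence $\mdegreeof{\termt'} \leq \mdegreeof{\termt}$ by \refdef{def:admissible-ordering}. Consequently $\mdegreeof{\varx \termt'} \leq \mdegreeof{\varx \termt} \leq \floor{(\expfactor \expsize - 2 \expslack)/(2\setdegd)}$, so Lemma~\ref{lem:SizeOfSupportFromDegree} yields $\setsize{\support{\varx \termt}} \leq \expsize/2 \leq \expsize$, and Lemma~\ref{lem:StrongClosureLemma} applied to the term $\varx \termt'$ with $\fpartition' = \support{\varx \termt}$ gives the desired equality.

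To prove $\support{\varx \termt'} \subseteq \support{\varx \termt}$, I take any $\fpartition'' \subseteq \fpartition$ that is $(\expsize, \neighbor{\varx \termt'})$-contained and argue that $\fpartition^{**} = \fpartition'' \cup \support{\varx \termt}$ is $(\expsize, \neighbor{\varx \termt})$-contained, so that $\fpartition'' \subseteq \fpartition^{**} \subseteq \support{\varx \termt}$. For the size bound, the estimate $\setsize{\neighbor{\varx \termt'}} \leq \setdegd \cdot \mdegreeof{\varx \termt'} \leq \expfactor \expsize/2 - \expslack$ together with Lemma~\ref{lem:SizeOfRelevantSet} gives $\setsize{\fpartition''} \leq \expsize/2$, and combined with $\setsize{\support{\varx \termt}} \leq \expsize/2$ this yields $\setsize{\fpartition^{**}} \leq \expsize$.

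The boundary containment is the more delicate step. Using that taking unions only shrinks the \rsatrespectful boundary (as already exploited in the proof of Lemma~\ref{lem:MainPropertyOfSupport}), we have $\sboundary{\fpartition^{**}} \subseteq \sboundary{\fpartition''} \cup \sboundary{\support{\varx \termt}}$, and $\sboundary{\support{\varx \termt}} \subseteq \neighbor{\varx \termt}$ is automatic from the definition of support. So it remains to show that any $\variablesv \in \sboundary{\fpartition^{**}} \cap \sboundary{\fpartition''}$ lies in $\neighbor{\varx \termt}$. Here I invoke Lemma~\ref{lem:ContainmentLemma}, which gives $\neighbor{\termt'} \subseteq \neighbor{\support{\termt}} \cup \neighbor{\termt}$ and hence $\neighbor{\varx \termt'} \subseteq \neighbor{\support{\varx \termt}} \cup \neighbor{\varx \termt}$ by monotonicity of support (\refobs{obs:monotone-support}). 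The only case to rule out is thus $\variablesv \in \neighbor{\support{\varx \termt}} \setminus \neighbor{\varx \termt}$: pick any $\formf \in \support{\varx \termt} \subseteq \fpartition^{**}$ with $\variablesv \in \neighbor{\formf}$; since $\variablesv \in \sboundary{\fpartition^{**}}$, this $\formf$ must be the unique \rsatrespectful neighbour of $\variablesv$ in $\fpartition^{**}$ and no other element of $\fpartition^{**}$ is a neighbour of $\variablesv$. Restricting attention to $\support{\varx \termt} \subseteq \fpartition^{**}$ forces $\variablesv \in \sboundary{\support{\varx \termt}} \subseteq \neighbor{\varx \termt}$, contradicting $\variablesv \notin \neighbor{\varx \termt}$.

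The main obstacle I anticipate is precisely this last boundary-tracking case analysis, which relies crucially on the distinction between \rsatrespectful and arbitrary neighbours built into the definition of $\sboundary{\cdot}$; everything else is a fairly direct combination of the earlier lemmas with the degree-preservation property of reduction under an admissible order.
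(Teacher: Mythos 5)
Your proof is correct and follows essentially the same route as the paper: both establish $\support{\varx \termt'} \subseteq \support{\varx \termt}$ by showing that a suitable union with $\support{\varx \termt}$ is $\svcontained{\expsize}{\neighbor{\varx \termt}}$, using \reflem{lem:SizeOfSupportFromDegree} for the size bound and \reflem{lem:ContainmentLemma} together with \refobs{obs:monotone-support} for the boundary containment, and then close by applying \reflem{lem:StrongClosureLemma}. The only cosmetic differences are that you quantify over an arbitrary $\svcontained{\expsize}{\neighbor{\varx \termt'}}$ set $\fpartition''$ instead of working directly with $\support{\varx \termt'}$, and you track the boundary of the union via the coarser inclusion $\sboundary{\fpartition_1 \union \fpartition_2} \subseteq \sboundary{\fpartition_1} \union \sboundary{\fpartition_2}$ plus an explicit overlap argument, where the paper uses a sharper decomposition that subtracts cross-neighbourhoods upfront.
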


\ifthenelse{\boolean{conferenceversion}}{  This lemma follows from \refobs{obs:monotone-support},
  \reflem{lem:ContainmentLemma}, \reflem{lem:StrongClosureLemma},
  and \reflem{lem:SizeOfSupportFromDegree}.
}{\begin{proof}
  We prove the lemma by showing that
  $\support{\varx \termt'} \subseteq \support{\varx \termt}$ 
  and that
  $\setsize{\support{\varx \termt}} \leq \expsize$, 
  which then allows
  us to apply \reflem{lem:StrongClosureLemma}. To prove 
  that
  $\support{\varx \termt'}$ is a subset of~$\support{\varx \termt}$,
  we 
  will   
  show that
  $\support{\varx \termt'} \union \support{\varx \termt}$ is
  \svcontained{\expsize}{\neighbor{\varx \termt}}
  in the sense of
  \refdef{def:support}.
  From this it follows that
  $
  \support{\varx \termt'}
  \subseteq
  \support{\varx \termt'} \union \support{\varx \termt}
  =
  \support{\varx \termt}
  $.

  Towards this goal,
  as $\mdegreeof{\termt'} \leq \mdegreeof{\termt}$
  we first observe that 
  we can apply \reflem{lem:SizeOfSupportFromDegree} to
  deduce that
  $\setsize{\support{\varx \termt'}}   \leq \expsize / 2$
  and 
  \mbox{$\setsize{\support{\varx \termt}} \leq \expsize / 2$},
  and hence
  $\setsize{\support{\varx \termt'} \union \support{\varx \termt}}
  \leq \expsize$,
  which satisfies the size condition for containment.
  It remains to show that
  $\Sboundary{\support{\varx \termt'} \union \support{\varx \termt}}
  \subseteq \neighbor{\varx \termt}$.
  From \reflem{lem:ContainmentLemma} we have that
  $\neighbor{\termt'} \subseteq \neighbor{\support{\termt}} \union
  \neighbor{\termt}$.
  As
  $\neighbor{\varx \termt'} = \neighbor{\varx} \union
  \neighbor{\termt'}$
  and $\support{\termt} \subseteq \support{\varx \termt}$
  by the monotonicity in \refobs{obs:monotone-support},
  it 
  follows that
  \begin{equation}
    \label{eq:support-multiplication}
    \neighbor{\varx \termt'} 
    =
    \neighbor{\varx} \union \neighbor{\termt'} 
    \subseteq
    \neighbor{\varx} \union
    \neighbor{\support{\termt}} \union
    \neighbor{\termt} 
    \subseteq
    \neighbor{\support{\varx \termt}} \union
    \neighbor{\varx \termt} 
    \eqperiod
\end{equation}
  If we now consider the \rboundary of the set
  $\support{\varx \termt'} \union \support{\varx \termt}$,
  it holds that
  \begin{equation}
    \begin{split}
    \SBOUNDARY{\support{\varx \termt'} 
      \union \support{\varx \termt}}  = 
        \hspace{-3.0cm}&
    \\
    &= 
    \left(\SBOUNDARY{\support{\varx \termt'}} \setminus
    \NEIGHBOR{\support{\varx \termt}}\right) 
    \union
    \left(\SBOUNDARY{\support{\varx \termt}} \setminus
    \NEIGHBOR{\support{\varx \termt'}}\right) 
    \\
    &\subseteq 
    \left(\NEIGHBOR{\varx \termt'} \setminus 
    \NEIGHBOR{\support{\varx \termt}}\right) 
    \union
    \left(\NEIGHBOR{\varx \termt} \setminus 
    \NEIGHBOR{\support{\varx \termt'}}\right) 
    \\
    &\subseteq \NEIGHBOR{\varx \termt}
    \eqcomma
    \end{split}
  \end{equation}
  where the first line follows from 
  the boundary definition in   \refdef{def:respectful-boundary},
  the second line follows by the property of 
  \ssupport that
  $\sboundary{\support{\varx \termt}} \subseteq \neighbor{\varx
    \termt}$,
  and the last line follows 
  from~\refeq{eq:support-multiplication}.
  Hence, $\support{\varx \termt'} \union \support{\varx \termt}$ is
  \svcontained{\expsize}{\neighbor{\varx \termt}}.

  As discussed above, we can now apply
  \reflem{lem:StrongClosureLemma} to reach the desired conclusion that 
  the equality
  $\redop[{\spanofc[\support{\varx \termt'}]}]{\varx \termt'} =
  \redop[{\spanofc[\support{\varx \termt}]}]{\varx \termt'}$
  holds.
\end{proof}
}

\ifthenelse{\boolean{conferenceversion}}
{}
{Now we can prove our main technical theorem.}

\begin{proof}[Proof of \refth{th:FVStructureTheorem}]
  Recall that the assumptions of the theorem are that we have
  \afvstructure for a CNF formula 
  $\originalcnf = 
  \Land_{\formf \in \fpartition} \formf \land \fixedconstraints$
  such that
  $\fvstructurenot$ is an 
  \rboundaryexpandershortstd{} with  
  \voverlapterm   $ \voverlapnot{\vpartition} = \setdegd $
  and that furthermore 
  for all
  $\fpartition' \subseteq \fpartition$, 
  \mbox{$\setsize{\fpartition'} \leq \expsize$},
  it holds that 
  $\Land_{\formf \in \fpartition'} \formf \land \fixedconstraints$
  is satisfiable. 
  We want to prove that no polynomial calculus derivation from
  $\Land_{\formf \in \fpartition} \formf \land \fixedconstraints
  =
  \fpartition \land \fixedconstraints$
  of degree at most
  $(\expfactor \expsize - 2 \expslack) / (2 \setdegd)$
  can reach contradiction. 

  \ifthenelse{\boolean{conferenceversion}}{}{   First, if removing all axiom clauses from
  $
  \fpartition \land \fixedconstraints
  $
  with
  degree strictly greater than 
  \mbox{$(\expfactor \expsize - 2 \expslack) / (2 \setdegd)$} 
  produces a satisfiable formula, then the lower bound trivially
  holds. Otherwise, we can remove these large-degree axioms 
  and still be left with
  \afvstructure that satisfies the conditions 
  above.
  In order to see this,
  let us analyze what happens to the \fvstructure if an axiom is
  removed from the formula. 

  Removing axioms from $\fixedconstraints$ only relaxes the conditions
  on \rsatrespectful satisfiability while keeping all 
  edges in the graph, so the
  conditions of the theorem still hold. In removing axioms from
  $\fpartition$ we have two cases: either we remove all axioms from some
  subformula
  $\formf \in \fpartition$ or we remove only a part of this
  subformula. 
    In the former case, it is clear that
  we can remove the vertex~$\formf$ from the structure
  without affecting any of the conditions. 
    In the latter case, 
    we claim
  that any set $\variablesv \in \vpartition$ that
    is
  an \rneighbour
  of~$\formf$ remains an \rneighbour of the formula~$\formf'$ in which 
  large degree axioms
    have been
  removed. 
    Clearly,
  the same assignments to $\variablesv$ that satisfy $\formf$ also satisfy
  $\formf' \subseteq \formf$.
          Also, $\variablesv$ must still be a neighbour of~$\formf'$, for otherwise
  $\formf'$ would not share any variables with $\variablesv$, which would
  imply that no assignment to $\variablesv$ could satisfy $\formf'$ and
  hence $\formf$. This would contradict the assumption that
  $\variablesv$ is an \rneighbour of $\formf$. 
  Hence, we conclude that removal of
    large-degree axioms can only improve the \rboundary expansion
  of the \fvstructure.
}
 
\ifthenelse{\boolean{conferenceversion}}{  We can focus on \afvstructure where the degree of axioms in
  $\fpartition \land \fixedconstraints$ is at most
  $(\expfactor \expsize - 2 \expslack) / (2 \setdegd)$, as 
  it is not hard to show that 
  axioms of higher degree can safely be ignored.}{    Thus, let us focus on \afvstructure $\fvstructurename$ that has
  all axioms of degree at most
  $(\expfactor \expsize - 2 \expslack) / (2 \setdegd)$.}   We want to show that the operator $\fvredophead$ from
  \refdef{def:FVGraphOperator} satisfies the conditions of
  \reflem{lem:PCDegreeLowerBounds}, from which
  \refth{th:FVStructureTheorem}
  immediately follows.
  We can note right away that the operator~$\fvredophead$ is linear by
  construction. 

  To prove that
  $\fvredop{1} = \redop[{\spanofc[\support{1}]}]{1} \neq 0$, 
  we start by
  observing
  that the size of the \ssupport of~$1$ is upper-bounded by
  $\expsize / 2$
  according to \reflem{lem:SizeOfSupportFromDegree}. 
  Using the assumption that for
  every subset $\fpartition'$ of $\fpartition$,
  $\setsize{\fpartition'} \leq \expsize$, the formula
  $\fpartition' \land \fixedconstraints$ is satisfiable, it follows
  that~$1$ is not in the ideal $\spanofc[\support{1}]$ and hence
  $\redop[{\spanofc[\support{1}]}]{1} \neq 0$.

  We next show that 
  $\fvredop{\polyax} = 0$
  for any axiom clause
  $\polyax \in \fpartition \land \fixedconstraints$
  (where we recall that we identify a clause $\polyax$ with its
  translation into a linear combination of monomials).
  By the   \ifthenelse{\boolean{conferenceversion}}{assumption}  {preprocessing step}   above it holds that
  the degree of $\polyax$ is bounded by
  $(\expfactor \expsize - 2 \expslack) / (2 \setdegd)$,  
  from which it follows by \reflem{lem:SizeOfSupportFromDegree} that
  the size of the
  \ssupport of every term in $\polyax$ is bounded by~$\expsize / 2$.
  Since $\polyax$ is the polynomial encoding of a clause, 
  the leading term~$\leadingterm{\polyax}$ 
  contains all the variables appearing in $\polyax$.  \footnote{We remark that this is the only place in the proof where
    we are using that $\polyax$ is (the encoding of) a clause.}
 Hence,
  the \ssupport $\support{\leadingterm{\polyax}}$ 
  of the leading term contains the \ssupport of every
  other term in~$\polyax$ by \refobs{obs:monotone-support} and we can
  use
  \reflem{lem:StrongClosureLemma} to conclude that
  $\fvredop{\polyax} = \redop[{\spanofc[\support{\leadingterm{\polyax}}]}]{\polyax}$.
  If $\polyax \in \fixedconstraints$,
  this means we are done
  because
  $\spanofc[\support{\leadingterm{\polyax}}]$ contains 
  all of $\fixedconstraints$,
  implying that $\fvredop{\polyax} = 0$.

  For $\polyax \in \fpartition$ we cannot
  immediately argue that $\polyax$ reduces to~$0$, 
  since (in contrast to~\cite{AR03LowerBounds})
  it is not immediately clear that
  $\support{\leadingterm{\polyax}}$ contains $\polyax$.
  The problem here is that we might worry that $\polyax$ is part of
  some subformula 
  $\formf \in \fpartition$ 
  for which the boundary
  $\sboundary{\formf}$
  is not contained in
  $\neighbor{\leadingterm{\polyax}} = \vars{\polyax}$, 
  and hence there is no obvious reason why
  $\polyax$ should be a member of any
  \svcontained{\expsize}{\neighbor{\leadingterm{\polyax}}}
  subset of~$\fpartition$.
  However, in view of \reflem{lem:StrongClosureLemma}
  (applied, strictly speaking, once for every term in~$\polyax$)
  we can choose some 
  $\formf \in \fpartition$ such that $\polyax \in \formf$ and add it
  to the \ssupport 
  $\support{\leadingterm{\polyax}}$ to obtain a set 
  $\fpartition' = \support{\leadingterm{\polyax}} \union \set{\formf}$
  of size 
  $\setsize{\fpartition'} 
  \leq \expsize / 2 + 1
  \leq \expsize$
  such that
  $
  \redop[{\spanofc[\support{\leadingterm{\polyax}}]}]{\polyax}
  =
  \redop[\spanofc]{\polyax}
  $.
  Since $\spanofc$ contains $\polyax$ as a generator we conclude that
  $\fvredop{\polyax} = \redop[\spanofc]{\polyax} = 0$ also 
  \mbox{for $\polyax \in \fpartition$.}  \footnote{Actually, a slighly more careful argument reveals that
    $\polyax$ is always contained in
    $\support{\leadingterm{\polyax}}$.
    This is so since for any $\formf \in \fpartition$ with $\polyax
    \in \formf$  it holds that any neighbours in
    $\neighbor{\formf} \setminus \neighbor{\leadingterm{\polyax}}$
    have to be \rsatdisrespectful, and so 
    such an~$\formf$ always makes it into the support.
    However, the reasoning gets a bit more involved, and since we
    already needed to use \reflem{lem:StrongClosureLemma} anyway we
    might as well apply it once more here.}

  It remains to prove
  the last property in
  \reflem{lem:PCDegreeLowerBounds} stating that
  $\fvredop{\varx \termt} = \fvredop{\varx \fvredop{\termt}}$ 
    for any term~$\termt$ such that
  $\mdegreeof{\termt} < \floor{(\expfactor \expsize - 2 \expslack) /
    (2 \setdegd)}$. 
  We can see that this holds by studying the following sequence of
  equalities: 
  \begin{subequations}   
  \begin{align*}
    \fvredop{\varx \fvredop{\termt}} 
    &= \sum_{\termt' \in \fvredop{\termt}} \fvredop{\varx \termt'}
    & & \bigl[\text{by linearity}\bigr] \\
    &= \sum_{\termt' \in \fvredop{\termt}} \redop[{\spanofc[\support{\varx \termt'}]}]{\varx \termt'}
    & & \bigl[\text{by definition of $\fvredophead$}\bigr] \\
    &= \sum_{\termt' \in \fvredop{\termt}} \redop[{\spanofc[\support{\varx \termt}]}]{\varx \termt'}
    & & \bigl[\text{by \reflem{lem:SupportsOfReductionOperator}}\bigr] \\
    &= \redop[{\spanofc[\support{\varx \termt}]}]{\varx \fvredop{\termt}}
    & & \bigl[\text{by linearity}\bigr] \\
    &= \redop[{\spanofc[\support{\varx \termt}]}]{\varx 
      \redop[{\spanofc[\support{\termt}]}]{\termt}}
    & & \bigl[\text{by definition of $\fvredophead$}\bigr] \\
    &= \redop[{\spanofc[\support{\varx \termt}]}]{\varx \termt}
    & & \bigl[\text{by \refobs{obs:reduction-mod-larger-ideal}}\bigr] \\
    &= \fvredop{\varx \termt}
    & & \bigl[\text{by definition of $\fvredophead$}\bigr] 
  \end{align*}
  \end{subequations}
  Thus, $\fvredophead$ satisfies all the properties of
  \reflem{lem:PCDegreeLowerBounds}, 
  from which the theorem follows.
\end{proof}

\ifthenelse{\boolean{conferenceversion}}{}{  Let us next show that if the slack~$\expslack$ in
  \refth{th:FVStructureTheorem} is
    zero,
  then
  the condition that
  $\fpartition' \land \fixedconstraints$ is satisfiable for sufficiently
  small $\fpartition'$
    is already implied by the expansion.}

\ifthenelse{\boolean{conferenceversion}}{}{\begin{lemma}
  \label{lem:SatisfiableSubformulaFromExpansion}
  If \afvstructure is an
  \rboundaryexpandershort{\expsize}{\expfactor}{0}{\fixedconstraints}
  and
  $\vars{\fpartition \land \fixedconstraints} = 
  \Union_{\variablesv \in \vpartition} \variablesv$,
  then for any
  $\fpartition' \subseteq \fpartition$, 
  $\setsize{\fpartition'} \leq \expsize$,
  the formula $\fpartition' \land \fixedconstraints$ is satisfiable.
\end{lemma}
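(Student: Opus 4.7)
My plan is to greedily peel off the subformulas in $\fpartition'$ using the zero-slack expansion, which guarantees that the \rboundary of every nonempty subset of size at most $\expsize$ is nonempty. Starting from $\fpartition'_1 = \fpartition'$, I iteratively pick a set $\variablesv^{(i)} \in \sboundary{\fpartition'_i}$ (which exists because $\setsize{\fpartition'_i} \leq \expsize$ and the zero slack yields $\setsize{\sboundary{\fpartition'_i}} \geq \expfactor \setsize{\fpartition'_i} > 0$), take its unique \rneighbour $\formf_i \in \fpartition'_i$, set $\fpartition'_{i+1} = \fpartition'_i \setminus \set{\formf_i}$, and continue. After $k = \setsize{\fpartition'}$ steps this orders the subformulas as $\formf_1, \ldots, \formf_k$ and yields variable sets $\variablesv^{(1)}, \ldots, \variablesv^{(k)}$ enjoying the key property that $\variablesv^{(i)} \in \sboundary{\set{\formf_i, \formf_{i+1}, \ldots, \formf_k}}$, so $\variablesv^{(i)}$ shares no variables with $\formf_j$ for any $j > i$. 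For each $i$ I also fix an assignment $\sigma_i$ with $\domainof{\sigma_i} = \variablesv^{(i)}$ that satisfies $\formf_i$ and respects $\fixedconstraints$, which exists by the definition of \rneighbour.

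\textbf{Combination and verification.} I next combine the $\sigma_i$'s into a partial assignment $\rho$ on $V' = \Union_i \variablesv^{(i)}$ using priorities, setting $\rho(x) = \sigma_j(x)$ where $j = \min \set{i : x \in \variablesv^{(i)}}$ (so low indices override high). For each $\formf_i$, the assignment $\sigma_i$ satisfies some literal on a variable $y \in \variablesv^{(i)} \intersection \vars{\formf_i}$, and the key property above forces $y \notin \variablesv^{(j)}$ for every $j < i$ (since $\variablesv^{(j)} \in \sboundary{\set{\formf_j, \ldots, \formf_k}}$ and $\formf_i$ lies in that set, so $\variablesv^{(j)}$ is not even a neighbour of $\formf_i$); hence $\rho(y) = \sigma_i(y)$ and $\formf_i$ is satisfied. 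For respecting $\fixedconstraints$, given a touched clause $\clc \in \fixedconstraints$ let $j^* = \min \set{j : \vars{\clc} \intersection \variablesv^{(j)} \neq \emptyset}$; then $\sigma_{j^*}$ touches and hence satisfies $\clc$ via some literal on $y \in \vars{\clc} \intersection \variablesv^{(j^*)}$, and minimality of $j^*$ gives $y \notin \variablesv^{(j)}$ for every $j < j^*$, so $\rho(y) = \sigma_{j^*}(y)$.

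\textbf{Main obstacle and extension.} The delicate point is to extend $\rho$ from $V'$ to a full satisfying assignment of $\fpartition' \land \fixedconstraints$, because clauses of $\fixedconstraints$ whose variables all lie outside $V'$ remain only untouched, not satisfied. Here the hypothesis $\vars{\fpartition \land \fixedconstraints} = \Union_{\variablesv \in \vpartition} \variablesv$ enters crucially: every such untouched clause $\clc$ has all its variables inside some $\variablesv \in \vpartition$ not appearing in our sequence, and the respecting assignment for $\variablesv$ guaranteed by \refdef{def:f-v-graph} must in fact satisfy $\clc$ since it touches $\clc$. I would extend $\rho$ by appending these additional assignments as $\sigma_{k+1}, \sigma_{k+2}, \ldots$ with strictly lower priority than $\sigma_1, \ldots, \sigma_k$, and then verify by the same priority argument as before that no previously-satisfied clause ever gets broken while every remaining clause of $\fixedconstraints$ acquires a satisfying literal. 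The hard part is precisely this extension step: one has to choose the order and priority of the additional $\sigma_l$'s carefully so that for every untouched clause the highest-priority $\sigma_l$ touching it is one that genuinely satisfies it, rather than merely overriding another $\sigma_{l'}$'s satisfying literal.
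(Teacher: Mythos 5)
Your construction tracks the paper's two-stage argument quite closely: the paper also first builds, by induction on $\setsize{\fpartition'}$ using a boundary element at each step, an assignment on some $\Union_{\variablesv\in\vpartition'}\variablesv$ with $\vpartition'\subseteq\neighbor{\fpartition'}$ that \rsatisfies $\fpartition'$, and then inductively extends this assignment to all of $\Union_{\variablesv\in\vpartition}\variablesv$ while maintaining respect for $\fixedconstraints$; since by hypothesis every variable of $\fixedconstraints$ lies in that union, the fully extended assignment satisfies $\fixedconstraints$ outright. Your lowest-index-wins priority rule implements essentially the same mechanism as the paper's extension step, which at each stage only lets the new assignment set the \emph{fresh} variables so that the already-built assignment takes precedence.

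The concern you flag at the end is, however, unfounded, and you do not need to choose the order or priority of the additional $\sigma_l$'s with any care: the extension closes by exactly the index-minimality argument you already used to check that $\rho$ respects $\fixedconstraints$ on the first-stage domain. Fix any ordering of the remaining sets $\variablesv^{(k+1)},\variablesv^{(k+2)},\ldots$ and any $\fixedconstraints$-respecting $\sigma_l$ with domain $\variablesv^{(l)}$. For a clause $\clc\in\fixedconstraints$ untouched in the first stage, let $l^* = \min\setdescr[:]{l}{\vars{\clc}\intersection\variablesv^{(l)}\neq\emptyset}$. Since $\sigma_{l^*}$ respects $\fixedconstraints$ and touches $\clc$, it satisfies $\clc$ via some $y\in\vars{\clc}\intersection\variablesv^{(l^*)}$. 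For every $j<l^*$ we have $y\notin\variablesv^{(j)}$: for $j\le k$ because $\clc$ is untouched on the first-stage domain, and for $k<j<l^*$ because otherwise $\vars{\clc}\intersection\variablesv^{(j)}\neq\emptyset$ would contradict the minimality of $l^*$. Hence $\rho(y)=\sigma_{l^*}(y)$ and $\clc$ is satisfied. In short, your ``hard part'' is already handled by the very argument you wrote out two sentences earlier, and the proof is complete.
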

}

\ifthenelse{\boolean{conferenceversion}}{}{\begin{proof}
  Let $\fpartition' \subseteq \fpartition$ be any subset of size at
  most $\expsize$. First, we show that we can find 
  a subset
  $\vpartition' \subseteq \neighbor{\fpartition'}$
  and an assignment~$\assignmenta$ to the set of variables 
  $\Union_{\variablesv \in \vpartition'} \variablesv$
  such that $\assignmenta$ \rsatisfies $\fpartition'$.
  We do this by
  induction on the number of formulas in $\fpartition'$. As the
  \fvstructure is an
  \rboundaryexpandershort{\expsize}{\expfactor}{0}{\fixedconstraints}
  it follows that
  $\setsize{\sboundary{\fpartition'}} \geq \expfactor
  \setsize{\fpartition'} > 0$
  for any non-empty subset $\fpartition'$ and hence there exists a formula
  $\formf \in \fpartition'$ and a variable set $\variablesv'$ such
  that $\variablesv'$ is an \rneighbour of $\formf$ and is not a
  neighbour of any formula in $\fpartition' \setminus \set{\formf}$.
  Therefore, there is an assignment $\assignmenta$ to the variables in
  $\variablesv'$ that \rsatisfies $\formf$. By the induction
  hypothesis there also exists an assignment $\assignmenta'$ that
  \rsatisfies $\fpartition' \setminus \set{\formf}$ and does not
  assign any
  variables in~$\variablesv'$ as
  $\variablesv' \notin \neighbor{ \fpartition' \setminus
    \set{\formf}}$.
  Hence, by extending the assignment $\assignmenta'$ to the variables
  in~$\variablesv'$ according to the assignment~$\assignmenta$, we
  create an assignment to the union of variables in some subset of
  $\neighbor{\fpartition'}$ that \rsatisfies $\fpartition'$.

  We now need to show how to extend this to an assignment satisfying
  also~$\fixedconstraints$.
  To this end, let
  $\assignmenta_{\fpartition'}$ be an assignment that \rsatisfies
  $\fpartition'$ and assigns the variables in
  $\Union_{\variablesv \in \vpartition'} \variablesv$ 
  for some
  $\vpartition' \subseteq \neighbor{\fpartition'}$. By another
  induction over the 
  size
  $\setsize{\vpartition'' \setminus \vpartition'}$ of families
  $\vpartition'' \supseteq \vpartition'$, 
  we show that there is an
  assignment~$\assignmenta_{\vpartition''}$ to the variables
  $\Union_{\variablesv \in \vpartition''} \variablesv$ that
  \rsatisfies $\fpartition'$ for every
  $\vpartition' \subseteq \vpartition'' \subseteq \vpartition$. 
  When
  $\vpartition'' = \vpartition'$, we just take the assignment
  $\assignmenta_{\fpartition'}$. 
  We want to show that 
  for any
  $\variablesv' \in \vpartition \setminus \vpartition''$
  we can
  extend $\assignmenta_{\vpartition''}$ to the variables
  in~$\variablesv'$ so that the new assignment
  \rsatisfies~$\fpartition'$. As $\variablesv'$ \rsatrespects
  $\fixedconstraints$, 
  there is an assignment $\assignmenta_{\variablesv'}$ to the
  variables~$\variablesv'$ that satisfies all affected clauses
  in~$\fixedconstraints$.
  We would like to combine
  $\assignmenta_{\variablesv'}$ 
  and
  $\assignmenta_{\vpartition''}$ 
  into one assignment, but this requires some care since the
  intersection of the domains
  $
  \variablesv' \intersection
  \bigl( \Union_{\variablesv \in \vpartition''} \variablesv \bigr)
  $
  could be non-empty.
  Consider 
  therefore
  the subassignment 
  $\assignmenta_{\variablesv'}^*$ 
  of~$\assignmenta_{\variablesv'}$ 
  that assigns only the variables in
  $\variablesv' \setminus 
  \bigl( \Union_{\variablesv \in \vpartition''} \variablesv \bigr)$.
  We claim that extending 
  $\assignmenta_{\vpartition''}$ by 
  $\assignmenta_{\variablesv'}^*$ creates an assignment that
  \rsatrespects~$\fixedconstraints$. This is because every clause in
  $\fixedconstraints$ that has a variable in $\variablesv'$ and was
  not already satisfied by~$\assignmenta_{\vpartition''}$ cannot have
  variables in
  $\variablesv' \intersection \bigl( \Union_{\variablesv \in
    \vpartition''} \variablesv \bigr)$
  (if so, $\assignmenta_{\vpartition''}$ would have been \rnegadjective)
  and hence 
  every such clause
  must be satisfied by the subassignment~  $\assignmenta_{\variablesv'}^*$.

  Thus, we can find an assignment to all the variables
  $\union_{\variablesv \in \vpartition} \variablesv$ that \rsatisfies
  $\fpartition'$. As $\vpartition$ includes all the variables in
  $\fixedconstraints$ it means that $\fixedconstraints$ is also fully
  satisfied. Hence, $\fpartition' \land \fixedconstraints$ is
  satisfiable and the lemma follows.
\end{proof}
}

\ifthenelse{\boolean{conferenceversion}}
{We conclude the section by stating the following
  version of  \refth{th:FVStructureTheorem} for the most commonly
  occuring case with standard expansion without any slack.}
{This allows us to conclude this section by stating the following
  version of  \refth{th:FVStructureTheorem} for the most commonly
  occuring case with standard expansion without any slack.}

\begin{corollary}
  \label{cor:FVStructureTheoremWithoutSlack}
  Suppose that $\fvstructurenot$ is an
  \rboundaryexpandershort{\expsize}{\expfactor}{0}{\fixedconstraints}
  with
  \voverlapterm   $ \voverlapnot{\vpartition} = \setdegd $
  such that 
  $\vars{\fpartition \land \fixedconstraints} = 
  \Union_{\variablesv \in \vpartition} \variablesv$.
  Then any polynomial calculus refutation of the formula
  $\Land_{\formf \in \fpartition} \formf \land \fixedconstraints$
  requires degree strictly greater than
  $\expfactor \expsize / (2 \setdegd)$.
\end{corollary}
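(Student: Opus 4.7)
The plan is to derive the corollary from \refth{th:FVStructureTheorem} by verifying that the satisfiability hypothesis is automatic in the zero-slack setting. Since the expansion parameters and overlap carry over verbatim, the only nontrivial obligation is to show that for every $\fpartition' \subseteq \fpartition$ with $\setsize{\fpartition'} \leq \expsize$, the formula $\fpartition' \land \fixedconstraints$ is satisfiable. Given this, \refth{th:FVStructureTheorem} immediately yields degree strictly greater than $(\expfactor \expsize - 2 \cdot 0)/(2\setdegd) = \expfactor \expsize/(2\setdegd)$, as required.

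To prove the satisfiability claim, I will proceed in two stages. First, I would produce an assignment $\assignmenta_{\fpartition'}$ with domain $\Union_{\variablesv \in \vpartition'} \variablesv$ for some $\vpartition' \subseteq \neighbor{\fpartition'}$ that \rsatisfies~$\fpartition'$. This is done by peeling: induct on $\setsize{\fpartition'}$; since $\expslack = 0$, the expansion bound $\setsize{\sboundary{\fpartition'}} \geq \expfactor \setsize{\fpartition'}$ is strictly positive whenever $\fpartition'$ is nonempty, so there exist $\formf \in \fpartition'$ and $\variablesv' \in \sboundary{\fpartition'} \intersection \neighbor{\formf}$. By \refdef{def:respectful-boundary}, $\variablesv'$ is an \rneighbour of~$\formf$, so some assignment to~$\variablesv'$ \rsatisfies~$\formf$; combine it with the inductive assignment on $\fpartition' \setminus \set{\formf}$, noting that $\variablesv'$ is not a neighbour of any other $\formf'' \in \fpartition' \setminus \set{\formf}$ and hence the two assignments have disjoint domains and do not interfere.

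Second, I need to extend $\assignmenta_{\fpartition'}$ to a total assignment on $\Union_{\variablesv \in \vpartition} \variablesv$, which by the hypothesis $\vars{\fpartition \land \fixedconstraints} = \Union_{\variablesv \in \vpartition} \variablesv$ is a total assignment on $\vars{\fpartition \land \fixedconstraints}$. I would induct on $\setsize{\vpartition'' \setminus \vpartition'}$, adding one variable set $\variablesv'' \in \vpartition \setminus \vpartition''$ at a time. For each~$\variablesv''$, use that $\variablesv''$ \rsatrespects~$\fixedconstraints$ to pick some $\assignmenta_{\variablesv''}$ respecting $\fixedconstraints$, and then set only the coordinates in $\variablesv'' \setminus \bigl(\Union_{\variablesv \in \vpartition''} \variablesv\bigr)$ according to this new assignment, leaving the previously assigned coordinates untouched.

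The main obstacle is the second stage: I need to argue that this overwriting procedure still respects~$\fixedconstraints$. The key observation is that any clause $\clc \in \fixedconstraints$ touched by the newly assigned variables in $\variablesv'' \setminus \bigl(\Union_{\variablesv \in \vpartition''} \variablesv\bigr)$ but not yet satisfied must have all its variables in this fresh portion of $\variablesv''$; otherwise $\clc$ would have been touched earlier by $\assignmenta_{\vpartition''}$ without being satisfied, contradicting that the running assignment respects $\fixedconstraints$. Thus the new portion of $\assignmenta_{\variablesv''}$ alone suffices to satisfy~$\clc$. Iterating until $\vpartition'' = \vpartition$ produces the desired total assignment that \rsatisfies~$\fpartition'$ and in particular satisfies~$\fpartition' \land \fixedconstraints$, completing the reduction to \refth{th:FVStructureTheorem}.
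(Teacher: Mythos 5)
Your proof matches the paper's approach precisely: the paper factors out your satisfiability claim as \reflem{lem:SatisfiableSubformulaFromExpansion}, proven by exactly your two-stage argument (peel off an $\formf$ via the nonempty \rsatrespectful boundary afforded by zero slack, then extend one variable set $\variablesv''$ at a time via the subassignment on the fresh coordinates), and then feeds this into \refth{th:FVStructureTheorem}. One small imprecision in your second stage: the claim that a newly touched, not-yet-satisfied clause $\clc \in \fixedconstraints$ has \emph{all} its variables in the fresh part of $\variablesv''$ is too strong---such a $\clc$ may still contain variables from sets in $\vpartition$ not yet processed, and your ``otherwise'' reasoning does not rule those out; what that reasoning actually establishes, and all that is needed, is that $\clc$ has no variable in $\Union_{\variablesv \in \vpartition''}\variablesv$, whence the literal by which $\assignmenta_{\variablesv''}$ (respecting $\fixedconstraints$) satisfies $\clc$ has its variable in $\variablesv''$ and not in the overlap, so $\assignmenta_{\variablesv''}^*$ already satisfies $\clc$.
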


\ifthenelse{\boolean{conferenceversion}}
{
\begin{proof}[Proof sketch]
  It is not hard to show that 
  if \afvstructure is an
  \rboundaryexpandershort{\expsize}{\expfactor}{0}{\fixedconstraints}
  such that
  $\vars{\fpartition \land \fixedconstraints} = 
  \Union_{\variablesv \in \vpartition} \variablesv$,
  then for any
  $\fpartition' \subseteq \fpartition$, 
  $\setsize{\fpartition'} \leq \expsize$,
  it holds that
  the formula $\fpartition' \land \fixedconstraints$ is satisfiable.
  Now the corollary follows immediately from
  \refth{th:FVStructureTheorem}.
\end{proof}
}
{
\begin{proof}
  This follows immediately by plugging
  \reflem{lem:SatisfiableSubformulaFromExpansion} into
  \refth{th:FVStructureTheorem}.
\end{proof}
}

\makeatletter{}\section{Applications}
\label{sec:applications}

In this section, we demonstrate how to use the machinery developed in
\refsec{sec:method} to establish degree lower bounds for polynomial
calculus. 
\ifthenelse{\boolean{conferenceversion}}
{As a warm-up, let us consider the bound from~\cite{AR03LowerBounds}
  for CNF formulas $\originalcnf$ whose clause-variable incidence
  graph~$G(\originalcnf)$ are good enough expanders in the following sense.}
{Let us warm up by reproving the bound from~\cite{AR03LowerBounds}
  for CNF formulas $\originalcnf$ whose clause-variable incidence
  graphs~$G(\originalcnf)$ are good enough expanders.  
  We first recall the expansion concept used in~\cite{AR03LowerBounds}
  for ordinary bipartite graphs.}  

\begin{definition}[Bipartite boundary expander]
  \label{def:bipartite-boundary-expander}
  A bipartite graph
  $\graphg = (\vertexsetu \disunion \vertexsetv, \edgesete)$
  is a
  \introduceterm{bipartite \mbox{$(\expsize, \expfactor)$-boundary} expander} if for
  every set of vertices
  $\vertexsetu' \subseteq \vertexsetu, \setsize{\vertexsetu'} \leq s$,
  it holds that
  $\setsize{\boundary{\vertexsetu'}} \geq \expfactor
  \setsize{\vertexsetu'}$,
  where the \introduceterm{boundary}
  $
  \boundary{\vertexsetu'}
  =
  \Setdescr[:]
  {\vertexv \in \vertexsetv}
  {\setsize{\neighbor{\vertexv} \intersection \vertexsetu'} = 1}
  $
  consists of all vertices on the right-hand side~$\vertexsetv$ that
  have a unique neighbour in~$\vertexsetu'$ on the left-hand side.
\end{definition}

We can simply identify the \fvstructure with the standard
clause-variable incidence 
\ifthenelse{\boolean{conferenceversion}}
{graph~$G(\originalcnf)$ (setting $\fixedconstraints = \emptyset$)}
{graph~$G(\originalcnf)$} 
to recover the degree lower bound in~\cite{AR03LowerBounds} as stated
next.

\begin{theorem}[\cite{AR03LowerBounds}]
  For any CNF formula $\originalcnf$ and any constant $\expfactor > 0$
  it holds that
  if the clause-variable incidence graph
  $\graphg(\originalcnf)$ is an $(\expsize,
  \expfactor)$\nobreakdash-boundary expander,  then 
  the polynomial calculus degree required to refute~$\originalcnf$
  in polynomial calculus 
  is 
  $\mdegreeref{\originalcnf} > \expfactor \expsize / 2$.
\end{theorem}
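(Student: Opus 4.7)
The plan is to realize the clause-variable incidence graph $G(\originalcnf)$ as \afvstructure in the most natural, trivial way and then invoke \refcor{cor:FVStructureTheoremWithoutSlack}. Concretely, I would set $\fixedconstraints = \emptyset$, take $\fpartition = \setdescr{\set{\clc}}{\clc \in \originalcnf}$ (so that each ``left vertex'' is a singleton subformula consisting of one clause), and take $\vpartition = \setdescr{\set{\varx}}{\varx \in \vars{\originalcnf}}$ (so that each ``right vertex'' is a singleton variable set). Since $\fixedconstraints$ is empty every partial assignment trivially respects it, so every variable set is \radjective and \refdef{def:f-v-graph} applies.

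Next I would check that this \fvstructure literally coincides with $G(\originalcnf)$ and that all \fvstructure notions reduce to their classical counterparts. There is an edge between $\set{\clc}$ and $\set{\varx}$ precisely when $\varx \in \vars{\clc}$, matching the edge relation of $G(\originalcnf)$. For the \rneighbour relation, observe that whenever $\varx \in \vars{\clc}$ the clause~$\clc$ contains either the literal $\varx$ or the literal $\olnot{\varx}$, and setting that single variable to the appropriate value is a partial assignment with domain $\set{\varx}$ that satisfies $\clc$ and respects the empty~$\fixedconstraints$; hence every edge is \rsatrespectful. Consequently, for any $\fpartition' \subseteq \fpartition$ the \rboundary $\sboundary{\fpartition'}$ is exactly the set of singletons $\set{\varx}$ such that $\varx$ is in the classical boundary of the corresponding clause subset in $G(\originalcnf)$. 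Thus boundary expansion of $G(\originalcnf)$ in the sense of \refdef{def:bipartite-boundary-expander} is the same as \rsatrespectful boundary expansion in the sense of \refdef{def:F-V-boundary-expansion} with slack $\expslack = 0$. Moreover the \voverlapterm satisfies $\voverlapnot{\vpartition} = 1$ since each variable appears in a single singleton set, and the covering condition $\vars{\fpartition \land \fixedconstraints} = \Union_{\variablesv \in \vpartition} \variablesv$ is immediate.

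Having made these identifications, \refcor{cor:FVStructureTheoremWithoutSlack} with $\setdegd = 1$ and $\expslack = 0$ immediately yields the lower bound $\mdegreeref{\originalcnf} > \expfactor \expsize / 2$, which is exactly the Alekhnovich--Razborov bound. I do not anticipate any real obstacle; the entire content of the proof is the (straightforward) bookkeeping verifying that the trivial clustering turns \refdef{def:F-V-boundary-expansion} into the ordinary bipartite boundary-expansion condition and that no slack or overlap penalty appears in this case.
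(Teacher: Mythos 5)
Your proposal coincides with the paper's own proof: set $\fixedconstraints = \emptyset$, take singleton clause sets for $\fpartition$ and singleton variable sets for $\vpartition$, observe that every edge is \radjective, and apply \refcor{cor:FVStructureTheoremWithoutSlack} with overlap~$1$ and zero slack. The bookkeeping you describe is exactly what the paper does.
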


\ifthenelse{\boolean{conferenceversion}}{}{\begin{proof}
  To choose $\graphg(\originalcnf)$ as our \fvstructure, we set
  $\fixedconstraints$ to be the empty formula, 
  $\fpartition$ to be the
  set of clauses of $\originalcnf$ interpreted as one-clause CNF
  formulas, and $\vpartition$ to be the set of variables partitioned
  into singleton sets.
  As $\fixedconstraints$ is an empty formula every set
  $\variablesv$ respects it. Also, every neighbour of some
  clause~$\clc \in \fpartition$ is an \rneighbour because we can set
  the neighbouring variable so that the
  clause~$\clc \in \fpartition$ is satisfied. 
  Under this interpretation   $\graphg(\originalcnf)$ is an
  \rboundaryexpandershort{\expsize}{\expfactor}{0}{\fixedconstraints},
  and hence by
  \refcor{cor:FVStructureTheoremWithoutSlack} the degree of refuting
  $\originalcnf$ is greater than $\expfactor \expsize / 2$.
\end{proof}
}

As a second application, which is more interesting in the sense that
the  \fvstructure is nontrivial, we show how 
the degree lower bound for the ordering principle formulas
in~\cite{GL10Optimality} can be established using this framework.
For an undirected (and in general non-bipartite) graph $\graphg$, 
the \introduceterm{graph ordering principle formula~$\gop$} 
says that there exists a totally ordered set of
$\setsize{\vertices{\graphg}}$~elements
where no element is minimal, since every element/vertex~$\vertexv$ has a
neighbour $\vertexu \in \neighbor{\vertexv}$ which is smaller according
to the ordering.
Formally, the CNF formula~$\gop$
is defined 
over variables~$x_{\vertexu, \vertexv}$,
$\vertexu, \vertexv \in \vertices{\graphg}$, $\vertexu \neq \vertexv$,
where the intended meaning of the variables is that
$x_{\vertexu, \vertexv}$ is true if
$\vertexu < \vertexv$ according to the ordering,
and consists of the  following axiom clauses: 
\begin{subequations}
  \begin{align}
  \label{eq:gop-transitivity}
  & \olnot{x}_{\vertexu, \vertexv} \lor \olnot{x}_{\vertexv, \vertexw} 
  \lor x_{\vertexu, \vertexw} 
  & & 
  \text{$\vertexu, \vertexv, \vertexw \in \variablesv(\graphg), 
    \vertexu \neq \vertexv \neq \vertexw \neq \vertexu$} 
  & & 
  \text{(transitivity)} 
  \\
  \label{eq:gop-antisymmetry}
  & \olnot{x}_{\vertexu, \vertexv} \lor \olnot{x}_{\vertexv, \vertexu} 
  & & 
  \text{$\vertexu, \vertexv \in \variablesv(\graphg), 
    \vertexu \neq \vertexv$} 
  & & 
  \text{(anti-symmetry)} 
  \\
  \label{eq:gop-totality}
  & x_{\vertexu, \vertexv} \lor x_{\vertexv, \vertexu} 
  & & 
  \text{$\vertexu, \vertexv \in \variablesv(\graphg), 
    \vertexu \neq \vertexv$}
  & & 
  \text{(totality)} 
  \\
  \label{eq:gop-nonminimality}
  & \Lor_{\vertexu \in \neighbor{\vertexv}} x_{\vertexu, \vertexv} 
  & & 
  \text{$\vertexv \in V(G)$}
  & &
  \text{(non-minimality)} 
\end{align}
\end{subequations}

We remark that the graph ordering principle on the complete graph
$K_n$ on $n$~vertices is the 
\introduceterm{(linear) ordering principle formula~$\lop[n]$}
(also known as a
\introduceterm{least number principle formula}, or 
\introduceterm{graph tautology} in the literature),
for which the non-minimality axioms~\refeq{eq:gop-nonminimality} have
width linear in~$n$. By instead considering graph ordering formulas
for graphs~$\graphg$ of bounded 
degree, one can bring the initial
width of the formulas down 
so that the question of degree
lower bounds becomes meaningful.

To prove degree lower bounds for~$\gop$ we need the following
extension of boundary expansion to the case of non-bipartite graphs.

\begin{definition}  [Non-bipartite boundary expander]
  \label{def:NonBipartiteBoundaryExpander}
  A graph $\graphg = (\vertexsetv, \edgesete)$ is an
  \introduceterm{\mbox{$(\expsize, \expfactor)$-boundary} expander} 
  if for every subset of vertices
  $\vertexsetv' \subseteq \vertexsetv(\graphg)$,
  $\setsize{\vertexsetv'} \leq \expsize$,
  it holds that
  $\setsize{\boundary{\vertexsetv'}} \geq \expfactor
  \setsize{\vertexsetv'}$,
  where the \introduceterm{boundary}
  $
  \boundary{\vertexsetv'}
  =
  \Setdescr[:]
  {\vertexv \in \vertices{\graphg} \setminus \vertexsetv'}
  {\Setsize{\neighbor{\vertexv} \intersection \vertexsetv'} = 1}
  $
  is the set of all vertices in
  $\vertexsetv(\graphg) \setminus \vertexsetv'$ that have a unique
  neighbour in $\vertexsetv'$.
\end{definition}

We want to point out that the definition of expansion used by Galesi
and Lauria in~\cite{GL10Optimality} is slightly weaker in that 
they do not
require boundary expansion but just vertex expansion
(measured as 
\mbox{$\setsize{\neighbor{V'} \setminus V'}$}
for vertex sets $\vertexsetv'$ with
$\setsize{\vertexsetv'} \leq \expsize$), 
and hence their result is slightly stronger than what we state below
in \refth{th:Galesi-Lauria-theorem}. With some modifications of the
definition of \rboundary in \fvstructure{}s it would be possible to
match the lower bound in~\cite{GL10Optimality}, 
but it would also make the definitions more cumbersome and so we
choose not to do so here.

\begin{theorem}[\cite{GL10Optimality}]
  \label{th:Galesi-Lauria-theorem}
  For a non-bipartite graph $\graphg$ that is an
  $(\expsize, \expfactor)$-boundary expander it holds that 
    $\mdegreeref{\gop} > \expfactor \expsize / 4$.  
\end{theorem}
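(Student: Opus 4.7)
The plan is to apply \refcor{cor:FVStructureTheoremWithoutSlack} by building an \fvstructure $\fvstructurename$ for $\gop$. Put the transitivity, antisymmetry, and totality axioms into $\fixedconstraints$, let $\fpartition = \setdescr{F_v}{v \in \vertices{\graphg}}$ where $F_v$ is the singleton formula consisting of the non-minimality axiom at vertex~$v$, and cluster the variables as $\vpartition = \setdescr{V_v}{v \in \vertices{\graphg}}$ with $V_v = \setdescr{x_{u,v}, x_{v,u}}{u \neq v}$. Each variable $x_{u,v}$ lies in exactly the two clusters $V_u$ and $V_v$, so the \voverlapterm is $\voverlapnot{\vpartition} = 2$, and the union of the clusters covers all variables of $\gop$.

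Next, I would check that each $V_v$ \rsatrespects $\fixedconstraints$ via the ``$v$ is the minimum'' assignment $\assignmenta_v^{\min}$ that sets $x_{v,u} = 1$ and $x_{u,v} = 0$ for all $u \neq v$: the antisymmetry and totality clauses touched by $V_v$ are immediately satisfied, and a three-case analysis on the position of $v$ within a transitivity axiom shows that every touched transitivity clause is also satisfied. The symmetric assignment $\assignmenta_v^{\max}$ obtained by flipping all values likewise respects $\fixedconstraints$ and additionally satisfies the non-minimality axiom $F_v$.

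Reading off the edges of $\fvstructurename$, one sees that $F_u$ and $V_v$ are neighbours iff $v \in \neighbor{u} \union \set{u}$, and every such edge is \rsatrespectful: when $v \neq u$ we use $\assignmenta_v^{\min}$, which sets $x_{v,u} = 1$ and thereby satisfies $F_u$; when $v = u$ we use $\assignmenta_v^{\max}$, which satisfies $F_v$. Now fix any $\fpartition' \subseteq \fpartition$ of size at most $\expsize$ and let $U' \subseteq \vertices{\graphg}$ be the corresponding vertex set. For every $v \in \boundary{U'}$ with unique $\graphg$-neighbour $u \in U'$, the cluster $V_v$ is a neighbour of $F_u$ but of no other $F_{u'} \in \fpartition'$, so $V_v \in \sboundary{\fpartition'}$ by the \rsatrespectfulness just established. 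Hence $\setsize{\sboundary{\fpartition'}} \geq \setsize{\boundary{U'}} \geq \expfactor \setsize{U'}$, which shows that $\fvstructurename$ is an \rboundaryexpandershort{\expsize}{\expfactor}{0}{\fixedconstraints}. Applying \refcor{cor:FVStructureTheoremWithoutSlack} with $\setdegd = 2$ then yields $\mdegreeref{\gop} > \expfactor \expsize / (2 \cdot 2) = \expfactor \expsize / 4$.

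The step requiring the most care will be the verification that $\assignmenta_v^{\min}$ and $\assignmenta_v^{\max}$ respect the transitivity axioms: every such axiom has its three variables distributed across three different clusters, so when $V_v$ touches a transitivity axiom it sets exactly two of its three literals. One must then check in each of the three positional cases (whether $v$ appears first, second, or last in the transitivity triple) that the chosen two values actually satisfy the clause regardless of the untouched third literal, since otherwise the clustering would fail to be \rsatrespectful and the expansion argument for $\fvstructurename$ would collapse.
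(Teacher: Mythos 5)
Your proposal is correct and follows essentially the same route as the paper: $\fixedconstraints$ consists of the transitivity, anti-symmetry, and totality axioms, $\fpartition$ is the collection of non-minimality axioms as singletons, $\vpartition$ groups variables by vertex with overlap~$2$, every $\fvstructurename$-neighbour is shown \radjective via the ``make $v$ minimal'' (or, for $v=u$, ``make $v$ maximal'') assignment, and the \rboundary expansion is read off from the boundary expansion of~$\graphg$ before invoking \refcor{cor:FVStructureTheoremWithoutSlack} with $\setdegd = 2$. The three-case transitivity check you flag as the delicate step does go through exactly as you anticipate, so there is no gap.
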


\ifthenelse{\boolean{conferenceversion}}{  \begin{proof}[Proof sketch]
    To form the \fvstructure for~$\gop$, we let $\fixedconstraints$
    consist of all transitivity axioms~\eqref{eq:gop-transitivity},
    anti-symmetry axioms~\eqref{eq:gop-antisymmetry}, and totality
    axioms~\eqref{eq:gop-totality}.  The non-minimality
    axioms~\eqref{eq:gop-nonminimality} viewed as singleton sets form
    the family $\fpartition$, while $\vpartition$ is the family of
    variable sets $\variablesv_{\vertexv}$ for each vertex $\vertexv$
    containing all variables that mention $\vertexv$, \ie
    $\variablesv_{\vertexv} = \setdescr {x_{\vertexu, \vertexw}}
    {\vertexu, \vertexw \in \vertices{\graphg},\, \vertexu = \vertexv
      \text{ or } \vertexw = \vertexv}$.
    We leave it to the reader to verify that $\fvstructurenot$ is
    an~\rboundaryexpandershort{\expsize}{\expfactor}{0}{\fixedconstraints}
    and that the \voverlapterm $\voverlapnot{\vpartition}$ is $2$,
    which implies the lower bound.
  \end{proof}
}{\begin{proof}
  To form the \fvstructure for~$\gop$, we let $\fixedconstraints$
  consist of all transitivity axioms~\eqref{eq:gop-transitivity},
  anti-symmetry axioms~\eqref{eq:gop-antisymmetry},
  and totality axioms~\eqref{eq:gop-totality}.
  The non-minimality
  axioms~\eqref{eq:gop-nonminimality} viewed as singleton sets 
  form the family $\fpartition$, while $\vpartition$ is the family
  of variable sets $\variablesv_{\vertexv}$ for each vertex
  $\vertexv$ containing all variables that mention
  $\vertexv$, \ie
  $\variablesv_{\vertexv} = 
  \setdescr
  {x_{\vertexu, \vertexw}}
  {\vertexu, \vertexw \in \vertices{\graphg},\, 
    \vertexu = \vertexv \text{ or }
    \vertexw = \vertexv}$.

  For a vertex $\vertexu$, the neighbours of
  a non-minimality axiom
  $\formf_{\vertexu} 
  = \Lor_{\vertexv \in \neighbor{\vertexu}} x_{\vertexv, \vertexu} 
  \in \fpartition$
  are
  variable sets~$\variablesv_{\vertexv}$ where~$\vertexv$ is either
  equal to $\vertexu$ or a neighbour of~$\vertexu$ in~$\graphg$. We
  can prove that each
  $\variablesv_{\vertexv} \in \neighbor{\formf_{\vertexu}}$ 
  is an \rneighbour of~$\formf_{\vertexu}$
  (although the particular neighbour~$\variablesv_{\vertexu}$ will not
  contribute in the proof of the lower bound).
  If $\vertexv \neq \vertexu$,
  then
  setting all the
  variables~$x_{\vertexv, \vertexw} \in \variablesv_{\vertexv}$ to
  true and all the
  variables~$x_{\vertexw, \vertexv} \in \variablesv_{\vertexv}$ to
  false 
  (\ie making~$\vertexv$ into the minimal element of the set)
  satisfies $\formf_{\vertexu}$ as well as 
  all the affected axioms in~$\fixedconstraints$. 
  If~$\vertexv = \vertexu$, we can use a complementary assignment to
  the one above (\ie making~$\vertexv = \vertexu$ into the 
  maximal element of the set) to \radverb satisfy
  $\formf_{\vertexu}$. 
  Observe that this also shows that
  all $\variablesv_{\vertexv} \in \vpartition$ 
  \rsatrespect~$\fixedconstraints$ as required by
  \refdef{def:f-v-graph}.

  By the analysis above,
  it holds that
  the boundary $\boundary{\vertexsetv'}$ of some 
  vertex set $\vertexsetv'$ in $\graphg$ yields
  the \rboundary
  $
  \Sboundary{\Union_{\vertexu \in \vertexsetv'} \formf_\vertexu}
  \supseteq
  \setdescr{\variablesv_\vertexv}{\vertexv \in \boundary{\vertexsetv'}}
  $
  in~$\fvstructurenot$.
  Thus, 
  the expansion parameters for $\fvstructurenot$ are the same 
  as those for $\graphg$
  and we can conclude that $\fvstructurenot$ is 
  an~\rboundaryexpandershort{\expsize}{\expfactor}{0}{\fixedconstraints}.

  Finally, we note that while
  $\vpartition$ is \emph{not} a partition of the variables of~$\gop$,
  the \voverlapterm is only
  $\voverlapnot{\vpartition} = 2$ since  every variable
  $x_{\vertexu, \vertexv}$ occurs in exactly two 
  sets $\variablesv_{\vertexu}$ and $\variablesv_{\vertexv}$
  in~$\vpartition$. Hence, by
  \refcor{cor:FVStructureTheoremWithoutSlack} the degree of refuting
  $\gop$ is greater than $\expfactor \expsize / 4$.
\end{proof}
}

\ifthenelse{\boolean{conferenceversion}}{}{  With the previous theorem in hand, we can prove 
    (a version of)
  the main result in~\cite{GL10Optimality},
    namely that
  there exists a family of $5$-CNF formulas 
          witnessing that the lower bound on size in terms of degree in
  \refth{th:IPSPCTheorem} is essentially optimal. 
  That is, there are formulas over $N$~variables that can be refuted in
  polynomial calculus (in fact, in resolution) in size polynomial in~$N$
  but require degree~$\Bigomega{\sqrt{N}}$.
  This follows by plugging expanders with suitable parameters into
  \refth{th:Galesi-Lauria-theorem}. 
  By standard calculations (see, for example,~\cite{HLW06ExpanderGraphs})
  one can show that there exist constants $\expszfct, \expfactor > 0$
  such that randomly sampled graphs on $n$~vertices with degree at
  most~$5$ are $(\expszfct n, \expfactor)$-boundary expanders
  in the sense of \refdef{def:NonBipartiteBoundaryExpander} with high
  probability. By \refth{th:Galesi-Lauria-theorem}, graph ordering
  principle formulas on such graphs yield $5$-CNF formulas over
  $\Bigtheta{n^2}$~variables that require degree~$\bigomega{n}$. Since
  these formulas have polynomial calculus refutations in
  size~$\Bigoh{n^3}$ (just mimicking the resolution refutations
  constructed in~\cite{Stalmarck96ShortResolutionProofs}), this shows
  that the bound in \refth{th:IPSPCTheorem} is essentially tight.
  The difference between this bound and~\cite{GL10Optimality} is that
  since a weaker form of expansion is
  required
  in~\cite{GL10Optimality} it
  is possible to use $3$\nobreakdash-regular graphs, yielding families
  of \mbox{$3$-CNF} formulas.
      }

Let us now turn our attention back to bipartite graphs and consider
different flavours of pigeonhole principle formulas. We will 
focus on
formulas 
over
bounded-degree bipartite graphs,
where we will convert standard bipartite boundary expansion as
in~\refdef{def:bipartite-boundary-expander} into \rsatrespectful
boundary expansion as in \refdef{def:F-V-boundary-expansion}.
For a bipartite graph
$\graphg = (\vertexsetu \disunion \vertexsetv,\edgesete)$ 
the axioms appearing in the different versions of the graph pigeonhole
principle formulas are as follows:
\begin{subequations}
  \begin{align}
    \label{eq:PHPPigeonAxioms}
    & \Lor_{\vertexv \in \neighbor{\vertexu}} \varx_{\vertexu, \vertexv}
    & & 
    \vertexu \in \vertexsetu
    & & 
    \text{(pigeon axioms)} 
    \\
    \label{eq:PHPHoleAxioms}
    & 
    \olnot{\varx}_{\vertexu, \vertexv} \lor
    \olnot{\varx}_{\vertexu', \vertexv} 
    & &
    \vertexv \in \vertexsetv, \,
    \vertexu, \vertexu' \in \neighbor{\vertexv}, \,
    \vertexu \neq \vertexu', 
    & & 
    \text{(hole axioms)} 
    \\
    \label{eq:PHPFunctionality}
    & 
    \olnot{\varx}_{\vertexu, \vertexv} \lor 
    \olnot{\varx}_{\vertexu, \vertexv'} 
    & &  
    \vertexu \in \vertexsetu, \,
    \vertexv, \vertexv' \in \neighbor{\vertexu}, \, 
    \vertexv \neq \vertexv'
    & & 
    \text{(functionality axioms)} 
    \\
    \label{eq:PHPOntoAxioms}
    & \Lor_{\vertexu \in \neighbor{\vertexv}} \varx_{\vertexu, \vertexv}
    & & 
    \vertexv \in \vertexsetv
    & & 
    \text{(onto axioms)} 
  \end{align}
\end{subequations}
The ``plain vanilla''
\introduceterm{graph pigeonhole principle formula}~$\GraphPHPnot$
is the CNF formula over variables 
$\setdescr{\varx_{\vertexu, \vertexv}}{(\vertexu, \vertexv) \in \edgesete}$
consisting of clauses~\refeq{eq:PHPPigeonAxioms}
and~\refeq{eq:PHPHoleAxioms}; 
the \introduceterm{graph functional pigeonhole principle formula}
$\GraphFPHPnot$
contains the clauses of $\GraphPHPnot$ and in addition
clauses~\refeq{eq:PHPFunctionality};
the \introduceterm{graph onto pigeonhole principle formula}
$\GraphOntoPHPnot$  contains $\GraphPHPnot$ 
plus clauses~\refeq{eq:PHPOntoAxioms};
and the \introduceterm{graph onto functional pigeonhole principle formula}
$\GraphOntoFPHPnot$
consists of all  the
clauses~\refeq{eq:PHPPigeonAxioms}\nobreakdash--\refeq{eq:PHPOntoAxioms}. 

We obtain the standard versions of the PHP formulas by considering
graph formulas as above over the complete bipartite graph~$K_{n + 1, n}$.
In the opposite direction, for any bipartite graph~$\graphg$ 
with \mbox{$n+1$~vertices} on the left and
$n$~vertices on the right we can hit any version of the pigeonhole
principle formula over~$K_{n + 1, n}$ 
with the  restriction~$\rstd_\graphg$ setting
$\varx_{\vertexu,\vertexv}$ to false for all
$(\vertexu,\vertexv) \notin \edgesete(\graphg)$
to recover the corresponding graph pigeonhole principle formula
over~$\graphg$. 
When doing so, we will use the observation from 
\refsec{sec:preliminaries}
that restricting a formula can only decrease the size and degree
required to refute it.

As mentioned in \refsec{sec:intro}, it
was established already in \cite{AR03LowerBounds} that 
good bipartite boundary expanders~$\graphg$ yield
formulas $\graphphpnot$ that require large
polynomial calculus degree
to refute. We can reprove this result in our language---and, in fact, 
observe that the lower bound in~\cite{AR03LowerBounds} works also for
the onto version~$\GraphOntoPHPnot$---by
constructing an appropriate \fvstructure. In addition, we can
generalize the result in~\cite{AR03LowerBounds}  slightly by allowing some 
additive slack $\expslack > 0$ in the 
expansion in \refth{th:FVStructureTheorem}.
This works as long as we have the guarantee that  
no too small subformulas are unsatisfiable.

\begin{theorem}
  \label{th:mn14-php-theorem}
  Suppose that
  $\graphg = (\vertexsetu \disunion \vertexsetv, \edgesete)$ 
  is a bipartite graph with
  $\setsize{\vertexsetu} = \paramn$ and $\setsize{\vertexsetv} = \paramn - 1$
  and that
  $\expfactor > 0$ is a constant such that
  \begin{itemize}

  \item for every set $\vertexsetu' \subseteq \vertexsetu$ of size
    $\setsize{\vertexsetu'} \leq \expsize$ there is a matching of
    $\vertexsetu'$ into $\vertexsetv$, and

  \item for every set $\vertexsetu' \subseteq \vertexsetu$ of size
    $\setsize{\vertexsetu'} \leq \expsize$ it holds that
    $\setsize{\boundary{\vertexsetu'}} \geq \expfactor
    \setsize{\vertexsetu'} - \expslack$.
  \end{itemize}
  Then
  $\mdegreeref{\GraphOntoPHPnot} > 
  \expfactor \expsize / 2 - \expslack$.
\end{theorem}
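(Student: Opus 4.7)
\medskip

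\noindent\textbf{Proof plan.} The plan is to construct a suitable \fvstructure for $\GraphOntoPHPnot$, verify that it inherits the boundary expansion from $\graphg$, and then apply \refth{th:FVStructureTheorem}. Concretely, let $\fpartition = \setdescr{\formf_{\vertexu}}{\vertexu \in \vertexsetu}$ with $\formf_{\vertexu} = \Lor_{\vertexv \in \neighbor{\vertexu}} \varx_{\vertexu,\vertexv}$ being the pigeon axiom, let $\vpartition = \setdescr{\variablesv_{\vertexv}}{\vertexv \in \vertexsetv}$ with $\variablesv_{\vertexv} = \setdescr{\varx_{\vertexu,\vertexv}}{\vertexu \in \neighbor{\vertexv}}$, and let $\fixedconstraints$ consist of all hole axioms~\refeq{eq:PHPHoleAxioms} and onto axioms~\refeq{eq:PHPOntoAxioms}. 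Each variable $\varx_{\vertexu,\vertexv}$ belongs to exactly one set $\variablesv_{\vertexv}$, so the \voverlapterm{} is $\voverlapnot{\vpartition} = 1$.

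The first thing to check is that every $\variablesv_{\vertexv}$ \rsatrespects $\fixedconstraints$: picking any $\vertexu_0 \in \neighbor{\vertexv}$ and setting $\varx_{\vertexu_0,\vertexv} = 1$ while setting $\varx_{\vertexu,\vertexv} = 0$ for $\vertexu \in \neighbor{\vertexv} \setminus \set{\vertexu_0}$ simultaneously satisfies all hole axioms touching $\vertexv$ as well as the onto axiom for $\vertexv$, and leaves every other axiom of $\fixedconstraints$ untouched. The same assignment witnesses that $\variablesv_{\vertexv}$ is an \rneighbour of $\formf_{\vertexu_0}$ for every $\vertexu_0 \in \neighbor{\vertexv}$, so the bipartite graph of \rneighbour relations between $\fpartition$ and $\vpartition$ coincides with $\graphg$ itself. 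In particular, the \rboundary $\sboundary{\fpartition'}$ of a subset $\fpartition' \subseteq \fpartition$ corresponding to $\vertexsetu' \subseteq \vertexsetu$ is exactly the image of the ordinary graph boundary $\boundary{\vertexsetu'}$ under $\vertexv \mapsto \variablesv_{\vertexv}$, and the boundary expansion assumption on $\graphg$ translates directly into the statement that $\fvstructurenot$ is an \rboundaryexpandershortstd{}.

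It remains to verify that $\fpartition' \land \fixedconstraints$ is satisfiable whenever $\setsize{\fpartition'} \leq \expsize$, and this is where the matching hypothesis enters. Given $\vertexsetu' \subseteq \vertexsetu$ of size at most $\expsize$, fix a matching $M \colon \vertexsetu' \to \vertexsetv$ guaranteed by the hypothesis, set $\varx_{\vertexu,M(\vertexu)} = 1$ for each $\vertexu \in \vertexsetu'$, and for every unmatched hole $\vertexv \in \vertexsetv \setminus M(\vertexsetu')$ pick an arbitrary $\vertexu_{\vertexv} \in \neighbor{\vertexv}$ and set $\varx_{\vertexu_{\vertexv},\vertexv} = 1$; set all other variables to $0$. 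Pigeon axioms for $\vertexsetu'$ are satisfied by construction; each hole has exactly one true incident variable, so the hole axioms are satisfied; and every hole receives some pigeon, so the onto axioms are satisfied. (Note that $\vertexu_{\vertexv}$ may belong to $\vertexsetu'$, since in \textsc{Onto-PHP} a pigeon is allowed to sit in several holes.)

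With all the hypotheses of \refth{th:FVStructureTheorem} verified, plugging in $\setdegd = 1$ yields the bound $\mdegreeref{\GraphOntoPHPnot} > (\expfactor \expsize - 2 \expslack)/2 = \expfactor \expsize/2 - \expslack$, as claimed. The arguments are essentially routine modulo \refth{th:FVStructureTheorem}; the only step that requires a moment of thought is the satisfiability of $\fpartition' \land \fixedconstraints$, where the crucial observation is that onto axioms cost nothing to satisfy once a matching for $\vertexsetu'$ is in hand, because the absence of functionality axioms in $\GraphOntoPHPnot$ lets us reuse pigeons across unmatched holes.
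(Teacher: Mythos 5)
Your proposal is correct and follows essentially the same approach as the paper's proof sketch: same choice of $\fpartition$ (pigeon axioms), $\fixedconstraints$ (hole and onto axioms), and $\vpartition$ (singleton hole-indexed variable sets), same verification that the \fvstructure is isomorphic to $\graphg$ with all neighbours \rsatrespectful, and the same matching-plus-reuse-of-pigeons argument for satisfiability of small subformulas. You have simply filled in the routine details that the paper explicitly defers to the full-length version of~\cite{MN14LongProofs}.
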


\begin{proof}[Proof sketch]
  The \fvstructure 
  for $\graphphpnot$ 
  is formed by taking $\fpartition$ to be 
  the
  set of
  pigeon axioms~\eqref{eq:PHPPigeonAxioms}, 
  $\fixedconstraints$~to consist of
  the hole axioms~\eqref{eq:PHPHoleAxioms}
  and onto axioms~\eqref{eq:PHPOntoAxioms}, and
  $\vpartition$~to be the collection of variable sets
  $\variablesv_{\vertexv} = \setdescr{\varx_{\vertexu,
      \vertexv}}{\vertexu \in \neighbor{\vertexv}}$
  partitioned with respect to the holes~$\vertexv \in \vertexsetv$. It
  is straightforward to 
    check
  that this \fvstructure is isomorphic to the graph $\graphg$
  and that all neighbours in
  $\fvstructurenot$ are \radjective
  (for
  $\Lor_{\vertexv \in \neighbor{\vertexu}} \varx_{\vertexu, \vertexv}
  \in  \fpartition$
  and
  $\variablesv_\vertexv$ for some
  $\vertexv \in \neighbor{\vertexu}$,
  apply the partial assignment sending
  pigeon~$\vertexu$ to hole~$\vertexv$
  and ruling out all other pigeons in~$\neighbor{\vertexv} \setminus
  \set{\vertexu}$ for~$\vertexv$).
  Moreover, using the existence of
  matchings for all sets of pigeons~$\vertexsetu'$ of size
  $\setsize{\vertexsetu'} \leq \expsize$ we can prove 
  that every subformula $\fpartition' \land \fixedconstraints$ is
  satisfiable as long as~$\setsize{\fpartition'} \leq
  \expsize$. Hence, we can apply 
  \refth{th:FVStructureTheorem} 
  to derive the claimed bound.
  We refer to the upcoming full-length version of~\cite{MN14LongProofs}
  for the omitted details.
\end{proof}

\Refth{th:mn14-php-theorem} is the only place in this paper where we
use non-zero slack for the expansion. 
The reason that we need slack is so that we can
establish lower bounds for another type of formulas, namely the subset
cardinality formulas studied in
\ifthenelse{\boolean{conferenceversion}}
{\cite{MN14LongProofs,Spence10Sgen1,VS10ZeroOneDesigns}.} 
{\cite{Spence10Sgen1,VS10ZeroOneDesigns,MN14LongProofs}.} 
A brief (and somewhat informal) description of these formulas is as
follows. We start with a $4$-regular
bipartite graph to which we add an extra edge between two
non-connected vertices.
We then write down clauses stating that each \mbox{degree-$4$}
vertex on the left has at least $2$~of its edges set to true, while
the single degree-$5$ vertex has a strict majority of $3$~incident
edges set to true. On the right-hand side of the graph we 
encode the opposite, namely that all vertices with degree~$4$ have at
least $2$~of its edges set to false, while the vertex with degree~$5$
has at least $3$~edges set to false. \mbox{A simple} counting argument 
yields that the CNF formula consisting of these clauses must be
unsatisfiable. Formally, we have the following definition (which
strictly speaking is a slightly specialized case of the general
construction, but again we refer to~\cite{MN14LongProofs} for the
details).

\begin{definition}[Subset cardinality formulas 
    \ifthenelse{\boolean{conferenceversion}}    {\cite{MN14LongProofs, VS10ZeroOneDesigns}}    {\cite{VS10ZeroOneDesigns, MN14LongProofs}}  ]
  \label{def:sv-formulas}
  Suppose that
  $\graphg = (\vertexsetu \disunion \vertexsetv, \edgesete)$ is a
  bipartite graph that is $4$\nobreakdash-regular except that one
  extra edge has been added between two unconnected vertices on the
  left and right.
  Then the \introduceterm{subset cardinality formula}
  $\subcardnot{\graphg}$ over $\graphg$ has variables
  $\varx_{\edgee}, \edgee \in \edgesete$, and clauses:
  \begin{itemize}
  \item $\varx_{\edgee_1} \lor \varx_{\edgee_2} \lor \varx_{\edgee_3}$
    for every triple $\edgee_1, \edgee_2, \edgee_3$ of edges incident
    to any $\vertexu \in \vertexsetu$,
  \item $\olnot{\varx}_{\edgee_1} \lor \olnot{\varx}_{\edgee_2} \lor
    \olnot{\varx}_{\edgee_3}$ for every triple $\edgee_1, \edgee_2,
    \edgee_3$ of edges incident to any $\vertexv \in \vertexsetv$.
  \end{itemize}
\end{definition}

To prove lower bounds on refutation degree for these formulas
we use the standard notion of vertex
expansion on bipartite graphs, where all neighbours on the left are
counted and not just unique neighbours as in
\refdef{def:bipartite-boundary-expander}.

\begin{definition}[Bipartite expander]
  A bipartite graph
  $\graphg = (\vertexsetu \disunion \vertexsetv, \edgesete)$ is a
  \introduceterm{bipartite $(\expsize, \expfactor)$-expander} if for
  each vertex set
  $\vertexsetu' \subseteq \vertexsetu, \setsize{\vertexsetu'} \leq
  \expsize$,
  it holds that
  $\setsize{\neighbor{\vertexsetu'}} \geq \expfactor
  \setsize{\vertexsetu'}$.
\end{definition}

The existence of such expanders with appropriate parameters can again
be established by straightforward calculations (as in, for
instance,~\cite{HLW06ExpanderGraphs}). 

\begin{theorem}[\cite{MN14LongProofs}]
  \label{th:mn-subset-cardinality-lb}
  Suppose that
  $\graphg = (\vertexsetu \disunion \vertexsetv, \edgesete)$ is a
  $4$-regular bipartite
  $\bigl(\expszfct \paramn, \frac{5}{2} + \expfactor\bigr)$-expander
  for \mbox{$\setsize{\vertexsetu} = \setsize{\vertexsetv} = \paramn$}
  and some constants $\expszfct, \expfactor > 0$, and let $\graphg'$ be
  obtained from~$\graphg$ by adding an arbitrary edge 
  between two unconnected vertices in~$\vertexsetu$ and~$\vertexsetv$. 
  Then refuting
  the formula
  $\subcardnot{\graphg'}$ requires degree
  $\mdegreeref{\subcardnot{\graphg'}} =\bigomega{n}$, 
  and hence size
  $\sizeref[\pcrnot]{\subcardnot{\graphg'}} =
  \exp\bigl(\bigomega{n}\bigr)$. 
\end{theorem}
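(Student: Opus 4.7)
The plan is to apply \refth{th:FVStructureTheorem} to an appropriately designed \fvstructure for $\subcardnot{\graphg'}$, leveraging the nonzero slack~$\expslack$ to absorb the perturbation from the extra edge in~$\graphg'$ and from the two degree-$5$ vertices. First I fix a perfect matching $M$ of~$\graphg'$, which exists since the graph is essentially $4$-regular with balanced sides and therefore satisfies Hall's condition (or by König's theorem on regular bipartite graphs). For each $\vertexu \in \vertexsetu$, I write $e_\vertexu^* = (\vertexu, M(\vertexu))$ and let $\formf_\vertexu$ be a single positive $3$-clause at~$\vertexu$ whose triple of edges does \emph{not} contain~$e_\vertexu^*$. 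Then I set $\fpartition = \setdescr{\formf_\vertexu}{\vertexu \in \vertexsetu}$, place all remaining clauses of $\subcardnot{\graphg'}$ into~$\fixedconstraints$, and define $\vpartition = \setdescr{V_\vertexv}{\vertexv \in \vertexsetv}$ with $V_\vertexv = \setdescr{x_e}{e \ni \vertexv}$. Since each edge has a unique $\vertexsetv$-endpoint, $\vpartition$ partitions the variables and $\voverlapnot{\vpartition} = 1$.

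Next I verify the hypotheses of \refth{th:FVStructureTheorem}. In $\fvstructurename$, the subformula $\formf_\vertexu$ and the variable set $V_\vertexv$ are neighbours exactly when~$\vertexv$ is the $\vertexsetv$-endpoint of one of the three edges in the triple defining~$\formf_\vertexu$, so the bipartite \fvstructure essentially coincides with $\graphg' \setminus M$, whose left-degree is~$3$. Every such neighbour pair is \rsatrespectful: assigning $x_{(\vertexu, \vertexv)} = 1$ and the other three variables of~$V_\vertexv$ to~$0$ satisfies the clause in~$\formf_\vertexu$, leaves all clauses at $\vertexu' \neq \vertexu$ untouched, and keeps every negative triple at~$\vertexv$ satisfied (exactly one of three literals is falsified). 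Applying the standard bound $\Abs{\partial_H(\vertexsetu')} \geq 2\Abs{N_H(\vertexsetu')} - 3\Abs{\vertexsetu'}$ to $H = \graphg \setminus M$, the $(\expszfct n, \frac{5}{2} + \expfactor)$-vertex expansion of~$\graphg$ yields vertex expansion at least $\frac{3}{2} + \expfactor$ for~$H$ and hence boundary expansion at least~$2\expfactor$, which carries over to~$\fvstructurename$ up to an additive $\Ordo{1}$ slack~$\expslack$ that absorbs the extra edge in~$\graphg'$ and the asymmetry at its two degree-$5$ vertices.

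Finally, for satisfiability of $\fpartition' \land \fixedconstraints$ when $\fpartition' = \setdescr{\formf_\vertexu}{\vertexu \in U'}$ with $\Abs{U'} \leq \expszfct n$, I set $x_{e_\vertexu^*} = 1$ for every $\vertexu \in \vertexsetu$ (giving exactly one true edge at each~$\vertexv$ via the matching) and then add one more true edge at each $\vertexu \in U' \union \set{\vertexu_0}$, where $\vertexu_0$ is the degree-$5$ $\vertexsetu$-vertex; these additional edges can be routed to pairwise distinct $\vertexsetv$-vertices by Hall's theorem on $\graphg' \setminus M$, whose vertex expansion remains strictly greater than~$1$. Setting all remaining variables to~$0$ produces an assignment satisfying $\fpartition' \land \fixedconstraints$: the ``at least $2$ true'' constraints at each $\vertexu \in U' \union \set{\vertexu_0}$ are met, the residual $\fixedconstraints$ at each $\vertexu \notin U'$ is satisfied by~$e_\vertexu^*$ alone, and each~$\vertexv$ receives at most two true edges, respecting its negative clauses. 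Then \refth{th:FVStructureTheorem} with $\setdegd = 1$ yields $\mdegreeref{\subcardnot{\graphg'}} > (2\expfactor \cdot \expszfct n - 2\expslack)/2 = \bigomega{n}$, and the size bound $\sizeref[\pcrnot]{\subcardnot{\graphg'}} = \exp(\bigomega{n})$ follows from \refth{th:IPSPCTheorem} since $\subcardnot{\graphg'}$ is a $3$-CNF over $\bigtheta{n}$ variables. The main obstacle is the bookkeeping at~$\vertexu_0$ and the corresponding degree-$5$ $\vertexsetv$-vertex, where both the \rboundary expansion estimate and the matching argument for satisfiability degrade by a constant amount; accommodating this is precisely why nonzero slack~$\expslack$ in \refth{th:FVStructureTheorem} is indispensable for this application.
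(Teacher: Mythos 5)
There is a genuine gap in your respectfulness argument. You claim that the assignment setting $x_{(\vertexu,\vertexv)} = 1$ and the other variables of $V_{\vertexv}$ to~$0$ ``leaves all clauses at $\vertexu' \neq \vertexu$ untouched,'' but this is false. For each $\vertexu' \in N(\vertexv) \setminus \set{\vertexu}$, the assignment sets $x_{(\vertexu',\vertexv)} = 0$, and this variable appears in three of the four positive triples at~$\vertexu'$. Only one of those triples is the distinguished $\formf_{\vertexu'}$ in~$\fpartition$; the remaining two (or more, depending on whether $(\vertexu',\vertexv) = e_{\vertexu'}^*$) live in~$\fixedconstraints$. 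They are touched by~$\assignmenta$, and since the only variable of theirs in $V_{\vertexv}$ is set to false, they are not satisfied by~$\assignmenta$ either. This directly violates \refdef{def:respectful-assignments}, so the pair $(\formf_{\vertexu}, V_{\vertexv})$ is not \radjective under this assignment, and in fact no assignment with domain exactly $V_{\vertexv}$ can fix this: satisfying the problematic triples at all three $\vertexu' \neq \vertexu$ would require setting $x_{(\vertexu',\vertexv)} = 1$ for all of them, which would put four true edges at~$\vertexv$ and falsify a negative clause in~$\fixedconstraints$. The construction is therefore broken, not merely in need of slack to absorb the degree-$5$ anomalies.

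The difficulty is that the subset cardinality formula has several positive clauses per left vertex, so one cannot place them all in the singleton $\formf_{\vertexu} \in \fpartition$ (then $V_{\vertexv}$ is too small to satisfy them) and one cannot leave them in~$\fixedconstraints$ either (then they block respectfulness). The paper's route, and the one in the full version of~\cite{MN14LongProofs}, avoids this by \emph{first} applying a restriction: fix a perfect matching in~$\graphg$, set all matching edges and the extra edge to true, and propagate (which forces the remaining edges at the degree-$5$ right vertex to false and removes that vertex). The restricted formula is then a graph pigeonhole principle formula with exactly one pigeon axiom per remaining left vertex, to which \refth{th:mn14-php-theorem} applies directly; since restrictions can only decrease refutation degree, the bound transfers back to $\subcardnot{\graphg'}$. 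Your matching, your expansion estimate $\setsize{\boundary{\vertexsetu'}} \geq 2\setsize{N(\vertexsetu')} - 3\setsize{\vertexsetu'}$, and your use of nonzero slack are all on target, but they need to be applied \emph{after} restricting, not to the unrestricted formula.
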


\begin{proof}[Proof sketch]
  The proof is by reducing to graph PHP
  formulas and applying \refth{th:mn14-php-theorem}
  (which of course also holds with onto axioms removed). 
  We fix some complete matching in~$\graphg$, 
  which is guaranteed to exist in
  regular bipartite graphs, and then set all edges in the matching
  as well as the extra added edge to true. Now the \mbox{degree-$5$} vertex
  $\vertexv^*$ on 
  the right has only $3$~neighbours and 
  the constraint for~$\vertexv^*$ requires all of these edges to be
  set to false. Hence, we set these edges to false as well 
  which makes
  $\vertexv^*$ and its clauses vanish from the formula. The
  restriction leaves us with $n$~vertices on the left which require
  that at least $1$~of the remaining $3$~edges incident to them is
  true, while the \mbox{$n-1$ vertices} on the right require that at
  most $1$~out of  their incident edges is true. 
  That is, we have restricted our subset cardinality formula to obtain
  a graph PHP formula.

  As the original graph is a
  $(\expszfct \paramn, \frac{5}{2} + \expfactor)$-expander, a simple
  calculation can convince us that the new graph is a boundary
  expander where each set of vertices $\vertexsetu'$ on the left with
  size $\setsize{\vertexsetu'} \leq \expszfct \paramn$ has 
  boundary expansion
  $\setsize{\boundary{\vertexsetu'}} \geq 2 \expfactor
  \setsize{\vertexsetu'} - 1$.
  Note the additive slack of~$1$ compared to the usual expansion
  condition, which 
  is caused by the
  removal of the \mbox{degree-$5$} vertex
  $\vertexv^*$ from the right. 
  Now we can appeal to \refth{th:mn14-php-theorem} 
  (and \refth{th:IPSPCTheorem})
  to obtain the lower bounds claimed in the theorem.
\end{proof}

Let us conclude this section by presenting our new lower bounds for
the functional pigeonhole principle formulas.
As a first attempt, we could try to reason as in the proof of
\refth{th:mn14-php-theorem} 
 (but adding the
axioms~\refeq{eq:PHPFunctionality} and removing
axioms~\eqref{eq:PHPOntoAxioms}).
The naive idea would be to modify our \fvstructure slightly by
substituting the functionality axioms for the onto axioms
in~$\fixedconstraints$ while keeping 
$\fpartition$ and~$\vpartition$ the same.
This does not work, however---although the sets
$\variablesv_{\vertexv} \in \vpartition$ are \radjective, the only
assignment that \rsatrespects $\fixedconstraints$ is the one
that sets all variables $\varx_{\vertexu, \vertexv} \in
\variablesv_{\vertexv}$ to false.
Thus, it is not possible to satisfy any of the pigeon axioms, meaning
that there are no \radjective neighbours in $\fvstructurenot$.
In order to obtain a useful \fvstructure, we instead need to redefine
$\vpartition$ by enlarging the variable sets $\variablesv_{\vertexv}$,
using the fact that $\vpartition$ is not required to be a partition.
Doing so in the appropriate way yields the following theorem.

\begin{theorem}
  \label{th:FPHPDegreeLowerBound}
  Suppose that
  $\graphg = (\vertexsetu \disunion \vertexsetv, \edgesete)$
  is a bipartite $(\expsize, \expfactor)$-boundary expander
  with left degree bounded by~$\graphdeg$. Then it holds that refuting
  $\GraphFPHPnot[\graphg]$
  in polynomial calculus requires degree strictly greater than
  $\expfactor \expsize / (2 \graphdeg)$.
  It follows that 
  if $\graphg$ is a bipartite
  $(\expszfct \paramn, \expfactor)$-boundary expander with
  constant left degree \mbox{and $\expszfct, \expfactor > 0$}, then any
  polynomial calculus (PC or PCR)
  refutation of $\GraphFPHPnot[\graphg]$ requires size
  $\exp(\bigomega{\paramn})$.
\end{theorem}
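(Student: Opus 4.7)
The plan is to construct an appropriate \fvstructure{} for $\GraphFPHPnot[\graphg]$ and then apply \refcor{cor:FVStructureTheoremWithoutSlack}. As noted in the paragraph immediately preceding the theorem, the naive choice $\variablesv_{\vertexv} = \{\varx_{\vertexu,\vertexv} : \vertexu \in \neighbor{\vertexv}\}$ fails: the combination of hole axioms and functionality axioms in~$\fixedconstraints$ forces every \rrespectful assignment on such a set to assign all variables to~$0$, so no pigeon axiom can be \rsatisfied. The fix is to enlarge each variable set so that an assignment can set $\varx_{\vertexu,\vertexv}=1$ \emph{and} safely satisfy the functionality axioms on pigeon~$\vertexu$ at the same time.

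Concretely, I would take $\fpartition$ to consist of the pigeon axioms~\refeq{eq:PHPPigeonAxioms} (one per $\vertexu \in \vertexsetu$), put the hole axioms~\refeq{eq:PHPHoleAxioms} and the functionality axioms~\refeq{eq:PHPFunctionality} into $\fixedconstraints$, and define, for each $\vertexv \in \vertexsetv$,
\begin{equation*}
  \variablesv_{\vertexv}
  \;=\; \setdescr{\varx_{\vertexu,\vertexv'}}
      {\vertexu \in \neighbor{\vertexv},\; \vertexv' \in \neighbor{\vertexu}}
  \eqperiod
\end{equation*}
Thus $\variablesv_{\vertexv}$ collects all edges incident to any pigeon that can land in hole~$\vertexv$. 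Note that $\vpartition$ is not a partition, but every variable $\varx_{\vertexu,\vertexv}$ lies in $\variablesv_{\vertexv'}$ only for those $\vertexv'$ with $\vertexu \in \neighbor{\vertexv'}$, i.e.\ for at most $\deg(\vertexu) \leq \graphdeg$ choices, giving $\voverlapnot{\vpartition} \leq \graphdeg$.

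Next I would verify the two local conditions. For \rrespectfulness of each $\variablesv_{\vertexv}$, the all-zero assignment on $\variablesv_{\vertexv}$ satisfies every touched hole and functionality axiom, so $\variablesv_{\vertexv}$ respects~$\fixedconstraints$. For the \rneighbour relation, I claim $\variablesv_{\vertexv}$ is an \rneighbour of the pigeon axiom for every $\vertexu \in \neighbor{\vertexv}$: the assignment setting $\varx_{\vertexu,\vertexv}=1$ and every other variable in $\variablesv_{\vertexv}$ to~$0$ satisfies the pigeon axiom for $\vertexu$; the functionality axioms at $\vertexu$ are satisfied via the zero literals at $\vertexv' \neq \vertexv$ (which lie in $\variablesv_{\vertexv}$ by construction); the hole axioms at $\vertexv$ are satisfied via the zero values on $\varx_{\vertexu'',\vertexv}$ for $\vertexu'' \in \neighbor{\vertexv}\setminus\{\vertexu\}$; and all other touched hole/functionality axioms contain a variable set to~$0$, so they too are satisfied.

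Finally, for expansion, if $\fpartition' \subseteq \fpartition$ corresponds to a pigeon set $\vertexsetu' \subseteq \vertexsetu$ with $\setsize{\vertexsetu'} \leq \expsize$, then $\variablesv_{\vertexv}$ is a neighbour of the pigeon axiom for $\vertexu \in \vertexsetu'$ precisely when $\vertexu \in \neighbor{\vertexv}$, and since every neighbour in $\fvstructurenot$ is already an \rneighbour, one sees that $\sboundary{\fpartition'} = \Setdescr{\variablesv_{\vertexv}}{\vertexv \in \boundary{\vertexsetu'}}$. Boundary expansion of $\graphg$ therefore translates directly into $\setsize{\sboundary{\fpartition'}} \geq \expfactor \setsize{\fpartition'}$, making $\fvstructurenot$ an \rboundaryexpandershort{\expsize}{\expfactor}{0}{\fixedconstraints}. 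The union of all $\variablesv_{\vertexv}$ covers every variable of the formula, so \refcor{cor:FVStructureTheoremWithoutSlack} yields $\mdegreeref{\GraphFPHPnot[\graphg]} > \expfactor \expsize / (2\graphdeg)$. The exponential size lower bound for $\graphg$ of bounded left degree and $\expsize = \expszfct \paramn$ then follows from \refth{th:IPSPCTheorem}, since the formula has constant width and $\BIGOH{\paramn}$ variables. The main (but manageable) obstacle is the definition of $\vpartition$ and the verification that the all-zero-except-one assignment simultaneously handles the functionality constraints at the chosen pigeon and the hole constraints at every \emph{other} pigeon in $\neighbor{\vertexv}$; the key observation here is that setting all those other variables to $0$ trivially respects functionality and only helps with hole axioms.
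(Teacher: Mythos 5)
Your proof matches the paper's proof essentially step for step: the same choice of $\fpartition$ (pigeon axioms), $\fixedconstraints$ (hole and functionality axioms), and enlarged variable sets $\variablesv_{\vertexv}$, the same single-true-coordinate witnessing assignment $\varx_{\vertexu,\vertexv}\mapsto 1$ with all other variables in $\variablesv_{\vertexv}$ set to $0$, the same overlap bound $\graphdeg$, and the same appeals to \refcor{cor:FVStructureTheoremWithoutSlack} and \refth{th:IPSPCTheorem}. The construction and all verifications are correct.
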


\begin{proof}
  We construct \afvstructure from $\GraphFPHPnot[\graphg]$ as follows.
  We let 
  the set of clauses
  $\fixedconstraints$ consist of all
  hole axioms~\refeq{eq:PHPHoleAxioms} and functionality
  axioms~\refeq{eq:PHPFunctionality}.
  We define the family $\fpartition$
  to consist of the pigeon axioms~\refeq{eq:PHPPigeonAxioms}
  interpreted as singleton CNF formulas.
  For the variables we let
  $
  \vpartition = 
  \setdescr{\variablesv_{\vertexv}}{\vertexv \in \vertexsetv}
  $,
  where 
  for every hole
  $\vertexv \in \vertexsetv$
  the set~$\variablesv_{\vertexv}$ is defined by
  \begin{equation}
    \label{eq:fphp-variable-sets}
    \variablesv_{\vertexv} = 
    \Setdescr{\varx_{\vertexu', \vertexv'}} 
             {\text{$\vertexu' \in \neighbor{\vertexv}$ and
                 $\vertexv' \in \neighbor{\vertexu'}$}}
             \eqperiod
  \end{equation}
  That is, to build   $\variablesv_{\vertexv}$ 
  we start with the hole~$\vertexv$ on the right, 
  consider all pigeons~$\vertexu'$ on the left that can go into this
  hole,
  and finally include in
  $\variablesv_{\vertexv}$ 
  for all such~$\vertexu'$
  the variables
  $\varx_{\vertexu', \vertexv'}$
  for all holes~$\vertexv'$
  incident to~$\vertexu'$.
  We want to show that $\fvstructurenot$ as defined above satisfies the
  conditions in \refcor{cor:FVStructureTheoremWithoutSlack}.

  Note first that every variable set~$\variablesv_{\vertexv}$ 
  respects the  
  clause set~$\fixedconstraints$ since
  setting all variables in~$\variablesv_{\vertexv}$ to false satisfies
  all clauses in~$\fixedconstraints$ mentioning variables
  in~$\variablesv_{\vertexv}$. 
  It is easy to see from~\refeq{eq:fphp-variable-sets}
  that when a hole $\vertexv$ is a neighbour of a pigeon~$\vertexu$,
  the variable set~$\variablesv_{\vertexv}$ is also a neighbour 
  in
  the \fvstructure of the
  corresponding pigeon axiom 
  $
  \formf_{\vertexu} = 
  \Lor_{\vertexv \in \neighbor{\vertexu}} \varx_{\vertexu, \vertexv}$.
  These are the only neighbours of the
  pigeon axiom~$\formf_{\vertexu}$, as each $\variablesv_{\vertexv}$
  contains only variables mentioning pigeons in the neighbourhood
  of~$\vertexv$. In other words, $\graphg$ and~$\fvstructurenot$ share
  the same neighbourhood structure.   

  Moreover,  we claim that 
  every neighbour $\variablesv_{\vertexv}$ of
  $\formf_{\vertexu}$ is an \rneighbour. To see this, consider the
  assignment~$\assignmenta_{\vertexu, \vertexv}$ that sets
  $\varx_{\vertexu, \vertexv}$ to true and the remaining variables in
  $\variablesv_{\vertexv}$ to false. Clearly, $\formf_{\vertexu}$ is
  satisfied by~$\assignmenta_{\vertexu, \vertexv}$.
  All axioms in~$\fixedconstraints$ \emph{not} containing
  $\varx_{\vertexu, \vertexv}$ are either 
  satisfied by
  $\assignmenta_{\vertexu, \vertexv}$
  or left untouched, since   $\assignmenta_{\vertexu, \vertexv}$ assigns
  all other variables in its domain to false. 
  Any hole axiom
  $\olnot{\varx}_{\vertexu, \vertexv} \lor
  \olnot{\varx}_{\vertexu', \vertexv}$
  in~$\fixedconstraints$
  that \emph{does} contain~$\varx_{\vertexu, \vertexv}$ is satisfied
  by~$\assignmenta_{\vertexu, \vertexv}$ since 
  ${\varx}_{\vertexu', \vertexv} \in \variablesv_\vertexv$
  for
  $\vertexu' \in \neighbor{\vertexv}$ by~\refeq{eq:fphp-variable-sets}
  and
  this variable is set to false   by~$\assignmenta_{\vertexu, \vertexv}$.
  In the same way, any functionality axiom
  $
  \olnot{\varx}_{\vertexu, \vertexv} \lor 
  \olnot{\varx}_{\vertexu, \vertexv'} 
  $
  containing
  $\varx_{\vertexu, \vertexv}$ is satisfied since 
  the variable
  ${\varx}_{\vertexu, \vertexv'}$ is in~$\variablesv_\vertexv$
  by~\refeq{eq:fphp-variable-sets}
  and is hence assigned to false.
  Thus, the assignment~$\assignmenta_{\vertexu, \vertexv}$
  \rsatisfies~$\formf_{\vertexu}$, 
  and so 
  $\formf_{\vertexu}$ and~$\variablesv_{\vertexv}$ are
  \rneighbours as claimed.

  Since our constructed \fvstructure is isomorphic to the
  original graph~$\graphg$ and all neighbour relations
  are \rsatrespectful, 
  the expansion parameters of~$\graphg$ trivially carry over to
  \rsatrespectful expansion in~$\fvstructurenot$. This is just
  another way of saying that $\fvstructurenot$ is an
  \rboundaryexpandershort{\expsize}{\expfactor}{0}{\fixedconstraints}.

  To finish the proof, note that
  the \voverlapterm of $\vpartition$ is
  at most $\graphdeg$. This is so since a variable~$\varx_{\vertexu, \vertexv}$
  appears in a set~$\variablesv_{\vertexv'}$ only when
  $\vertexv' \in \neighbor{\vertexu}$. Hence, for all
  variables~$\varx_{\vertexu, \vertexv}$ it holds that they appear in
  at most
  \mbox{$\setsize{\neighbor{\vertexu}} \leq \graphdeg$ sets}
  in~$\vpartition$.
  Now the conclusion that
  any polynomial calculus refutation of
  $\GraphFPHPnot[\graphg]$ requires degree greater than
  $\expfactor \expsize / (2 \graphdeg)$
  can be read off from
  \refcor{cor:FVStructureTheoremWithoutSlack}.
  In addition, the exponential lower bound on the size of 
  a refutation of $\graphfphpnot$ when $\graphg$ is a 
  $(\expszfct \paramn, \expfactor)$-boundary expander $\graphg$ with
  constant left degree follows by 
  plugging the degree lower bound into~\refth{th:IPSPCTheorem}.
\end{proof}

It is not hard to show (again we refer to~\cite{HLW06ExpanderGraphs}
for the details) that there exist bipartite graphs with
left degree~$3$ 
which are $(\expszfct \paramn, \expfactor)$-boundary
expanders for $\expszfct, \expfactor > 0$ and hence our size lower bound 
for polynomial calculus refutations
of~$\GraphFPHPnot$ can be applied to them.
Moreover, if 
$\setsize{\vertexsetu} = n+1$
and~$\setsize{\vertexsetv} = n$,
then we can identify some bipartite graph $\graphg$
that is a good expander and
hit
$\fphpnot{n + 1}{n}
=
\GraphFPHPnot[K_{n + 1, n}]$ 
with a restriction~$\rstd_\graphg$ setting
$\varx_{\vertexu,\vertexv}$ to false for all
$(\vertexu,\vertexv) \notin \edgesete$
to obtain
$
\restrict{\fphpnot{n + 1}{n}}{\rstd_\graphg}
=
\GraphFPHPnot$.
Since restrictions can only decrease refutation size, it follows that
size lower bounds for~$\GraphFPHPnot$ apply also to
$\fphpnot{n + 1}{n}$,
yielding the second lower bound claimed in \refsec{sec:our-results}.

\begin{theorem}
  \label{th:main-result}
  Any polynomial calculus or polynomial calculus resolution
  refutation of 
  (the standard CNF encoding of) the functional pigeonhole principle
  $\fphpnot{n + 1}{n}$ requires size~$\exp(\bigomega{n})$.
\end{theorem}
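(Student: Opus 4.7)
The plan is to derive the size lower bound for $\fphpnot{n+1}{n}$ from the graph FPHP lower bound in \refth{th:FPHPDegreeLowerBound} by a restriction argument. The key observation is that $\fphpnot{n+1}{n}$ is literally $\GraphFPHPnot[K_{n+1,n}]$ for $K_{n+1,n}$ the complete bipartite graph on $n+1$ pigeons and $n$ holes, and that any bounded-degree subgraph of $K_{n+1,n}$ can be ``carved out'' from this formula by a partial assignment that sets to false all edge variables not present in the subgraph.

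First I would invoke the standard probabilistic construction (as referenced in the paper via~\cite{HLW06ExpanderGraphs}) to obtain constants $\expszfct, \expfactor > 0$ and a bipartite graph $\graphg = (\vertexsetu \disunion \vertexsetv, \edgesete)$ with $\setsize{\vertexsetu} = n+1$, $\setsize{\vertexsetv} = n$, and constant left degree (e.g.\ $3$) that is an $(\expszfct n, \expfactor)$-boundary expander in the sense of \refdef{def:bipartite-boundary-expander}. Applying \refth{th:FPHPDegreeLowerBound} to this graph then immediately gives $\sizeref[\pcrnot]{\GraphFPHPnot[\graphg]} = \exp(\bigomega{n})$.

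Next I would define the restriction $\rstd_{\graphg}$ on the variables of $\fphpnot{n+1}{n}$ that assigns $\varx_{\vertexu,\vertexv}$ to false for every pair $(\vertexu,\vertexv) \notin \edgesete$ and leaves all remaining variables unassigned. Chasing through the three axiom schemas \refeq{eq:PHPPigeonAxioms}, \refeq{eq:PHPHoleAxioms}, and \refeq{eq:PHPFunctionality}: pigeon axioms lose precisely their missing-edge literals and become the pigeon axioms over $\graphg$, while any hole or functionality axiom that mentions a variable $\varx_{\vertexu,\vertexv}$ with $(\vertexu,\vertexv)\notin\edgesete$ is satisfied and disappears. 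What remains is exactly $\GraphFPHPnot[\graphg]$. Combining this with the elementary fact recalled in \refsec{sec:preliminaries} that restrictions cannot increase refutation size then yields
\begin{equation*}
\sizeref[\pcrnot]{\fphpnot{n+1}{n}} \geq \sizeref[\pcrnot]{\restrict{\fphpnot{n+1}{n}}{\rstd_{\graphg}}} = \sizeref[\pcrnot]{\GraphFPHPnot[\graphg]} = \exp(\bigomega{n}) \eqperiod
\end{equation*}
There is no genuine obstacle at this stage, since all of the technical work has already been done in establishing \refth{th:FPHPDegreeLowerBound}; what remains is only the (entirely standard) probabilistic existence argument for the expander and a careful unpacking of what $\rstd_{\graphg}$ does to each of the three axiom schemas.
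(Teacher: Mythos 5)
Your proposal is correct and follows exactly the same route as the paper: fix a constant-degree bipartite $(\expszfct n, \expfactor)$-boundary expander $\graphg$ with $n+1$ left vertices and $n$ right vertices (existence via the standard probabilistic argument, as in~\cite{HLW06ExpanderGraphs}), apply \refth{th:FPHPDegreeLowerBound} together with \refth{th:IPSPCTheorem} to get the exponential size lower bound for $\GraphFPHPnot[\graphg]$, and then transfer it to $\fphpnot{n+1}{n}$ via the restriction $\rstd_\graphg$ that kills all non-edge variables, using the fact noted in \refsec{sec:preliminaries} that restrictions cannot increase refutation size. Your unpacking of how $\rstd_\graphg$ acts on the three axiom schemas to recover $\GraphFPHPnot[\graphg]$ is also accurate, so there is nothing to add.
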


\makeatletter{}\section{Concluding Remarks}
\label{sec:conclusion}

In this work, we extend the techniques 
developed by Alekhnovich and Razborov~\cite{AR03LowerBounds} for
proving degree lower bounds on refutations of CNF formulas
in polynomial calculus. Instead of looking
at the clause-variable incidence graph~$G(F)$ of the formula~$F$ as in
\cite{AR03LowerBounds},
we allow clustering of clauses and variables and reason in terms of
the incidence graph $G'$ defined on these clusters.  
We show that the CNF formula~$F$ requires high degree 
to be refuted
in polynomial calculus
whenever this clustering can be done in a way that ``respects the
structure'' of the formula and 
so that
the resulting graph~$G'$ has certain
expansion properties.

This 
provides us with a unified framework within which we can
reprove previously established  degree lower bounds
in~\cite{AR03LowerBounds,GL10Optimality,MN14LongProofs}. 
More importantly, this also allows us to 
obtain
a degree lower
bound on the functional pigeonhole principle defined on expander
graphs, solving an open problem from~\cite{Razborov02ProofComplexityPHP}.
It immediately follows from this that the (standard CNF encodings of)
the usual functional pigeonhole principle formulas require
exponential proof size in polynomial calculus resolution,
resolving a question on Razborov's problems
list~\cite{Razborov14webpage}
which had (quite annoyingly) remained open.
This means that we now have an essentially complete understanding of
how the different variants of pigeonhole principle formulas behave
with respect to polynomial calculus in the standard setting with
\mbox{$n+1$ pigeons} and \mbox{$n$ holes}. 
Namely, while Onto-FPHP formulas are easy, 
both FPHP formulas and Onto-PHP formulas are exponentially hard in~$n$
even when restricted to bounded-degree expanders.

A natural next step would be to see if this generalized framework can
also be used to attack other interesting formula families 
which are known to be hard for resolution but
for which there are
currently no lower bounds in polynomial calculus.
In
particular, can our framework or some modification of it prove a lower
bound for refuting the formulas encoding that a graph does not contain
an independent set of size~$k$, which were proven hard for resolution
in~\cite{BIS07IndependentSets}? 
Or
what about the formulas stating that a
graph is $k$-colorable, for which resolution lower bounds were
established in~\cite{BCCM05RandomGraph}?

Returning to the pigeonhole principle, we now understand 
how different
encodings behave in polynomial calculus when we have $n + 1$ pigeons
and $n$ holes. But what happens when we increase the number
of pigeons? For instance, do the  formulas become easier if we have $n^2$
pigeons and $n$ holes? (This is the point where lower bound techniques
based on degree break down.) What about arbitrary many pigeons? In
resolution these questions are fairly well understood, as witnessed
by the works of
Raz~\cite{Raz04Resolution} and
Razborov~\cite{Razborov01ImprovedResolutionLowerBoundsWPHP,Razborov03ResolutionLowerBoundsWFPHP,Razborov04ResolutionLowerBoundsPM}, 
but as far as we are aware they remain wide open for polynomial calculus.

Finally, we want to point out an intriguing contrast between our work
and that of Alekhnovich and Razborov.  As discussed in the
introduction, the main technical result in~\cite{AR03LowerBounds}
is that when the incidence graph of a set of polynomial equations is
expanding and the polynomials are immune, \ie have no low-degree
consequences, then refuting this set of equations is hard \wrt
polynomial calculus degree. Since clauses of width~$w$ have maximal
immunity~$w$, it follows that for a CNF formula~$\formf$ expansion of
the clause-variable incidence graph $\graphg(\formf)$ is enough to
imply hardness. A natural way of interpreting our work would be to say
that we simply extend this result to a slightly more general
constraint-variable incidence graph. On closer inspection, however,
this analogy seems to be misleading, and since we were quite surprised
by this ourselves we want to elaborate briefly on this.

For the functional pigeonhole principle, the pigeon and functional
axioms for a pigeon~$\vertexu$ taken together imply the polynomial
equation
$\sum_{\vertexv \in \neighbor{\vertexu}} \varx_{\vertexu,\vertexv} = 1$
(summing over all holes $\vertexv \in \neighbor{\vertexu}$ to which the
pigeon~$\vertexu$ can fly).  Since this is a \mbox{degree-$1$}
consequence, it shows that the pigeonhole axioms in FPHP formulas have
\emph{lowest possible immunity} modulo the set~$\fixedconstraints$ consisting
of hole and functionality axioms. Nevertheless, our lower bound proof
still works, and only needs expansion of the constraint-variable graph
although the immunity of the constraints is non-existent.

On the other hand, the constraint-variable incidence graph of a random
set of parity constraints is expanding asymptotically almost surely,
and since over fields of characteristic distinct from~$2$ parity
constraints have high immunity (see,
\eg,~\cite{Green00ComplexNumberFourier}), the techniques
in~\cite{AR03LowerBounds} can be used to prove strong degree lower
bounds in such a setting. However, it seems that our framework of
\rsatrespectful boundary expansion is inherently unable to establish
this result. The problem is that (as discussed in the footnote after
\refdef{def:F-V-boundary-expansion}) it is not possible to group
variables together in such a way as to ensure \rsatrespectful
neighbourhood relations. At a high level, it seems that the main
ingredient needed for our technique to work is that
clauses/polynomials and variables can be grouped together in such a
way that the effects of assignments to a group of variables can always
be contained in a small neighbourhood of clauses/polynomials, which
the assignments (mostly) satisify, and do not propagate beyond this
neighbourhood. Functional pigeonhole principle formulas over bounded-degree
graphs have this property, since assigning a pigeon~$\vertexu$ to a
hole~$\vertexv$ only affects the neighbouring holes of~$\vertexu$ and
the neighbouring pigeons of~$\vertexv$, respectively. There is no such
way to contain the effects locally when one starts satisfying
individual equations in an expanding set of parity constraints,
however, regardless of the characteristic of the underlying field.

In view of this, it seems that our techniques and those
of~\cite{AR03LowerBounds} are closer to being orthogonal rather than
parallel.  It would be desirable to gain a deeper understanding of
what is going on here. In particular, in comparison
to~\cite{AR03LowerBounds}, which gives clear, explicit criteria for
hardness (is the graph expanding? are the polynomials immune?), our
work is less explicit in that it says that hardness is implied by the
existence of a ``clustered clause-variable incidence graph'' with the
right properties, but gives no guidance as to if and how such a graph
might be built.  It would be very interesting to find more general
criteria of hardness that could capture both our approach and that
of~\cite{AR03LowerBounds}, and ideally provide a unified view of these
lower bound techniques.

\makeatletter{}
\ifthenelse{\boolean{conferenceversion}}
{\subparagraph*{Acknowledgements}}
{\section*{Acknowledgements}}

We are grateful to Ilario~Bonacina, Yuval~Filmus, Nicola~Galesi,
Massimo~Lauria, Alexander~Razborov, and Marc~Vinyals for numerous
discussions on proof complexity in general and polynomial calculus
degree lower bounds in particular. We want to give a special thanks to
Massimo~Lauria for several insightful comments on an earlier version
of this work, which allowed us to simplify the construction (and improve
the parameters in the results) considerably, and to Alexander~Razborov
for valuable remarks on a preliminary version of this manuscript that,
in particular, helped to shed light on the similarities with and
differences from the techniques in~\cite{AR03LowerBounds}. Finally, we
are thankful for the feedback provided by the anonymous
\mbox{\emph{CCC~'15}} referees and participants of the Dagstuhl
workshop~15171  \emph{Theory and Practice of SAT Solving}
in April 2015.

The authors were funded by the
European Research Council under the European Union's Seventh Framework
Programme \mbox{(FP7/2007--2013) /} ERC grant agreement no.~279611.
\TheauthorJN 
was also supported by
Swedish Research Council grants 
\mbox{621-2010-4797}
and
\mbox{621-2012-5645}.

\bibliography{PCdegreeFPHP_SingleFile}

\bibliographystyle{alpha}

\end{document}